\documentclass[a4paper,11pt,reqno]{amsart}
\usepackage{amsmath, amsthm, amssymb, mathrsfs, amsfonts}
\usepackage{array}
\usepackage[active]{srcltx}
\usepackage{graphicx}
\usepackage[margin=3cm]{geometry}
\usepackage[english]{babel}
\usepackage[colorlinks=true, allcolors=blue]{hyperref}
\usepackage{enumerate}
\usepackage{float}

\usepackage{enumerate}
\usepackage{enumitem}

\theoremstyle{plain}
\newtheorem{theorem}{Theorem}[section]

\newtheorem{proposition}[theorem]{Proposition}
\newtheorem{corollary}[theorem]{Corollary}

\theoremstyle{definition}
\newtheorem{definition}[theorem]{Definition}
\newtheorem{remark}[theorem]{Remark}
\newtheorem{example}[theorem]{Example}

\def\gr{\mathop{ \rm gr}\nolimits}

\def\diam{\mathop{ \rm diam}\nolimits}

\DeclareMathOperator{\wt}{wt}
\DeclareMathOperator{\supp}{supp}
\DeclareMathOperator{\Syn}{S}

\usepackage{tikz,xcolor,hyperref}

\definecolor{lime}{HTML}{A6CE39}
\DeclareRobustCommand{\orcidicon}{%
	\begin{tikzpicture}
	\draw[lime, fill=lime] (0,0)
	circle [radius=0.16]
	node[white] {{\fontfamily{qag}\selectfont \tiny ID}};
	\draw[white, fill=white] (-0.0625,0.095)
	circle [radius=0.007];
	\end{tikzpicture}
	\hspace{-2mm}
}

\foreach \x in {A, ..., Z}{%
	\expandafter\xdef\csname orcid\x\endcsname{\noexpand\href{https://orcid.org/\csname orcidauthor\x\endcsname}{\noexpand\orcidicon}}
}

\title{Operations on binary linear codes and their associated graphs}

\author[J.H. Castillo]{John H. Castillo\orcidB{}}
\address{John H. Castillo, Departamento de Matem\'aticas y Estad\'istica, Universidad de Nari\~no}
\email{jhcastillo@udenar.edu.co}

\author[L. Delgado]{Lisbeth D. Delgado-Ordoñez\orcidC{}}
\address{Lisbeth D. Delgado-Ordoñez, Departamento de Matem\'aticas, Universidad del Cauca}
\email{lddelgado@unicauca.edu.co}

\author[A. Holgu\'in-Villa]{Alexander Holgu\'in-Villa\orcidA{}}
\address{Alexander Holgu\'in-Villa, Escuela de Matem\'aticas, Universidad Industrial de Santander}
\email{aholguin@uis.edu.co}

\keywords{Coset leader, code operations, graph operations, Hasse diagram, linear code.}
\subjclass[2020]{05C76, 05C60, 05C62, 94A24}

\usepackage{subcaption}
\graphicspath{../FIGURAS/}

\begin{document}
 \noindent
  \renewcommand{\refname}{References}


\begin{abstract}

This manuscript addresses a connection between operations in coding theory and graph theory, as well as some operations in both areas. The concept of the Hasse diagram of a binary linear code, $\Gamma(C)$, is recall; this concept visually represents the cosets of a binary linear code with a partial ordering. The main objetive is to give a characterization of the graphs corresponding to: extended codes, punctured codes, and the direct sum of two codes, using the graph of the given binary linear code as a starting point. Relationships are established between the graphs of code operations and particular operations of graphs.
\end{abstract}

\maketitle

\section{Introduction}
Graphs are mathematical objects which consist of points (vertices) and  lines (edges) connecting none, some, or all possible
pairs of vertices and its study, graph theory, has been widely applied in various branches of mathematics, including coding theory, where the
graph theory plays a fundamental role by providing structural frameworks to design, analyze, and decode error-correcting codes and therefore,
the connection between them has been extensively studied by several researchers over the past fifty years; see for instance \cite{banihashemi2001tanner,etzion1999codes,halford2006codes,jungnickel1996codes,kschischang2003codes,mallik2021graph,rouayheb2012graph,tanner1981recursive,tonchev2002error}.

Let $\mathbb{F}_2$ be the field with $2$ elements. A \emph{binary $[n,k]$-linear code} is a $k$-dimensional subspace of $\mathbb{F}_2^n$.
An element of a binary linear code is called a \emph{codeword}. The \emph{Hamming distance}, $d(\boldsymbol{x},\boldsymbol{y})$, between two codewords $\boldsymbol{x}=(x_1,\ldots,x_n), \boldsymbol{y}=(y_1,\ldots,y_n)\in \mathcal{C}\subseteq\mathbb{F}_2^n$ is the number of entries where they differ; i.e, $d(\boldsymbol{x},\boldsymbol{y})=d(\boldsymbol{x},\boldsymbol{y}) =|\{i:x_i\neq y_i, ~1\leq i\leq n\}|$.

For $\boldsymbol{x}\in\mathbb{F}_2^n$, the \emph{Hamming weight} of $\boldsymbol{x}$ is $\wt(\boldsymbol{x})=d(\boldsymbol{x},\boldsymbol{0})$. The \emph{minimum distance} $d(\mathcal{C})=d$ of a linear
code $\mathcal{C}$ is defined as the minimum weight among all non-zero codewords, thus we called it a binary $[n,k,d]$-linear
code. A \emph{generator matrix} for an $[n, k]$-linear code $\mathcal{C}$ is any $k\times n$ matrix $G$ whose rows form a basis for
$\mathcal{C}$. So the code $\mathcal{C}$ can be seen as

\begin{equation}\label{generator_matrix}
\mathcal{C}=\{ \boldsymbol{x}G: \boldsymbol{x}\in \mathbb{F}_2^k\}.
\end{equation}

Also, as a binary linear code is a subspace of a vector space, there is an $(n-k)\times n$ matrix $H$, called a \emph{parity-check matrix} for the $[n, k]$-linear code $\mathcal{C}$, such that

\begin{equation}\label{check_matrix}
\mathcal{C} =\{\boldsymbol{x} \in \mathbb{F}_2^n:  H\boldsymbol{x}^T = \boldsymbol{0}\}.
\end{equation}

Since a code $\mathcal{C}$ is an elementary abelian subgroup of the additive group $\mathbb{F}_2^n$, it can be partitioned in cosets $\boldsymbol{x} + \mathcal{C}=\{\boldsymbol{x}+\boldsymbol{c}:\boldsymbol{c}\in \mathcal{C}\}$. Two vectors $\boldsymbol{x}$ and $\boldsymbol{y}$ belong to the same coset if and only if $\boldsymbol{y}-\boldsymbol{x} \in \mathcal{C}$. We denote with
$\mathfrak{cl}(\mathcal{C})=\{\boldsymbol{x} + \mathcal{C}: \boldsymbol{x}\in \mathbb{F}_2^n\}$ the set of cosets of the code
$\mathcal{C}$.
Let $H$ be a parity-check matrix for $\mathcal{C}$, that is a generator matrix for the set $\mathcal{C}^{\perp}=\{\boldsymbol{x}\in \mathbb{F}_2^n: \boldsymbol{x}\cdot \boldsymbol{c}=0, \text{ for all $\boldsymbol{c}\in \mathcal{C}$}\}$. The \emph{syndrome} of a vector $\boldsymbol{y} \in \mathbb{F}_2^n$ with respect to $H$ is the vector $S(\boldsymbol{y}) = H\boldsymbol{y}^T \in \mathbb{F}_2^{n-k}$. By \eqref{check_matrix} the syndrome of a codeword is $\boldsymbol{0}$ and in general it can be proved that two vector have the same syndrome if and only belongs to the same coset, see \cite[Theorem 1.11.5]{Huffman2003}. Thus there is a one-to-one correspondence between cosets of
$\mathcal{C}$ and syndromes. A vector in a coset with the smallest weight is called a \emph{coset leader}. The zero vector is the unique coset leader of the code $\mathcal{C}$. More generally, every coset of weight at most $t = \lfloor(d-1)/2\rfloor$ has a unique coset leader.

There is a natural partial ordering $\preceq$ on the vectors in $\mathbb{F}_2^n$, which is defined as follows: for
$\boldsymbol{x},\boldsymbol{y}\in \mathbb{F}_2^n$, $\boldsymbol{x}\preceq \boldsymbol{y}$ provided that $\supp(\boldsymbol{x}) \subseteq \supp(\boldsymbol{y})$,
where $\supp(\boldsymbol{c})$ for $\boldsymbol{c}\in \mathbb{F}_2^n$ denote the {\it support} of the vector $\boldsymbol{c}$, i.e.,
the set of coordinates where $\boldsymbol{c}$ is non-zero. If $\boldsymbol{x}\preceq \boldsymbol{y}$, we will also say that
$\boldsymbol{y}$ covers $\boldsymbol{x}$. We now use this partial order on $\mathbb{F}_2^n$ to define a partial order, also
denoted $\preceq$, on the set of cosets of a binary linear code $\mathcal{C}$ of length $n$. If $\mathcal{C}_1$ and
$\mathcal{C}_2$ are two cosets of $\mathcal{C}$, then $\mathcal{C}_1 \preceq \mathcal{C}_2$ provided there are coset
leaders $\boldsymbol{x}_1$ of $\mathcal{C}_1$ and $\boldsymbol{x}_2$ of $\mathcal{C}_2$ such that $\boldsymbol{x}_1 \preceq \boldsymbol{x}_2$.
As usual, $\mathcal{C}_1 \prec\mathcal{C}_2$ means that $\mathcal{C}_1 \preceq \mathcal{C}_2$ but $\mathcal{C}_1\neq \mathcal{C}_2$.
Under this partial ordering the set of cosets of $\mathcal{C}$ has a unique minimal element, the code $\mathcal{C}$
itself. Since, the set $\mathfrak{cl}(\mathcal{C})$ is a  partially ordered set, \emph{poset} for short, with respect to
$\preceq$, we can use a Hasse diagram to represent the partial order on $\mathfrak{cl}(\mathcal{C})$.

A \emph{graph} $\Gamma$ with $n$ vertices is a pair $(V,E)$ where $V=\{v_1, v_2,\ldots ,v_n\}$ is the set of vertices and
$E \subseteq V\times V$ is the set of edges. Given two vertices $v_i$ and $v_j$, if $v_iv_j\in E$, then $v_i$ and
$v_j$ are said to be \emph{adjacent} or that $v_i$ and $v_j$ are \emph{neighbors}. In this case, $v_i$ and $v_j$ are said to be
the end vertices of the edge $v_iv_j$. If $v_iv_j\notin E$, then $v_i$ and $v_j$ are \emph{non-adjacent}. Furthermore, if
an edge $e$ has a vertex $v_i$ as an end vertex, we say that $v_i$ is incident with $e$. The \emph{order} of $\Gamma$ is the cardinal of the set of vertices and the \emph{size} of $\Gamma$ is its number of edges.

In \cite{Delgado2024}, the authors introduced a graphical representation for binary linear codes, as follows.
\begin{definition}
Given a binary linear code $\mathcal{C}$, we denote by  $\Gamma(\mathcal{C})=(V_{\mathcal{C}},E_{\mathcal{C}})$ the  graph
constructed on this wise: the set of vertices  $V_{\mathcal{C}}=\mathfrak{cl}(\mathcal{C})$  and $\mathcal{C}_1\mathcal{C}_2\in E_{\mathcal{C}}$
if $\mathcal{C}_1\prec \mathcal{C}_2$ and $\wt(\mathcal{C}_1)=\wt(\mathcal{C}_2)-1$. If $\mathcal{C}_1\mathcal{C}_2$ is an edge of $\Gamma(\mathcal{C})$, $\mathcal{C}_1$ is called a \emph{child} of $\mathcal{C}_2$, and $\mathcal{C}_2$ is a \emph{parent} of $\mathcal{C}_1$. Actually, the graph $\Gamma(\mathcal{C})$ is known as the \emph{Hasse diagram} of the poset $(\mathfrak{cl}(\mathcal{C}),\preceq)$.  It is clear that $|V_{\mathcal{C}}|=2^{n-k}$.
\end{definition}

\begin{remark} If $\mathcal{C}_1$ and $\mathcal{C}_2$ are cosets of $\mathcal{C}$ with $\mathcal{C}_1 \prec \mathcal{C}_2$, then $\mathcal{C}_1$ is called a \emph{descendant} of $\mathcal{C}_2$, and
$\mathcal{C}_2$ is an \emph{ancestor} of $\mathcal{C}_1$. Note that the coset $\mathcal{C}_0=\boldsymbol{0}+\mathcal{C}$ is always a descendant of any coset of the linear code $\mathcal{C}$, in other words $\mathcal{C}_0$ is minimal in the poset $(\mathfrak{cl}(\mathcal{C}),\preceq)$. Always, this vertex will be represented with a red dot in the figures. A coset of $\mathcal{C}$ is called an \emph{orphan} whenever it has no parents, that is, it is a maximal element in the poset $(\mathfrak{cl}(\mathcal{C}),\preceq)$. These vertices will be denoted, in the figures, with a blue dot, see Figure \ref{fig:ext2}.
\end{remark} 
The main results about the graph $\Gamma(\mathcal{C})$ obtained \cite{Delgado2024} are summarized in the next result. 
\begin{theorem}\label{maintheoremarticle1}
Let $\mathcal{C}$ be an $[n,k]$-binary linear code. Then
\begin{enumerate}[label=\emph{(\arabic*)}]
\item \label{thmmain1}$|V_{\mathcal{C}}|=2^{n-k}$.
\item $\Gamma(\mathcal{C})$ is connected. 
\item $\Gamma(\mathcal{C})$ is a free-triangle graph.
\item $\Gamma(\mathcal{C})$ is bipartite graph. 
\item If $\Gamma(\mathcal{C})$ contains a cycle, then $\gr(\Gamma(\mathcal{C}))=4$.
\item $\diam(\Gamma(\mathcal{C}))\geq \rho(\mathcal{C})$.
\item $\Gamma(\mathcal{C})\cong K_{1,2^{n-k}-1}$ is an star graph if and only if $\rho(\mathcal{C})=1$.
\end{enumerate}
\end{theorem}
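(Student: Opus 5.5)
The plan is to prove the seven items in order, using two basic facts throughout. First, $\wt$ of a coset is the weight of any of its leaders. Second, there is a descent property: if $\boldsymbol{x}$ is a coset leader and $\boldsymbol{y}\preceq\boldsymbol{x}$, then $\boldsymbol{y}$ is a coset leader of its own coset. This holds because if $\boldsymbol{y}+\boldsymbol{c}$ had smaller weight, then $\boldsymbol{x}+\boldsymbol{c}=(\boldsymbol{x}-\boldsymbol{y})+(\boldsymbol{y}+\boldsymbol{c})$ would have weight at most $\wt(\boldsymbol{x})-\wt(\boldsymbol{y})+\wt(\boldsymbol{y}+\boldsymbol{c})<\wt(\boldsymbol{x})$, which is impossible.

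Item (1) is the count of cosets, $|\mathbb{F}_2^n/\mathcal{C}|=2^{n-k}$. For (2), take a coset with leader $\boldsymbol{x}$ of weight $w$ and remove one coordinate of its support at a time. By the descent property, each intermediate vector is a leader whose weight drops by exactly one. Consecutive cosets are therefore related by $\prec$ with weights differing by $1$, so they are adjacent. This gives a path to $\mathcal{C}_0$, and connectivity follows.

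For (3) and (4), note that every edge joins cosets whose weights differ by exactly $1$. Hence the parity of $\wt$ gives a bipartition: put cosets of even weight on one side and cosets of odd weight on the other. Bipartiteness immediately rules out odd cycles, and in particular triangles.

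For (5), a bipartite graph has no odd cycles, so its girth is at least $4$, and I only need to exhibit a $4$-cycle whenever a cycle exists. I will first handle the case $\rho(\mathcal{C})\ge 2$. Then there is a leader $\boldsymbol{x}$ of weight $2$ with support $\{i,j\}$. By descent, $\boldsymbol{e}_i$ and $\boldsymbol{e}_j$ are leaders of weight $1$. I must check that their cosets are distinct: if they were equal, then $\boldsymbol{e}_i+\boldsymbol{e}_j\in\mathcal{C}$, which contradicts $\boldsymbol{x}$ being a nonzero-coset leader. This gives the $4$-cycle $\mathcal{C}_0,\ \boldsymbol{e}_i+\mathcal{C},\ \boldsymbol{x}+\mathcal{C},\ \boldsymbol{e}_j+\mathcal{C}$. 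If instead $\rho(\mathcal{C})\le 1$, every nonzero coset has weight $1$. Then every edge is incident to $\mathcal{C}_0$, the graph is a star, and there is no cycle. The step I expect to be the main obstacle is the converse situation: showing that whenever a cycle exists a weight-$2$ coset must exist. I handle it with exactly this contrapositive, that having no coset of weight $2$ means the graph is a star.

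For (6), let $\boldsymbol{x}$ be a leader of a coset of maximal weight $\rho(\mathcal{C})$. Any path from $\mathcal{C}_0$ to this coset changes the weight by $1$ at each step, so it has length at least $\rho(\mathcal{C})$. Hence $d(\mathcal{C}_0,\boldsymbol{x}+\mathcal{C})\ge\rho(\mathcal{C})$, and the diameter is at least $\rho(\mathcal{C})$. For (7), if $\rho(\mathcal{C})=1$, the star structure above gives $K_{1,2^{n-k}-1}$ by (1). Conversely, if $\rho(\mathcal{C})\ge 2$, the $4$-cycle from (5) exists, so the graph is not a star; the degenerate case $\rho(\mathcal{C})=0$, that is $\mathcal{C}=\mathbb{F}_2^n$, is excluded or treated as the trivial $K_{1,0}$.
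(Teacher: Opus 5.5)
The paper does not prove this theorem: it only restates results from \cite{Delgado2024}, so there is no in-text argument to compare yours against. Judged on its own, your proof is correct and self-contained.

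Everything rests on two observations:
\begin{enumerate}
\item the descent property: if $\boldsymbol{x}$ is a coset leader and $\boldsymbol{y}\preceq\boldsymbol{x}$, then $\boldsymbol{y}$ is a leader of its own coset;
\item every edge of $\Gamma(\mathcal{C})$ changes the coset weight by exactly one.
\end{enumerate}
The first gives the leader chains in (2) and the $4$-cycle in (5). The second gives the parity bipartition for (3)--(4) and the lower bound in (6). Together they actually prove more than is stated: the distance from $\mathcal{C}_0$ to any coset equals its weight, and $\Gamma(\mathcal{C})$ contains a cycle if and only if $\rho(\mathcal{C})\ge 2$.

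Two points should be tightened.

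First, in (5), the existence of a leader of weight exactly $2$ when $\rho(\mathcal{C})\ge 2$ is not automatic from the definition of $\rho$. It follows by applying descent to a leader of weight $\rho(\mathcal{C})$, and you should say so.

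Second, in (7), the degenerate case $\rho(\mathcal{C})=0$, i.e.\ $\mathcal{C}=\mathbb{F}_2^n$, cannot be ``treated as the trivial $K_{1,0}$''. In that case $\Gamma(\mathcal{C})$ is a single vertex, which is exactly $K_{1,2^{n-k}-1}=K_{1,0}$. Yet $\rho(\mathcal{C})=0\neq 1$, so the ``only if'' direction of (7) fails as literally stated. The statement needs the hypothesis $k<n$, or the convention that $K_{1,0}$ is not counted as a star. State that explicitly rather than offering the two options as interchangeable.
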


In this paper, we study three operations of binary linear codes: puncturing, extending and direct sum. The main interest of this manuscript is to establish the connections between the Hasse diagram of the original code(s) and the graphical representation of the new code.

\section{Coding operations}\label{sec3}
In the classical coding theory, it is a well-known fact that there exist several techniques to obtain new codes from old, including puncturing,
extending and direct sum. In this section, we study these three operations on binary linear codes and examine their effect on the graphical representation of the newly obtained codes. The main objective is to investigate the relationship between the graph of the original code and the graph of the resulting new code. 
\subsection{Puncturing a code}
 \begin{definition}[Puncturing a code]
Let $\mathcal{C}$ be an $[n,k,d]$-binary linear code, \emph{puncturing $\mathcal{C}$} is the operation in which one or more coordinate positions are removed from the codewords. For $1\leq i\leq n$, we denote with $\mathcal{C}^{*_i}$, puncturing the code $\mathcal{C}$ in the $i$-th coordinate. 

\end{definition}

\begin{example}\quad \label{ejemploperforado}
\begin{enumerate}[label=(\alph*)]
\item \label{ejemploper1}
  Consider the $[3,1,3]$-binary code $\mathcal{C}=\{000,111\}$. Then $\mathcal{C}^{*_1}=\{ 00,11\}$ is a $[2,1,2]$-binary linear code.
\item
\label{ejemploper2}
  Let $\mathcal{C}=\{0000,1100,0011,1111\}$, note that it is a $[4,2,2]$-linear code. Thus the code $\mathcal{C}^{*_2}=\{ 000,100,011,111\}$ is a $[3,2,1]$-binary linear code.

\item
\label{ejemploper3}
 For $\mathcal{C}=\{00000,11110,01100,10010\}$, we know that $\mathcal{C}$ has parameters $[5,2,2]$, but the code $\mathcal{C}^{*_5}=\{ 0000,1111,0110,1001\}$ is  a $[4,2,2]$-binary linear code. Observe, that in this case $d(\mathcal{C})=d(\mathcal{C}^{*_5})$.
\end{enumerate}
\end{example}

The following known statement explain the results obtained in Example \ref{ejemploperforado}, see \cite[Theorem 1.5.1]{Huffman2003}.

\begin{theorem}\label{param_perforado}
Let $\mathcal{C}$ be an $[n,k,d]$-binary linear code and $1\leq i\leq n$. Then 
\begin{enumerate}[label=\emph{(\arabic*)}]
\item \label{thm2.1a} If $d>1$, $\mathcal{C}^{*_i}$ is an $[n-1,k,d^{*}]$-code, where $d^{*}=d-1$ if $\mathcal{C}$ has a minimum weight codeword with a nonzero $i$-th coordinate and $d^{*}=d$ otherwise.

\item \label{thm2.1b}  If $d=1$, $\mathcal{C}^{*_i}$ is an $[n-1, k, 1]$-code if $\mathcal{C}$ has no codeword of weight 1 whose
nonzero entry is in coordinate $i$; otherwise, if $k>1$, $\mathcal{C}^{*_i}$ is an $[n-1, k-1, d^*]$-code
with $d^* \geq 1$.
\end{enumerate}
\end{theorem}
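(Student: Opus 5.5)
The natural tool is the puncturing map $\pi_i:\mathbb{F}_2^n\to\mathbb{F}_2^{n-1}$ that deletes the $i$-th coordinate. It is linear and surjective, and $\mathcal{C}^{*_i}=\pi_i(\mathcal{C})$, so $\mathcal{C}^{*_i}$ is a binary linear code of length $n-1$. Every parameter in the statement follows from two facts. First, $\dim \mathcal{C}^{*_i}=k-\dim(\ker\pi_i\cap\mathcal{C})$. Second, for each $\boldsymbol{c}\in\mathcal{C}$,
\[
\wt(\pi_i(\boldsymbol{c}))=\wt(\boldsymbol{c})-c_i .
\]
Here $\ker\pi_i=\{\boldsymbol{0},\boldsymbol{e}_i\}$, where $\boldsymbol{e}_i$ is the $i$-th standard basis vector. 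Hence $\ker\pi_i\cap\mathcal{C}$ is nontrivial exactly when $\boldsymbol{e}_i\in\mathcal{C}$, that is, when $\mathcal{C}$ has a weight-one codeword supported on coordinate $i$. In that case the dimension drops by exactly one; otherwise it stays equal to $k$.

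For part \ref{thm2.1a}, assume $d>1$. Then $\boldsymbol{e}_i\notin\mathcal{C}$, so $\pi_i$ is injective on $\mathcal{C}$ and $\dim\mathcal{C}^{*_i}=k$. The nonzero codewords of $\mathcal{C}^{*_i}$ are exactly the vectors $\pi_i(\boldsymbol{c})$ with $\boldsymbol{c}\neq\boldsymbol{0}$, and their weights are $\wt(\boldsymbol{c})-c_i\geq d-1$.
\begin{itemize}
\item If some minimum-weight codeword $\boldsymbol{c}$ has $c_i=1$, then $\pi_i(\boldsymbol{c})$ has weight $d-1$. Since $d-1\geq1$ this vector is nonzero, so $d^*=d-1$.
\item Otherwise, let $\boldsymbol{c}\ne\boldsymbol{0}$. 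If $\wt(\boldsymbol{c})=d$, then $c_i=0$ and $\wt(\pi_i(\boldsymbol{c}))=d$. If $\wt(\boldsymbol{c})\geq d+1$, then $\wt(\pi_i(\boldsymbol{c}))\geq d$. The minimum is attained by any minimum-weight codeword, so $d^*=d$.
\end{itemize}

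For part \ref{thm2.1b}, assume $d=1$. If $\boldsymbol{e}_i\notin\mathcal{C}$, then $\pi_i$ is again injective on $\mathcal{C}$, so the dimension is $k$. I would then show $d^*=1$.
\begin{itemize}
\item The code $\mathcal{C}$ has some weight-one codeword $\boldsymbol{e}_j$. Since $\boldsymbol{e}_i\notin\mathcal{C}$, we have $j\neq i$.
\item Therefore $\pi_i(\boldsymbol{e}_j)$ is a nonzero vector of weight one, and $d^*=1$.
\end{itemize}
If instead $\boldsymbol{e}_i\in\mathcal{C}$, then the kernel of $\pi_i|_{\mathcal{C}}$ is $\{\boldsymbol{0},\boldsymbol{e}_i\}$, so the dimension is $k-1$. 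When $k>1$ this is positive, so $\mathcal{C}^{*_i}$ is a nonzero code. Its minimum distance $d^*$ is then trivially at least $1$.

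There is no real obstacle in this argument. The one point needing care is the weight-one case in part \ref{thm2.1b}: one must check that the weight-one codeword witnessing $d=1$ is not the one deleted by puncturing. Similarly, in part \ref{thm2.1a} one needs $d-1\geq 1$ to guarantee that the punctured minimum-weight word is nonzero. Otherwise this is the standard argument of \cite[Theorem 1.5.1]{Huffman2003}, and I would follow it.
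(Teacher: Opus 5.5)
Your proof is correct and takes the same approach as the paper: the paper gives no proof of its own and defers to \cite[Theorem 1.5.1]{Huffman2003}, whose standard argument is yours. That argument uses the kernel $\{\boldsymbol{0},\boldsymbol{e}_i\}$ of the deletion map $\pi_i$ for the dimension, and the identity $\wt(\pi_i(\boldsymbol{c}))=\wt(\boldsymbol{c})-c_i$ for the minimum distance; you also handle correctly the two points you flag ($d-1\geq 1$ in part (1), $j\neq i$ in part (2)).
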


\begin{remark} Suppose that $\mathcal{C}$ is an $[n, k, 1]$-binary linear code and $\boldsymbol{e}_i\in\mathcal{C}$.
\begin{enumerate}[label=(\alph*)]
\item Assume that $\mathcal{C}=\{\boldsymbol{0},\boldsymbol{e}_i\}\subset \mathbb{F}_2^n$. Then  $\mathcal{C}^{*_i}=\mathcal{O}\subset\mathbb{F}_2^{n-1}$. Therefore, by \cite[Example 2.2(2)]{Delgado2024}, $\Gamma(\mathcal{C}^{*_i})\cong H(n-1,2)$ is the Hamming graph with $2^{n-1}$ vertices. Since, the dimension of $\mathcal{C}$ is 1, then $\Gamma(\mathcal{C})$ has also $2^{n-1}$ vertices. Actually, it can be proved that $\varphi(\boldsymbol{u}+\mathcal{C})=\boldsymbol{u}^{*_i}$ is a graph homomorphism and thus $\Gamma(\mathcal{C})\cong \Gamma(\mathcal{C}^{*_i})\cong H(n-1,2)$.

\item Now we consider when $k>1$. By, Theorem \ref{param_perforado}\emph{\ref{thm2.1b}}, $\mathcal{C}^{*_i}$ has parameters $[n-1, k-1, d^*]$, with $d^*\geq 1$. Then $\Gamma(\mathcal{C})$ and $\Gamma(\mathcal{C}^{*_i})$ have both $2^{n-k}$ vertices. Furthermore, if $\boldsymbol{u}+\mathcal{C}\in V_{\mathcal{C}}$, then $\boldsymbol{u}^{*_i}+\mathcal{C}^{*_i}\in V_{\mathcal{C}^{*_i}}$. So, we can define the map $\phi(\boldsymbol{u}+\mathcal{C})=\boldsymbol{u}^{*_i}+\mathcal{C}^{*_i}$. Moreover, it can be verified that each edge of $\Gamma(\mathcal{C})$  is sent to an edge of $\Gamma(\mathcal{C}^{*_i})$. Therefore, $\Gamma(\mathcal{C})\cong \Gamma(\mathcal{C}^{*_i})$.
\end{enumerate}
\end{remark}
From the rest of this section, we study the case when $\mathcal{C}$ is an $[n,k,d]$-binary linear code with either $d>1$ or $d=1$ and $\boldsymbol{e}_i\not\in\mathcal{C}$.
\begin{proposition}\label{coset_representatives}
    Let $\mathcal{C}$ be a binary linear code, $1\leq i\leq n$ and $\boldsymbol{x}+\mathcal{C}$ and $\boldsymbol{y}+\mathcal{C}$ cosets of $\mathcal{C}$. If $\boldsymbol{x}+\mathcal{C}=\boldsymbol{y}+\mathcal{C}$, then $\boldsymbol{x}^{*_i}+\mathcal{C}^{*_i}=\boldsymbol{y}^{*_i}+\mathcal{C}^{*_i}$. Moreover, if $\boldsymbol{x}$ is a coset leader of $\boldsymbol{x}+\mathcal{C}$, then there exists a coset leader $\boldsymbol{y} \in \boldsymbol{x}+\mathcal{C}$ such that $\boldsymbol{y}^{*_i}$ is a coset leader of $\boldsymbol{x}^{*_i}+\mathcal{C}^{*_i}$.   
\end{proposition}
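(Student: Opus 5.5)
The first assertion follows from linearity of puncturing. The map $\pi_i:\mathbb{F}_2^n\to\mathbb{F}_2^{n-1}$, $\boldsymbol{u}\mapsto\boldsymbol{u}^{*_i}$, is $\mathbb{F}_2$-linear and $\mathcal{C}^{*_i}=\pi_i(\mathcal{C})$. If $\boldsymbol{x}+\mathcal{C}=\boldsymbol{y}+\mathcal{C}$, then $\boldsymbol{y}-\boldsymbol{x}\in\mathcal{C}$. Hence $\boldsymbol{y}^{*_i}-\boldsymbol{x}^{*_i}=(\boldsymbol{y}-\boldsymbol{x})^{*_i}\in\mathcal{C}^{*_i}$, which gives $\boldsymbol{x}^{*_i}+\mathcal{C}^{*_i}=\boldsymbol{y}^{*_i}+\mathcal{C}^{*_i}$. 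For the same reason, every vector of $\boldsymbol{x}^{*_i}+\mathcal{C}^{*_i}$ has the form $(\boldsymbol{x}+\boldsymbol{c})^{*_i}$ with $\boldsymbol{c}\in\mathcal{C}$. So this coset is exactly $\pi_i(\boldsymbol{x}+\mathcal{C})$.

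For the second assertion I will argue by weight counting. Let $m=\wt(\boldsymbol{x})$, the weight of the coset $\boldsymbol{x}+\mathcal{C}$. The key inequality is $\wt(\boldsymbol{u})-1\le\wt(\boldsymbol{u}^{*_i})\le\wt(\boldsymbol{u})$ for every $\boldsymbol{u}\in\mathbb{F}_2^n$, with equality on the left exactly when $u_i=1$. Choose a coset leader $\boldsymbol{z}$ of $\boldsymbol{x}^{*_i}+\mathcal{C}^{*_i}$. By the first paragraph, $\boldsymbol{z}=\boldsymbol{w}^{*_i}$ for some $\boldsymbol{w}\in\boldsymbol{x}+\mathcal{C}$.

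We have $\wt(\boldsymbol{z})\le\wt(\boldsymbol{x}^{*_i})\le m$. Also every $\boldsymbol{u}\in\boldsymbol{x}+\mathcal{C}$ satisfies $\wt(\boldsymbol{u}^{*_i})\ge\wt(\boldsymbol{u})-1\ge m-1$. Hence $\wt(\boldsymbol{z})\in\{m-1,m\}$, and I split into cases.

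\emph{Case $\wt(\boldsymbol{z})=m-1$.} Then $\wt(\boldsymbol{w})\le\wt(\boldsymbol{z})+1=m$. Minimality of $m$ forces $\wt(\boldsymbol{w})=m$, so $\boldsymbol{w}$ is a coset leader of $\boldsymbol{x}+\mathcal{C}$, and we take $\boldsymbol{y}=\boldsymbol{w}$.

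\emph{Case $\wt(\boldsymbol{z})=m$.} Then $m=\wt(\boldsymbol{z})\le\wt(\boldsymbol{x}^{*_i})\le m$. So $\boldsymbol{x}^{*_i}$ already has minimum weight in its coset, and we may take $\boldsymbol{y}=\boldsymbol{x}$ itself.

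In both cases $\boldsymbol{y}$ is a coset leader of $\boldsymbol{x}+\mathcal{C}$ and $\boldsymbol{y}^{*_i}$ is a coset leader of $\boldsymbol{x}^{*_i}+\mathcal{C}^{*_i}=\boldsymbol{y}^{*_i}+\mathcal{C}^{*_i}$.

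The only step needing care is identifying the punctured coset with $\pi_i(\boldsymbol{x}+\mathcal{C})$, so that $\boldsymbol{z}$ lifts to some $\boldsymbol{w}$ in the original coset. Once that is in place, the two-sided weight bound reduces everything to the dichotomy above. No hypothesis on $d$ or on $\boldsymbol{e}_i\notin\mathcal{C}$ is used; those restrictions only matter later, for the vertex count $|V_{\mathcal{C}^{*_i}}|$.
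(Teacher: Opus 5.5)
Your proof is correct, but it is organized differently from the paper's. The paper splits on whether $\boldsymbol{x}$ is the unique coset leader, in which case $\boldsymbol{x}^{*_i}$ works directly. Otherwise some other leader $\boldsymbol{x}+\boldsymbol{c}$ exists, and the paper subdivides according to whether $i$ lies in $\supp(\boldsymbol{x}+\boldsymbol{c})$ and $\supp(\boldsymbol{x})$. It then compares $\wt(\boldsymbol{x}^{*_i})$ with $\wt((\boldsymbol{x}+\boldsymbol{c})^{*_i})$ pairwise. You instead use $\boldsymbol{x}^{*_i}+\mathcal{C}^{*_i}=\pi_i(\boldsymbol{x}+\mathcal{C})$ to trap the minimum weight of the punctured coset in $\{m-1,m\}$, and then lift a minimum-weight vector back to $\boldsymbol{x}+\mathcal{C}$.

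What your route buys is that the comparison really is against every vector of the punctured coset. In the paper's subcase 2.a), the inequality $\wt(\boldsymbol{x}^{*_i})\le\wt((\boldsymbol{x}+\boldsymbol{c})^{*_i})$ for one particular leader does not rule out a third leader with a $1$ in coordinate $i$. For example, take the even-weight code of length $3$, the coset $100+\mathcal{C}$ and $i=3$. Taking $\boldsymbol{x}=100$ and $\boldsymbol{x}+\boldsymbol{c}=010$ lands in subcase 2.a). Yet $10$ is not a leader of $\boldsymbol{x}^{*_3}+\mathcal{C}^{*_3}=\mathbb{F}_2^2$; the actual leader $00$ comes from $001$. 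Your dichotomy settles exactly this situation.

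As a by-product you also get a clean description. The punctured coset has weight $m-1$ precisely when some leader of $\boldsymbol{x}+\mathcal{C}$ has $i$ in its support. Otherwise it has weight $m$, and then every leader of $\boldsymbol{x}+\mathcal{C}$, including $\boldsymbol{x}$, punctures to a leader. The paper's case split is more explicit about which vector to take in each configuration, but it needs this global observation to be complete.
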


\begin{proof}
The first assertion follows from the fact that $(\boldsymbol{x}+\boldsymbol{y})^{*_i}=\boldsymbol{x}^{*_i}+\boldsymbol{y}^{*_i}$. In general, it always holds that
        $$\wt(\boldsymbol{y}^{*_i})\leq \wt(\boldsymbol{y}),\ \text{ for all $\boldsymbol{y}\in \mathbb{F}_2^n$}.$$
Now, since $\boldsymbol{x}$ is a coset leader of $\boldsymbol{x}+\mathcal{C}$, we know that for any $\boldsymbol{c}\in \mathcal{C}$,
        $$\wt(\boldsymbol{x})\leq \wt(\boldsymbol{x}+\boldsymbol{c}).$$
        
Next, we consider the following cases: 
        \begin{enumerate}[label=Case \arabic*., align=left] 
        \item Assume that $\wt(\boldsymbol{x})<\wt(\boldsymbol{x}+\boldsymbol{c})$ for all $\boldsymbol{c}\in \mathcal{C}$. Then $\wt(\boldsymbol{x}^{*_i})\leq \wt(\boldsymbol{x})\leq \wt(\boldsymbol{x}+\boldsymbol{c})-1\leq \wt((\boldsymbol{x}+\boldsymbol{c})^{*_i})$. Therefore, $\boldsymbol{x}^{*_i}$ is a coset leader of the coset $\boldsymbol{x}^{*_i}+\mathcal{C}^{*_i}$.

        \item Suppose that $\wt(\boldsymbol{x})=\wt(\boldsymbol{x}+\boldsymbol{c})$ for some $\boldsymbol{c}\in \mathcal{C}$. This implies that $\boldsymbol{x}+\boldsymbol{c}$ is also a coset leader of $\boldsymbol{x}+\mathcal{C}$. 

        \begin{enumerate}[label=2.\alph*)]
            \item If $i \not\in \supp(\boldsymbol{x}+\boldsymbol{c})$, then $\wt(\boldsymbol{x}+\boldsymbol{c})=\wt((\boldsymbol{x}+\boldsymbol{c})^{*_i})$. Thus, $$\wt(\boldsymbol{x}^{*_i})\leq \wt(\boldsymbol{x})=\wt(\boldsymbol{x}+\boldsymbol{c})=\wt((\boldsymbol{x}+\boldsymbol{c})^{*_i}).$$
            \item Suppose that $i\in \supp(\boldsymbol{x}+\boldsymbol{c})$; hence $\wt((\boldsymbol{x}+\boldsymbol{c})^{*_i})=\wt(\boldsymbol{x}+\boldsymbol{c})-1$. 
            \begin{enumerate}[label=(\normalfont\roman*)]
                \item Assume that $i\in \supp(\boldsymbol{x})$. Then,
                $$\wt(\boldsymbol{x}^{*_i})=\wt(\boldsymbol{x})-1=\wt(\boldsymbol{x}+\boldsymbol{c})-1=\wt((\boldsymbol{x}+\boldsymbol{c})^{*_i}),$$
                \item Finally, suppose that $i\not\in \supp(\boldsymbol{x})$. Then $i\in \supp(\boldsymbol{c})$ and 
$$\wt((\boldsymbol{x}+\boldsymbol{c})^{*_i})=\wt(\boldsymbol{x}+\boldsymbol{c})-1=\wt(\boldsymbol{x}^{*_i})-1<\wt(\boldsymbol{x}^{*_i}).$$

\end{enumerate} 
        \end{enumerate}
        \end{enumerate}
Consequently, in any case it follows that $(\boldsymbol{x}+\boldsymbol{c})^{*_i}$ (for some $\boldsymbol{c}\in \mathcal{C}$) is a coset leader of $\boldsymbol{x}^{*_i}+\mathcal{C}^{*_i}$. \qedhere
\end{proof}

\begin{proposition}\label{prop:clases_perforadas}
Let $\mathcal{C}$ be an $[n,k,d]$-binary linear code and $1\leq i\leq n$ with either $d>1$ or $d=1$ and $\boldsymbol{e}_i\not\in\mathcal{C}$. Then, for each coset $\boldsymbol{x}+\mathcal{C}$, there exists a coset $\boldsymbol{z}+\mathcal{C}$ such that $\boldsymbol{x}+\mathcal{C}\neq \boldsymbol{z}+\mathcal{C}$ and $\boldsymbol{x}^{*_i}+\mathcal{C}^{*_i}=\boldsymbol{z}^{*_i}+\mathcal{C}^{*_i}$.
\end{proposition}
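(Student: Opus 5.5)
The natural candidate is $\boldsymbol{z}=\boldsymbol{x}+\boldsymbol{e}_i$, where $\boldsymbol{e}_i$ is the $i$-th standard basis vector of $\mathbb{F}_2^n$. Puncturing at coordinate $i$ deletes exactly the coordinate where $\boldsymbol{e}_i$ is nonzero, so $\boldsymbol{e}_i^{*_i}=\boldsymbol{0}\in\mathbb{F}_2^{n-1}$. Since puncturing is linear (as used in the proof of Proposition \ref{coset_representatives}),
$$\boldsymbol{z}^{*_i}=\boldsymbol{x}^{*_i}+\boldsymbol{e}_i^{*_i}=\boldsymbol{x}^{*_i}.$$
Hence $\boldsymbol{z}^{*_i}+\mathcal{C}^{*_i}=\boldsymbol{x}^{*_i}+\mathcal{C}^{*_i}$ holds trivially.

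It remains to show $\boldsymbol{x}+\mathcal{C}\neq\boldsymbol{z}+\mathcal{C}$. Two vectors lie in the same coset exactly when their difference is a codeword, and here $\boldsymbol{z}-\boldsymbol{x}=\boldsymbol{e}_i$. So the cosets differ if and only if $\boldsymbol{e}_i\notin\mathcal{C}$. I will verify this under each hypothesis.
\begin{enumerate}[label=(\roman*)]
\item If $d>1$, every nonzero codeword has weight at least $2$. Since $\wt(\boldsymbol{e}_i)=1$, we get $\boldsymbol{e}_i\notin\mathcal{C}$.
\item If $d=1$, the hypothesis states directly that $\boldsymbol{e}_i\notin\mathcal{C}$.
\end{enumerate}
In both cases $\boldsymbol{x}+\mathcal{C}\neq\boldsymbol{z}+\mathcal{C}$, which completes the argument.

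There is no real obstacle. The only care needed is to record that puncturing is a linear map $\mathbb{F}_2^n\to\mathbb{F}_2^{n-1}$ whose kernel is $\{\boldsymbol{0},\boldsymbol{e}_i\}$. As a consequence, the induced map on cosets $\boldsymbol{x}+\mathcal{C}\mapsto\boldsymbol{x}^{*_i}+\mathcal{C}^{*_i}$ (well defined by Proposition \ref{coset_representatives}) is exactly two-to-one whenever $\boldsymbol{e}_i\notin\mathcal{C}$. This matches the vertex count $2^{n-k}$ versus $2^{n-1-k}$ obtained from Theorem \ref{param_perforado}\ref{thm2.1a} and Theorem \ref{maintheoremarticle1}\ref{thmmain1}.
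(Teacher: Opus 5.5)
Your proof is correct and follows the paper's argument: you take $\boldsymbol{z}=\boldsymbol{x}+\boldsymbol{e}_i$, note that $\boldsymbol{e}_i^{*_i}=\boldsymbol{0}$, and use $\boldsymbol{e}_i\notin\mathcal{C}$ (automatic when $d>1$) to show the two cosets are distinct. Your version is slightly cleaner, since it does not require $\boldsymbol{x}$ to be a coset leader; the paper assumes this but never uses it.
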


\begin{proof}
    Suppose that $\boldsymbol{x}$ is a coset leader of the coset $\boldsymbol{x}+\mathcal{C}$. Since $d>1$ or $d=1$ and $\boldsymbol{e}_i\not\in\mathcal{C}$, the coset $(\boldsymbol{x}+\boldsymbol{e}_i)+\mathcal{C}$ is distinct from $\boldsymbol{x}+\mathcal{C}$. Moreover, it holds that \[\boldsymbol{x}^{*_i}+\mathcal{C}^{*_i}=(\boldsymbol{x}+\boldsymbol{e}_i)^{*_i}+\mathcal{C}^{*_i}.\qedhere\]
\end{proof}

\begin{remark}\label{remark2}
    By propositions \ref{coset_representatives} and \ref{prop:clases_perforadas}, the set of cosets of $\mathcal{C}^{*_i}$ is given by $$\mathfrak{cl}(\mathcal{C}^{*_i})=\left\{ \boldsymbol{x}^{*_i}+\mathcal{C}^{*_i}:i\in \supp(\boldsymbol{x}) \text{ and } \boldsymbol{x}+\mathcal{C}\in V_{\mathcal{C}}
        \right\}.$$

Consequently, for each $\boldsymbol{x}$ we have that two cosets of $\mathcal{C}$ ($\boldsymbol{x}+\mathcal{C}$ and $\boldsymbol{x}+\boldsymbol{e}_i+\mathcal{C}$) collapse into a single coset of $\mathcal{C}^{*_i}$. Moreover, by Theorem \ref{maintheoremarticle1}\emph{\ref{thmmain1}} and Theorem \ref{param_perforado}, we get that $|V_{\mathcal{C}^{*_i}}|=\dfrac{2^{n-k}}{2}=\dfrac{|V_{\mathcal{C}}|}{2}.$ 
\end{remark}
Now, in order to provide a characterization of $\Gamma(\mathcal{C}^{*_i})$, we need to define the following graph operation.
\begin{definition}
The contraction of a pair of vertices $v_i$ and $v_j$ in a graph $\Gamma$ is the operation that produces a graph in which the two vertices $v_i$ and $v_j$ are replaced by a single vertex $v$, such that this new vertex is adjacent to the vertices to which $v_i$ and $v_j$ were originally adjacent.
\end{definition}

This concept was studied by S. Pemmaraju and S. Skiena in \cite{pemmaraju2003computational}. We will use $\Gamma^{\circledast}/[v_i,v_j]$ to denote the contraction of vertices $v_i$ and $v_j$ in the graph $\Gamma$. The graph resulting from this operation is a simple graph, i.e., without parallel edges.  

\begin{example}\label{ejem:contraccion}
Let $\Gamma$ be the graph shown in Figure \ref{fig:antes_contraccion}. Performing the contraction on vertices $7$ and $9$ yields the graph $\Gamma^{\circledast}/[7,9]$, as shown in Figure \ref{fig:despues_contraccion}.
\begin{figure}[H]
    \centering
 \begin{subfigure}[b]{0.49\textwidth}
		\centering
		\includegraphics[scale=0.45]{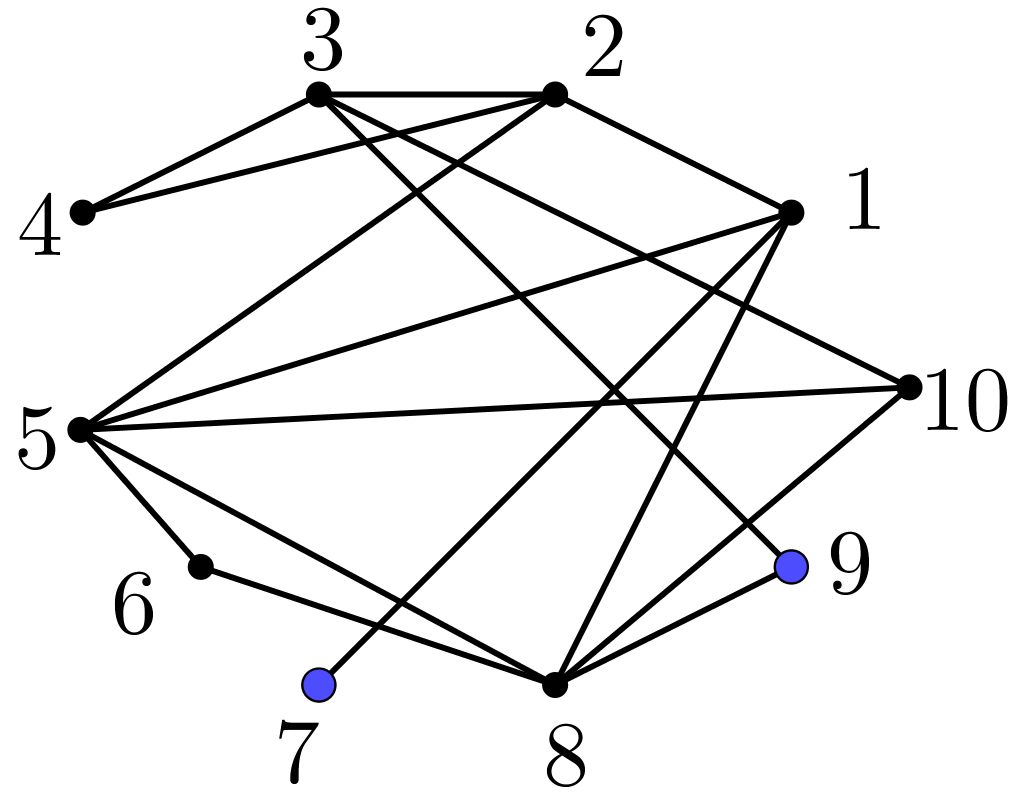}
		\caption{Graph $\Gamma$.}
		\label{fig:antes_contraccion}
	\end{subfigure}
 \begin{subfigure}[b]{0.49\textwidth}
		\centering
		\includegraphics[scale=0.45]{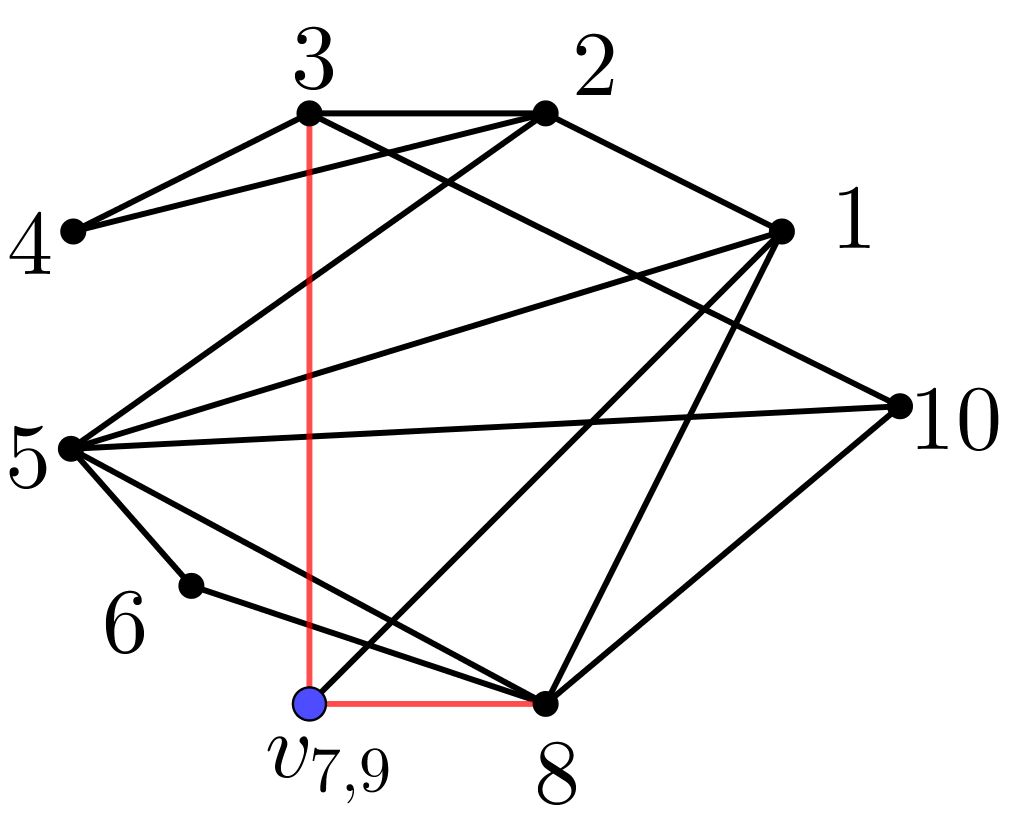}
		\caption{Graph $\Gamma^{\circledast}/[7,9]$.}
		\label{fig:despues_contraccion}
	\end{subfigure}
    \caption{Example of graph contraction.}
    \label{fig:contraccion}
\end{figure}
\end{example}
In general, given a graph $\Gamma$ and a set of vertex pairs $S=\{[v_{i_1},v_{j_1}],\ldots, [v_{i_k},v_{j_k}]\}$, we denote by $\Gamma^{\circledast}/S$ the graph resulting from successively applying vertex contraction to $\Gamma$ for each pair $[v_{i_l},v_{j_l}]$ in $S$.
\begin{example}
  For the graph $\Gamma$ in Example \ref{ejem:contraccion}, the vertex contraction $\Gamma^{\circledast}/\{[7,9],[1,3],[4,5]\}$ results in the graph presented in Figure \ref{fig:contraccion_3_parejas}.
  \begin{figure}[H]
      \centering
      \includegraphics[scale=0.45]{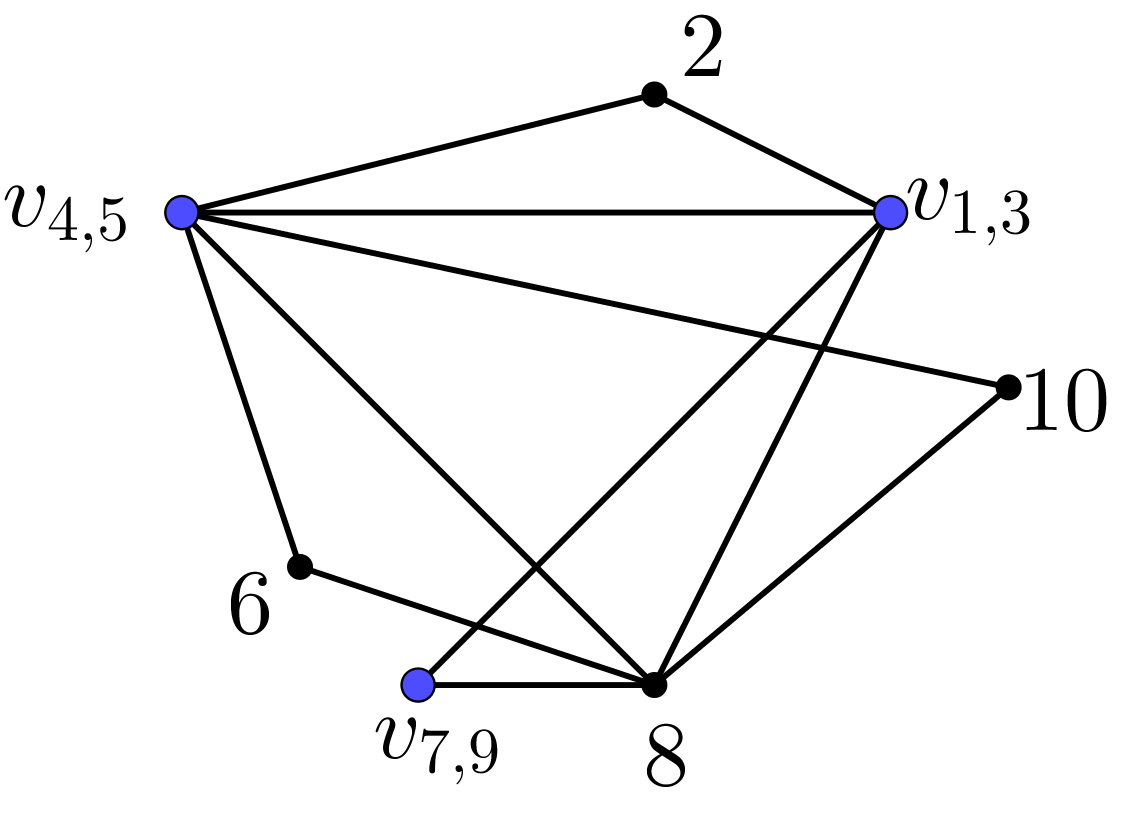}
      \caption{Graph $\Gamma^{\circledast}/\{[7,9],[1,3],[4,5]\}$.}
      \label{fig:contraccion_3_parejas}
  \end{figure}
\end{example}

\begin{theorem}
Let $\mathcal{C}$ be an $[n,k,d]$-binary linear code and $1\leq i\leq n$ with either $d>1$ or $d=1$ and $\boldsymbol{e}_i\not\in\mathcal{C}$ .  Then 
    \begin{align*}
        V_{\mathcal{C}^{*_i}}&=\Bigl\{ \boldsymbol{x}^{*_i}+\mathcal{C}^{*_i}:i\in\supp(\boldsymbol{x}) \text{ and } \boldsymbol{x}+\mathcal{C}\in V_{\mathcal{C}}\Bigr\}.\\
E_{\mathcal{C}^{*_i}}&=\Bigl\{\left\{\boldsymbol{x}^{*_i}+\mathcal{C}^{*_i},\boldsymbol{y}^{*_i}+\mathcal{C}^{*_i}\right\}: i\in \supp(\boldsymbol{x}) \text{ and } \left\{\boldsymbol{x}+\mathcal{C},\boldsymbol{y}+\mathcal{C}\right\}\in E_{\mathcal{C}}\Bigr\}.
    \end{align*}
Furthermore, $\Gamma(\mathcal{C}^{*_i})$ can be obtained from $\Gamma(\mathcal{C})$ vertex contraction respect to the set $V_i=\Bigl\{[\boldsymbol{x}+\mathcal{C},\boldsymbol{x}+\boldsymbol{e}_i+\mathcal{C}]: \boldsymbol{x}+\mathcal{C}\in V_{\mathcal{C}}\Bigr\}$; that is $$\Gamma(\mathcal{C}^{*_i})\cong\Gamma(\mathcal{C})^{\circledast}/V_i.$$
\end{theorem}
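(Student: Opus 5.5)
The plan is to realise the contraction through the natural quotient map
\[
\varphi\colon V_{\mathcal{C}}\to V_{\mathcal{C}^{*_i}},\qquad \varphi(\boldsymbol{x}+\mathcal{C})=\boldsymbol{x}^{*_i}+\mathcal{C}^{*_i}.
\]
By the first part of Proposition \ref{coset_representatives}, $\varphi$ is well defined. It is surjective, because puncturing $\mathbb{F}_2^n\to\mathbb{F}_2^{n-1}$ is onto.

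Next I will compute the fibres of $\varphi$. We have $\boldsymbol{x}^{*_i}\in\mathcal{C}^{*_i}$ if and only if $\boldsymbol{x}^{*_i}=\boldsymbol{c}^{*_i}$ for some $\boldsymbol{c}\in\mathcal{C}$. This holds if and only if $\boldsymbol{x}+\boldsymbol{c}\in\{\boldsymbol{0},\boldsymbol{e}_i\}$, that is, $\boldsymbol{x}\in\mathcal{C}\cup(\boldsymbol{e}_i+\mathcal{C})$. Hence each fibre is exactly $\{\boldsymbol{x}+\mathcal{C},\ \boldsymbol{x}+\boldsymbol{e}_i+\mathcal{C}\}$. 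By hypothesis $\boldsymbol{e}_i\notin\mathcal{C}$ (when $d>1$ this is automatic), so the two cosets are distinct, as in Proposition \ref{prop:clases_perforadas}. This gives the count $|V_{\mathcal{C}^{*_i}}|=|V_{\mathcal{C}}|/2$ of Remark \ref{remark2}. Choosing the representative with $i\in\supp(\boldsymbol{x})$ yields the stated description of $V_{\mathcal{C}^{*_i}}$. Consequently the vertex set of $\Gamma(\mathcal{C})^{\circledast}/V_i$ is in bijection with $V_{\mathcal{C}^{*_i}}$ via $\varphi$.

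The main work is the edge correspondence. I must show two things. First, for every edge $\{\boldsymbol{x}+\mathcal{C},\boldsymbol{y}+\mathcal{C}\}\in E_{\mathcal{C}}$ whose endpoints are not in the same fibre, the images form an edge of $\Gamma(\mathcal{C}^{*_i})$. Second, every edge of $\Gamma(\mathcal{C}^{*_i})$ arises this way.

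For the forward direction, take coset leaders $\boldsymbol{x}\preceq\boldsymbol{y}$ with $\wt(\boldsymbol{y})=\wt(\boldsymbol{x})+1$, so $\boldsymbol{y}=\boldsymbol{x}+\boldsymbol{e}_j$.
\begin{itemize}
\item If $j=i$, the two cosets lie in the same fibre. The edge becomes a loop and is discarded by the contraction.
\item If $j\neq i$, then $\boldsymbol{x}^{*_i}\preceq\boldsymbol{y}^{*_i}$ and the two differ in exactly one coordinate. Using Proposition \ref{coset_representatives} (and its case analysis), I will replace $\boldsymbol{x},\boldsymbol{y}$ by leaders whose punctures are leaders of the punctured cosets, keeping the relation $\boldsymbol{y}=\boldsymbol{x}+\boldsymbol{e}_j$. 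When $i\in\supp(\boldsymbol{x})$ or $i\notin\supp(\boldsymbol{y})$, the weight difference $1$ survives puncturing. This gives $\wt(\boldsymbol{y}^{*_i}+\mathcal{C}^{*_i})=\wt(\boldsymbol{x}^{*_i}+\mathcal{C}^{*_i})+1$.
\end{itemize}

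For the converse, take an edge $\boldsymbol{u}+\mathcal{C}^{*_i}\prec\boldsymbol{u}+\boldsymbol{e}_j+\mathcal{C}^{*_i}$ of $\Gamma(\mathcal{C}^{*_i})$ with leaders $\boldsymbol{u}\preceq\boldsymbol{u}+\boldsymbol{e}_j$. I will lift $\boldsymbol{u}$ to $\boldsymbol{x}\in\mathbb{F}_2^n$ by inserting $0$ or $1$ in position $i$, choosing the insertion so that $\boldsymbol{x}$ is a coset leader. Then $\boldsymbol{y}=\boldsymbol{x}+\boldsymbol{e}_{j'}$, where $j'$ is the corresponding index, is a leader in the fibre above the second coset. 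This shows the pair is an edge of $\Gamma(\mathcal{C})$, possibly between the other member of one of the fibres.

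Finally, the contraction is defined so that the merged vertex is adjacent to the union of the neighbourhoods of the pair. The two inclusions above therefore give $\Gamma(\mathcal{C}^{*_i})\cong\Gamma(\mathcal{C})^{\circledast}/V_i$, with $\varphi$ as the isomorphism.

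The step I expect to be the main obstacle is the weight bookkeeping in both edge directions. The weight of a coset may drop by one under puncturing in some cases (Case 2.b(ii) of Proposition \ref{coset_representatives}) and not in others. An edge of $\Gamma(\mathcal{C})$ could thus a priori map to a pair whose weights are equal or differ by more than one. Likewise, a lift of an edge of $\Gamma(\mathcal{C}^{*_i})$ might not consist of coset leaders. My first attempt will be to use the freedom of choosing between the two members of each fibre, i.e.\ adding $\boldsymbol{e}_i$. The aim is always to pick the representatives $\boldsymbol{x},\boldsymbol{y}$ either both containing $i$ or both avoiding it, so that puncturing shifts both weights by the same amount and preserves leadership, via the inequalities in the proof of Proposition \ref{coset_representatives}.
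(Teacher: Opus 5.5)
\textbf{The forward edge direction fails, and it cannot be fixed.} This is exactly the obstacle you flagged. No choice of representatives in the fibres completes it, because the contraction isomorphism is false.

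Take $\mathcal{C}=\{0000,1111\}$ (so $d=4$) and $i=1$.
\begin{itemize}
\item Each weight-two coset is a pair $\{\boldsymbol{v},\boldsymbol{v}+1111\}$ with complementary supports, so it lies above every weight-one coset $\boldsymbol{e}_j+\mathcal{C}$. Hence $\Gamma(\mathcal{C})\cong K_{4,4}$, with parts $\{\mathcal{C},1100+\mathcal{C},1010+\mathcal{C},1001+\mathcal{C}\}$ and $\{\boldsymbol{e}_j+\mathcal{C}:1\le j\le 4\}$.
\item The pairs in $V_1$ are $[\mathcal{C},1000+\mathcal{C}]$, $[0100+\mathcal{C},1100+\mathcal{C}]$, $[0010+\mathcal{C},1010+\mathcal{C}]$ and $[0001+\mathcal{C},1001+\mathcal{C}]$. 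This is a perfect matching across the bipartition, so $\Gamma(\mathcal{C})^{\circledast}/V_1\cong K_4$, which is not even bipartite.
\item The three nonzero cosets of $\mathcal{C}^{*_1}=\{000,111\}$ all have weight $1$, so $\Gamma(\mathcal{C}^{*_1})\cong K_{1,3}$.
\end{itemize}

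The extra edges come from your case $i\notin\supp(\boldsymbol{y})$. For instance, $0010\preceq 0110$ are leaders, yet the image cosets $010+\mathcal{C}^{*_1}$ and $110+\mathcal{C}^{*_1}=001+\mathcal{C}^{*_1}$ both have weight $1$. The vector weights still differ by one after puncturing, but $110$ is no longer a leader. In general $\wt(\boldsymbol{x}^{*_i}+\mathcal{C}^{*_i})=\min\{\wt(\boldsymbol{x}+\mathcal{C}),\wt(\boldsymbol{x}+\boldsymbol{e}_i+\mathcal{C})\}$. This can drop by one for the upper coset of an edge while staying put for the lower one.

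The paper's proof has the same hole. It only handles edges whose lower leader contains $i$; there both punctures are indeed leaders, so that case is fine. It then asserts the contraction identity, which also involves the edges between leaders avoiding $i$.

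\textbf{What your argument does establish is one inclusion.}
\begin{itemize}
\item Your vertex bijection $\varphi$ is correct.
\item Your converse works if you always insert a $0$ in position $i$. Such a lift $\boldsymbol{z}$ of a leader $\boldsymbol{u}$ satisfies $\wt(\boldsymbol{z}+\boldsymbol{c})\ge\wt(\boldsymbol{u}+\boldsymbol{c}^{*_i})\ge\wt(\boldsymbol{z})$ for all $\boldsymbol{c}\in\mathcal{C}$, so it is a leader. The same applies to the lift of the upper leader.
\end{itemize}
Hence $\varphi$ identifies $\Gamma(\mathcal{C}^{*_i})$ with a spanning subgraph of $\Gamma(\mathcal{C})^{\circledast}/V_i$. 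Equality needs an extra hypothesis, for example that coordinate $i$ vanishes on every codeword, as in the paper's example with $i=5$.
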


\begin{proof}
 The set of vertices $V_{\mathcal{C}^{*_i}}$ is obtained in Remark \ref{remark2}. To characterize the set of edges of $\Gamma(\mathcal{C}^{*_i})$ just rest to prove the relationship between the weights of two its vertices. Suppose that $\{\boldsymbol{x}+\mathcal{C},\boldsymbol{y}+\mathcal{C}\}\in E_{\mathcal{C}}$ with $i\in \supp(\boldsymbol{x})$. Thus $i\in \supp(\boldsymbol{x})\subset \supp(\boldsymbol{y})$ and $\wt(\boldsymbol{x})=\wt(\boldsymbol{y})-1$.  In consequence, $\supp(\boldsymbol{x}^{*_i})\subset \supp(\boldsymbol{y}^{*_i})$ and 
        $\wt(\boldsymbol{x}^{*_i})=\wt(\boldsymbol{x})-1=(\wt(\boldsymbol{y})-1)-1=\wt(\boldsymbol{y}^{*_i})-1.$ 
Therefore,  $\left\{\boldsymbol{x}^{*_i}+\mathcal{C}^{*_i},\boldsymbol{y}^{*_i}+\mathcal{C}^{*_i}\right\}\in E_{\mathcal{C}^{*_i}}$. Also, if $\left\{\boldsymbol{x}^{*_i}+\mathcal{C}^{*_i},\boldsymbol{y}^{*_i}+\mathcal{C}^{*_i}\right\}$ is an edge of $\Gamma(\mathcal{C}^{*_i})$, then by Remark \ref{remark2},  $\boldsymbol{x}+\mathcal{C}$ and $\boldsymbol{y}+\mathcal{C}$ are vertices of $\Gamma(\mathcal{C})$ such that we have that 
$i\in (\supp(\boldsymbol{x})\cap \supp(\boldsymbol{y}))$. It implies that $\left\{\boldsymbol{x}+\mathcal{C},\boldsymbol{y}+\mathcal{C}\right\}\in E_{\mathcal{C}}$.
				
The last assertion follows from Proposition \ref{prop:clases_perforadas}, since the graph $\Gamma(\mathcal{C}^{*_i})$ is obtained by successively contracting the pairs of vertices $[\boldsymbol{x}+\mathcal{C},\boldsymbol{x}+\boldsymbol{e}_i+\mathcal{C}]$ in the graph $\Gamma(\mathcal{C})$, where $\boldsymbol{x}+\mathcal{C}\in V_{\mathcal{C}}$.
\end{proof}

\begin{example}
   For the binary linear code of Example \ref{ejemploperforado}\ref{ejemploper3}, we know that $\mathcal{C}^{*_5}=\{ 0000,1111,0110,1001\}$ is a $[4,2,2]$-binary linear code. In Figure \ref{fig:ejemplo_perforado}, we show the process to obtain the Hase diagram of $\mathcal{C}^{*_5}$ from $\Gamma(\mathcal{C})$ using the vertex contraction.
\end{example}

     \begin{figure}[H]
    \begin{subfigure}[b]{0.49\textwidth}
    	\centering
    	\includegraphics[scale=0.5]{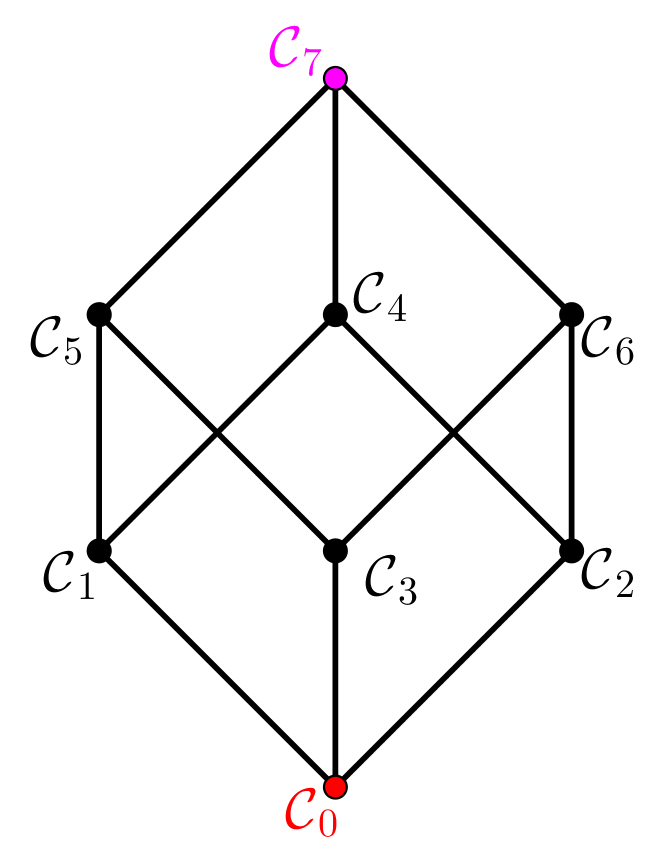} 
    	\caption{Hasse diagram of $\mathcal{C}$.}
    	\label{fig:grafoor15}
    	\end{subfigure}
 \begin{subfigure}[b]{0.49\textwidth}
    		\centering
    		\includegraphics[scale=0.5]{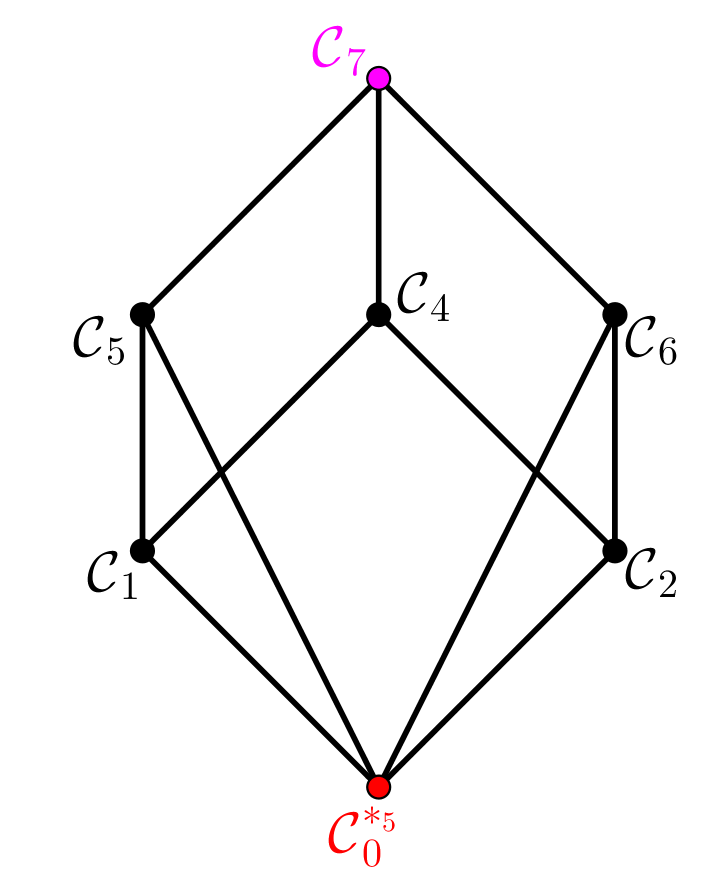}
    		\caption{Graph $\Gamma(\mathcal{C})^{\circledast}/\{[\mathcal{C}_0,\mathcal{C}_3]\}$.}
    	\end{subfigure}
		\end{figure}
	
		\begin{figure}[H]
    \begin{subfigure}[b]{0.49\textwidth}
    	\centering
    	\includegraphics[scale=0.5]{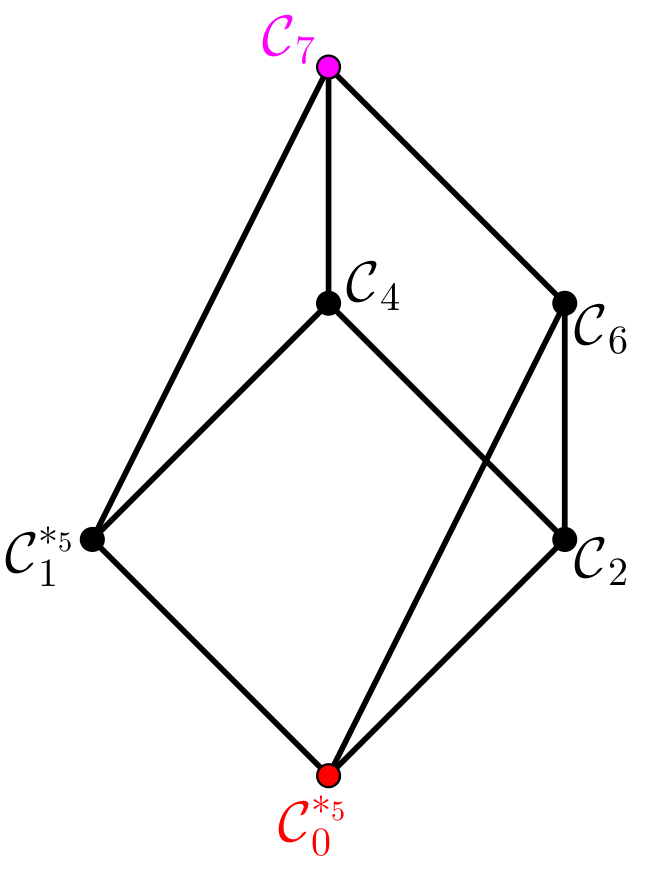} 
    	\caption{$\Gamma(\mathcal{C})^{\circledast}/\{[\mathcal{C}_0,\mathcal{C}_3],[\mathcal{C}_1,\mathcal{C}_5]\}$.}
    	\end{subfigure}
   \begin{subfigure}[b]{0.49\textwidth}
    	\centering
    	\includegraphics[scale=0.5]{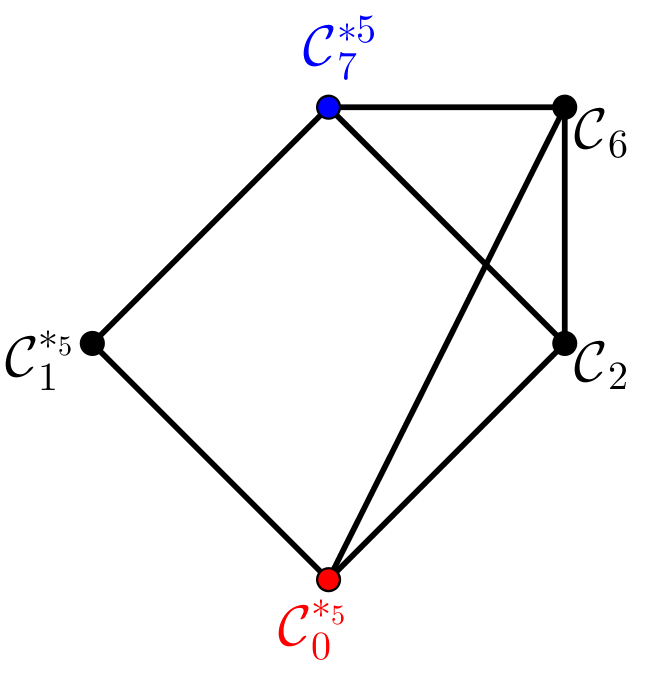} 
    	\caption{$\Gamma(\mathcal{C})^{\circledast}/\{[\mathcal{C}_0,\mathcal{C}_3],[\mathcal{C}_1,\mathcal{C}_5],[\mathcal{C}_7,\mathcal{C}_4]\}$.}
    	\end{subfigure}
    \begin{subfigure}[b]{1\textwidth}
    	\centering
    	\includegraphics[scale=0.5]{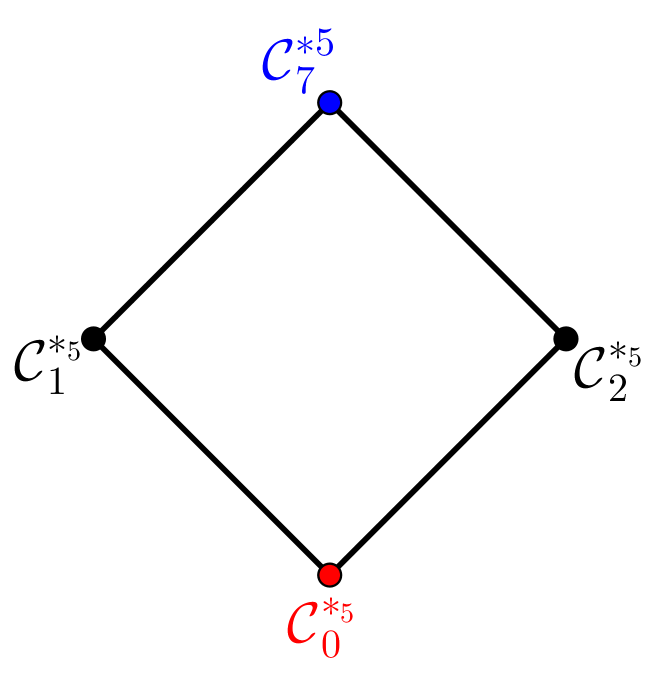} 
    	\caption{Graph $\Gamma(\mathcal{C}^{*_{5}})=\Gamma(\mathcal{C})^{\circledast}/\{[\mathcal{C}_0,\mathcal{C}_3],[\mathcal{C}_1,\mathcal{C}_5],[\mathcal{C}_7,\mathcal{C}_4],[\mathcal{C}_2,\mathcal{C}_6]\}$.}
    	\label{fig:grafoor15_perf}
    	\end{subfigure}
			\caption{Hasse diagram of $\mathcal{C}^{*_{5}}$ obtained from $\Gamma(\mathcal{C})$.}\label{fig:ejemplo_perforado}
    \end{figure}

\subsection{Extending a code}
 \begin{definition}[Extending a code]
Let $\mathcal{C}$ be an $[n,k,d]$-binary linear code. The process of adding one or more coordinate positions to the codewords in $\mathcal{C}$ is referred to as \emph{extending a code}. For instance, the \emph{extended code} of $\mathcal{C}$ is  
\begin{equation}
\widehat{\mathcal{C}}=\left\{c_1\ldots c_nc_{n+1}: c_1\ldots c_n\in \mathcal{C}\text{ with } \sum_{k=1}^{n+1}c_k=0  \right\}.
\label{eq:extending}
\end{equation}
\end{definition}
In this case, the extra coordinate $c_{n+1}$ added to the codeword $c_1\ldots c_n$ is called the \emph{parity-check coordinate}. The new parameters of $\widehat{\mathcal{C}}$ defined by \eqref{eq:extending} are given in the next result, see \cite[Section 1.5.2]{Huffman2003}.
 \begin{theorem}\label{teoextendido}
Let $\mathcal{C}$ be an $[n,k,d]$-binear linear code. Then $\widehat{\mathcal{C}}$ is an $[n+1,k,\widehat{d}]$-binary linear code, where either $\widehat{d}=d$ or $\widehat{d}=d+1$.
\end{theorem}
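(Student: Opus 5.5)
The plan is to treat $\widehat{\mathcal{C}}$ as the image of a linear map and read off its length, dimension and minimum distance from that map. Define $\varphi:\mathbb{F}_2^n\to\mathbb{F}_2^{n+1}$ by
\[
\varphi(c_1\ldots c_n)=c_1\ldots c_n\,c_{n+1}, \qquad c_{n+1}=\sum_{k=1}^n c_k .
\]
Over $\mathbb{F}_2$ the condition $\sum_{k=1}^{n+1}c_k=0$ in \eqref{eq:extending} says exactly that $c_{n+1}=\sum_{k=1}^n c_k$. Hence $\widehat{\mathcal{C}}=\varphi(\mathcal{C})$, and each codeword of $\mathcal{C}$ has exactly one extension.

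The first step gives the length and the dimension. The map $\varphi$ is $\mathbb{F}_2$-linear, since the parity coordinate is a linear functional. So $\widehat{\mathcal{C}}$ is a subspace of $\mathbb{F}_2^{n+1}$, and its length is $n+1$. The map $\varphi$ is also injective, because deleting the last coordinate recovers $\boldsymbol{c}$; equivalently, $\widehat{\mathcal{C}}^{*_{n+1}}=\mathcal{C}$. Therefore $\dim\widehat{\mathcal{C}}=\dim\mathcal{C}=k$. Concretely, if $G$ is a generator matrix of $\mathcal{C}$, appending to $G$ the column of its row sums gives a $k\times(n+1)$ generator matrix of rank $k$.

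The second step compares weights. For every $\boldsymbol{c}\in\mathcal{C}$,
\[
\wt(\varphi(\boldsymbol{c}))=\wt(\boldsymbol{c})+c_{n+1},
\]
and $c_{n+1}=1$ exactly when $\wt(\boldsymbol{c})$ is odd. In particular, $\wt(\boldsymbol{c})\le\wt(\varphi(\boldsymbol{c}))\le\wt(\boldsymbol{c})+1$.

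The third step bounds $\widehat{d}$. Since $\varphi$ is injective, $\varphi(\boldsymbol{c})\neq\boldsymbol{0}$ if and only if $\boldsymbol{c}\neq\boldsymbol{0}$. Taking minima over nonzero codewords, the lower inequality gives $\widehat{d}\ge d$. For the upper bound, choose $\boldsymbol{c}\in\mathcal{C}$ of weight $d$; then $\widehat{d}\le\wt(\varphi(\boldsymbol{c}))\le d+1$. Since $\widehat{d}$ is an integer with $d\le\widehat{d}\le d+1$, we get $\widehat{d}\in\{d,d+1\}$.

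For completeness we can also say which case occurs. If $d$ is odd, every minimum-weight word gains a parity one. The other nonzero words have weight at least $d+1$ before extension, so $\widehat{d}=d+1$. If $d$ is even, a minimum-weight word is unchanged by extension, so $\widehat{d}=d$.

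No step is a real obstacle. The only point needing care is the injectivity of $\varphi$: it guarantees both that the dimension is preserved and that nonzero codewords of $\mathcal{C}$ correspond exactly to nonzero codewords of $\widehat{\mathcal{C}}$, which is what lets us compare the two minima.
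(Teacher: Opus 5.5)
Your proof is correct and complete. The paper gives no argument of its own; it only cites Huffman--Pless, Section 1.5.2, and the standard argument there is exactly yours: the parity-extension map is linear and injective, so $\dim\widehat{\mathcal{C}}=k$, and it raises each weight by $0$ or $1$, so $d\le\widehat{d}\le d+1$, including your refinement that for binary codes $\widehat{d}=d+1$ precisely when $d$ is odd.
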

From theorems \ref{maintheoremarticle1} and \ref{teoextendido}, we obtain the following result.
\begin{corollary}
\label{corcardinalextendido}
    Let $\widehat{\mathcal{C}}$ be the extended code of an $[n,k,d]$-binary linear code $\mathcal{C}$. Then, $|V_{\widehat{\mathcal{C}}}|=2^{(n+1)-k}$. 
\end{corollary}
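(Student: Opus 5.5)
The corollary follows by combining the parameters of $\widehat{\mathcal{C}}$ with the vertex count in Theorem \ref{maintheoremarticle1}\ref{thmmain1}. First, by Theorem \ref{teoextendido}, $\widehat{\mathcal{C}}$ is a binary linear code of length $n+1$ and dimension $k$. Its minimum distance $\widehat{d}$ is irrelevant to the count.

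It helps to see directly why the dimension is preserved. The map $\mathcal{C}\to\widehat{\mathcal{C}}$ sending $c_1\ldots c_n$ to $c_1\ldots c_n c_{n+1}$, with $c_{n+1}=\sum_{j=1}^n c_j$, has the following properties:
\begin{itemize}
\item it is $\mathbb{F}_2$-linear, since the parity coordinate is a linear functional;
\item it is injective, since deleting the last coordinate recovers the original word;
\item it is surjective by the definition \eqref{eq:extending}.
\end{itemize}
Hence $\dim\widehat{\mathcal{C}}=k$.

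Next, apply Theorem \ref{maintheoremarticle1}\ref{thmmain1} to the $[n+1,k]$-code $\widehat{\mathcal{C}}$. This gives $|V_{\widehat{\mathcal{C}}}|=2^{(n+1)-k}$. Equivalently, the vertices of $\Gamma(\widehat{\mathcal{C}})$ are the cosets of $\widehat{\mathcal{C}}$ in $\mathbb{F}_2^{n+1}$, and there are $|\mathbb{F}_2^{n+1}|/|\widehat{\mathcal{C}}|=2^{n+1}/2^k$ of them.

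There is no real obstacle. The only point needing care is that the dimension stays $k$ and does not drop, and the injectivity of the extension map above settles this.
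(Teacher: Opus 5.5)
Your proposal is correct and follows the paper's route: the paper also obtains the corollary directly from Theorem \ref{teoextendido} (that $\widehat{\mathcal{C}}$ is an $[n+1,k,\widehat{d}]$-code) together with the coset count $2^{n-k}$ of Theorem \ref{maintheoremarticle1}\ref{thmmain1}. Your extra check that $c\mapsto c\,(\sum_{j} c_j)$ is a linear bijection from $\mathcal{C}$ onto $\widehat{\mathcal{C}}$ simply proves the dimension claim that the paper takes from the literature.
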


\begin{theorem}[{\cite[Theorem 5.1.9]{Ling2004}}]\label{teor:sindrome_extendido}
    Let $\mathcal{C}$ be a binary linear code and  $H$ be a parity-check matrix of $\mathcal{C}$. Then, 
$$\widehat{H} =
\left(
  \begin{array}{ccc|c}
  &  &  & 0 \\
  & H & & \vdots \\
  &  &  & 0 \\
  \hline
  1& \cdots &1 & 1
\end{array} \right)$$
is a parity-check matrix for $\widehat{\mathcal{C}}$.
\end{theorem}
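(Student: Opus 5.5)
To prove that $\widehat{H}$ is a parity-check matrix for $\widehat{\mathcal{C}}$, I would verify two things: that $\widehat{H}$ has the right size and full row rank, and that its null space equals $\widehat{\mathcal{C}}$. By Theorem \ref{teoextendido}, $\widehat{\mathcal{C}}$ is an $[n+1,k]$ code, so a parity-check matrix must be an $(n+1-k)\times(n+1)$ matrix of rank $n+1-k$. The matrix $\widehat{H}$ has $n-k$ rows coming from $H$, each padded with a $0$ in position $n+1$, together with the all-ones row. So it has the correct number of rows and columns.

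For the rank, the first $n-k$ rows are linearly independent because $H$ has rank $n-k$, and appending a zero coordinate preserves independence. The last row is the only one with a nonzero entry in column $n+1$. Hence it cannot lie in the span of the other rows, and $\operatorname{rank}\widehat{H}=n-k+1$.

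Next I would show $\widehat{\mathcal{C}}\subseteq\{\boldsymbol{z}\in\mathbb{F}_2^{n+1}:\widehat{H}\boldsymbol{z}^T=\boldsymbol{0}\}$. Take $\boldsymbol{z}=c_1\ldots c_nc_{n+1}\in\widehat{\mathcal{C}}$ and write $\boldsymbol{c}=c_1\ldots c_n\in\mathcal{C}$.
\begin{itemize}
\item The first $n-k$ entries of $\widehat{H}\boldsymbol{z}^T$ equal $H\boldsymbol{c}^T+\boldsymbol{0}\cdot c_{n+1}=\boldsymbol{0}$, by \eqref{check_matrix}.
\item The last entry equals $\sum_{j=1}^{n+1}c_j$, which is $0$ by the defining condition in \eqref{eq:extending}.
\end{itemize}

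For equality I would use a dimension count. The null space of $\widehat{H}$ has dimension $(n+1)-(n-k+1)=k$, which matches $\dim\widehat{\mathcal{C}}=k$ from Theorem \ref{teoextendido}, so the inclusion is an equality. As a cross-check, the reverse inclusion can also be shown directly: if $\widehat{H}\boldsymbol{z}^T=\boldsymbol{0}$, the first rows force $H\boldsymbol{c}^T=\boldsymbol{0}$, so $\boldsymbol{c}\in\mathcal{C}$, and the last row forces the parity condition, so $\boldsymbol{z}\in\widehat{\mathcal{C}}$.

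The only point needing care is the rank of $\widehat{H}$, and the argument about column $n+1$ settles it. The rest is immediate from the definitions.
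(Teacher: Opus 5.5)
Your proof is correct and complete. The paper gives no proof of its own here; it imports the statement from Ling's textbook (Theorem 5.1.9). Your argument is the standard verification: the zero-padded rows of $H$ annihilate the first $n$ coordinates, the all-ones row encodes the parity condition defining $\widehat{\mathcal{C}}$, and either the dimension count or the direct reverse inclusion then gives equality with the null space of $\widehat{H}$.

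One small economy: each of those two routes to equality suffices on its own. If you use the direct reverse inclusion, the rank computation becomes unnecessary. Under the paper's definition, an $(n+1-k)\times(n+1)$ matrix whose null space is the $k$-dimensional code $\widehat{\mathcal{C}}$ automatically has full row rank.
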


\begin{example}\quad
\label{extendido}

    \begin{enumerate}[label=(\alph*)]
        \item 
        \label{extendidoa} In Figure \ref{fig:ext1}, we show the Hass diagram for the extended codes of the codes in examples \ref{ejemploextendido}\ref{ejemplo2ext} and \ref{ejemploextendido}\ref{ejemplo1ext}. Note that in this case these graphs are the same (isomorphics). 

        \begin{figure}[H]
		\centering
		\includegraphics[scale=0.75]{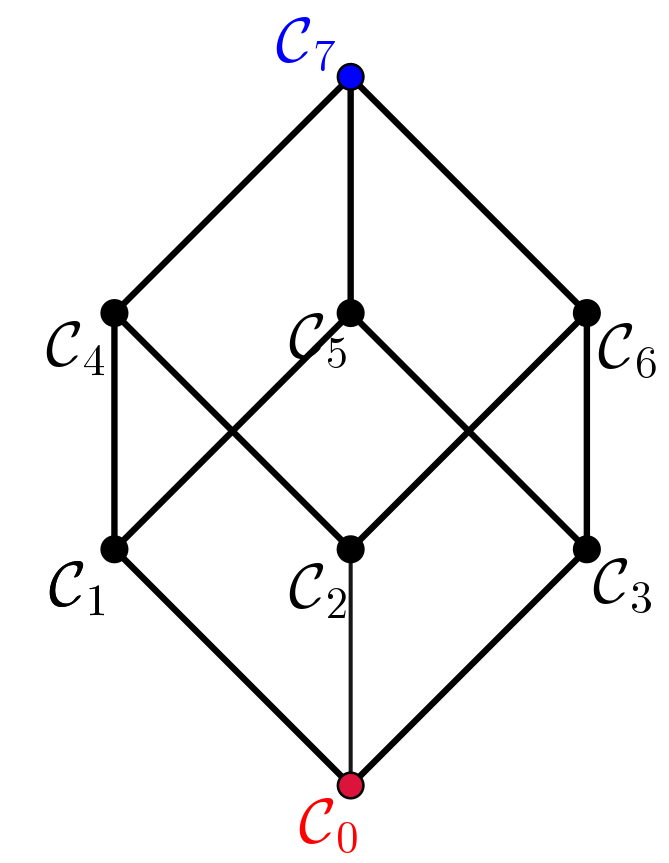}
		\caption{Graph of the extended code of examples \ref{ejemploextendido}\ref{ejemplo2ext} and \ref{ejemploextendido}\ref{ejemplo1ext}.}
		\label{fig:ext1}
	\end{figure}

       \item \label{extendidob} The graph of the extended code from Example \ref{ejemploextendido}\ref{ejemplo3ext} is shown in Figure \ref{fig:ext2}.

       \begin{figure}[H]
		\centering
		\includegraphics[scale=0.7]{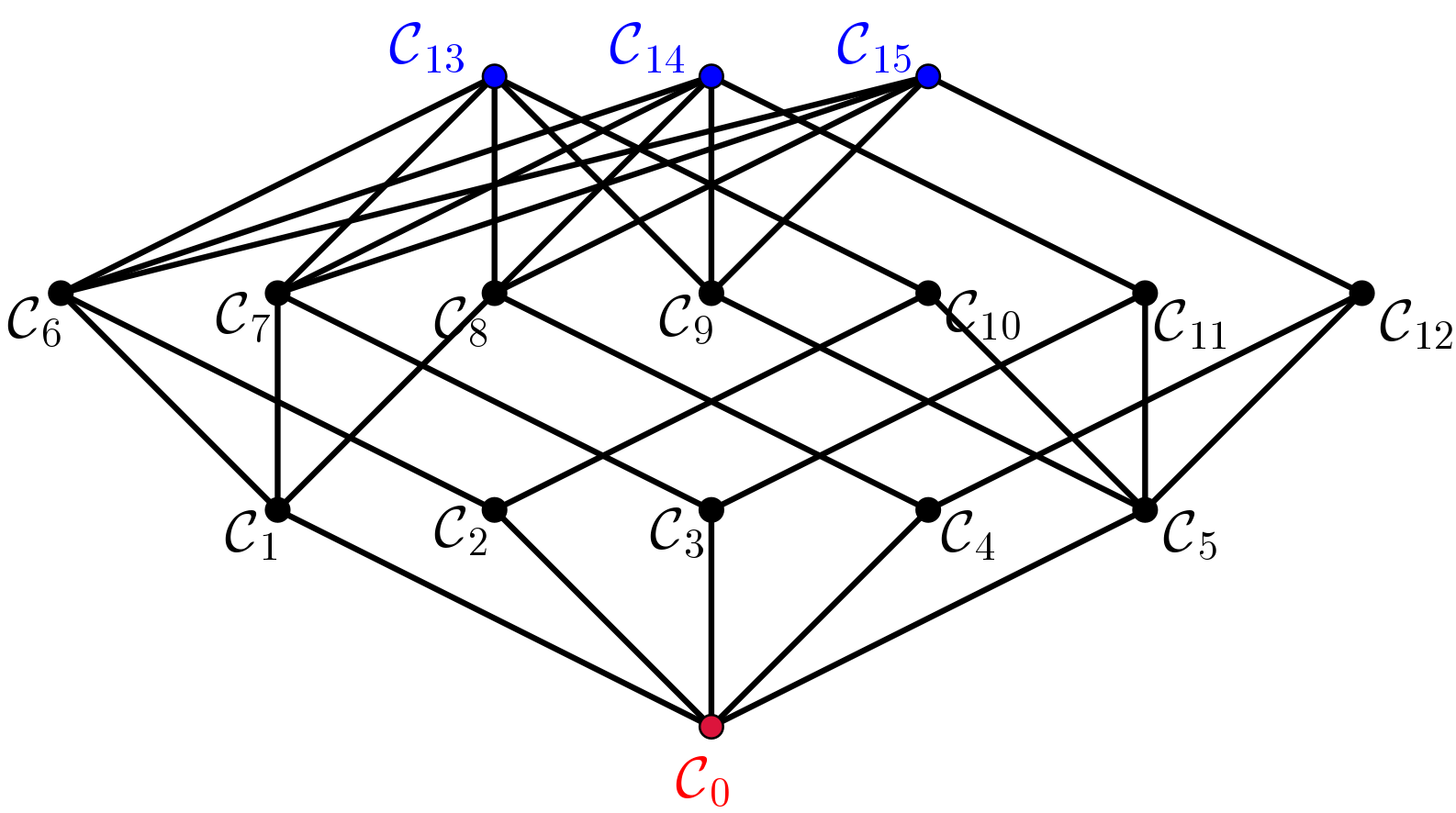}
		\caption{Graph of the extended code of Example \ref{ejemploextendido}\ref{ejemplo3ext}.}
		\label{fig:ext2}
	\end{figure}
        
    \end{enumerate}
\end{example}
Now, by means of some results, we present a method to construct the graph of the extended code without computing the extended code explicitly. For $\boldsymbol{x}\in \mathbb{F}_2^n$, where $\boldsymbol{x}=x_1x_2\ldots x_n$, consider the vector $\overline{\boldsymbol{x}}=x_1x_2\ldots x_nx_{n+1}$, where either $x_{n+1}=0$ or $x_{n+1}=1$. In particular, we denote $\ddot{\boldsymbol{x}}=x_1\ldots x_n0 $ and $ \widetilde{{\boldsymbol{x}}}=x_1\ldots x_n1$.   

    \begin{proposition}
 If $\boldsymbol{x}\not\in \boldsymbol{y}+\mathcal{C}$, then $\overline{\boldsymbol{x}}\notin \overline{\boldsymbol{y}}+\widehat{\mathcal{C}}$. Moreover, if $\boldsymbol{x}$ is a coset leader of $\boldsymbol{x}+\mathcal{C}$, then $\overline{\boldsymbol{x}}$ is a coset leader of $\overline{\boldsymbol{x}}+\widehat{\mathcal{C}}$.
\end{proposition}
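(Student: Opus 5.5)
The plan is to work directly with the description of $\widehat{\mathcal{C}}$ in \eqref{eq:extending}: every codeword of $\widehat{\mathcal{C}}$ has the form $\widehat{\boldsymbol{c}}=c_1\ldots c_n c_{n+1}$ with $\boldsymbol{c}=c_1\ldots c_n\in\mathcal{C}$ and $c_{n+1}=\wt(\boldsymbol{c}) \bmod 2$. Conversely, deleting the last coordinate of a codeword of $\widehat{\mathcal{C}}$ gives a codeword of $\mathcal{C}$. In other words, puncturing $\widehat{\mathcal{C}}$ at coordinate $n+1$ returns $\mathcal{C}$, so $\widehat{\mathcal{C}}^{*_{n+1}}=\mathcal{C}$.

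\emph{First assertion.} I argue by contraposition. Suppose $\overline{\boldsymbol{x}}\in\overline{\boldsymbol{y}}+\widehat{\mathcal{C}}$. Then $\overline{\boldsymbol{x}}+\overline{\boldsymbol{y}}\in\widehat{\mathcal{C}}$. Puncturing at position $n+1$ commutes with addition, as in the proof of Proposition~\ref{coset_representatives}, so $\boldsymbol{x}+\boldsymbol{y}=(\overline{\boldsymbol{x}}+\overline{\boldsymbol{y}})^{*_{n+1}}\in\mathcal{C}$. Hence $\boldsymbol{x}\in\boldsymbol{y}+\mathcal{C}$. This holds whatever last bits were chosen for $\overline{\boldsymbol{x}}$ and $\overline{\boldsymbol{y}}$.

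\emph{Second assertion.} Take $\boldsymbol{x}$ to be a coset leader of $\boldsymbol{x}+\mathcal{C}$. Every element of $\overline{\boldsymbol{x}}+\widehat{\mathcal{C}}$ has the form $\overline{\boldsymbol{x}}+\widehat{\boldsymbol{c}}$ with $\boldsymbol{c}\in\mathcal{C}$. Its weight is
$$\wt(\boldsymbol{x}+\boldsymbol{c})+\bigl(x_{n+1}+\wt(\boldsymbol{c})\bigr)\bmod 2,$$
and the first summand is at least $\wt(\boldsymbol{x})$ by minimality. Two cases follow.
\begin{itemize}
\item For $\overline{\boldsymbol{x}}=\ddot{\boldsymbol{x}}$ we have $\wt(\ddot{\boldsymbol{x}})=\wt(\boldsymbol{x})\le\wt(\boldsymbol{x}+\boldsymbol{c})\le\wt(\ddot{\boldsymbol{x}}+\widehat{\boldsymbol{c}})$. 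So $\ddot{\boldsymbol{x}}$ is a coset leader, and this case is routine.
\item For $\overline{\boldsymbol{x}}=\widetilde{\boldsymbol{x}}$ we have $\wt(\widetilde{\boldsymbol{x}})=\wt(\boldsymbol{x})+1$. We must compare this with $\wt(\boldsymbol{x}+\boldsymbol{c})+\bigl(1+\wt(\boldsymbol{c})\bigr)\bmod 2$. If $\wt(\boldsymbol{x}+\boldsymbol{c})\ge\wt(\boldsymbol{x})+1$, the inequality is immediate. If $\wt(\boldsymbol{x}+\boldsymbol{c})=\wt(\boldsymbol{x})$ and $\wt(\boldsymbol{c})$ is even, the appended bit is $1$ and we again get equality of weights.
\end{itemize}

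The main obstacle is the remaining subcase: $\boldsymbol{x}+\boldsymbol{c}$ is another coset leader of $\boldsymbol{x}+\mathcal{C}$ with $\boldsymbol{c}$ of odd weight. Then $\widetilde{\boldsymbol{x}}+\widehat{\boldsymbol{c}}$ has weight $\wt(\boldsymbol{x})<\wt(\widetilde{\boldsymbol{x}})$, so $\widetilde{\boldsymbol{x}}$ is \emph{not} a coset leader. Note that $\wt(\boldsymbol{x}+\boldsymbol{c})=\wt(\boldsymbol{x})$ forces $\wt(\boldsymbol{c})$ to be even whenever $\mathcal{C}$ is an even code. This subcase can also only arise when $\boldsymbol{x}+\mathcal{C}$ has more than one leader, which is excluded when $\wt(\boldsymbol{x})\le\lfloor(d-1)/2\rfloor$.

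So the proof will establish the second assertion unconditionally for $\overline{\boldsymbol{x}}=\ddot{\boldsymbol{x}}$, which is the reading I would adopt. For $\widetilde{\boldsymbol{x}}$ I would state explicitly the needed hypothesis: no coset leader of $\boldsymbol{x}+\mathcal{C}$ differs from $\boldsymbol{x}$ by an odd-weight codeword. This holds, for instance, if $\boldsymbol{x}$ is the unique leader of its coset or if $\mathcal{C}$ is even. In general the correct leader of $\widetilde{\boldsymbol{x}}+\widehat{\mathcal{C}}$ is then $\ddot{\boldsymbol{y}}$, where $\boldsymbol{y}=\boldsymbol{x}+\boldsymbol{c}$.
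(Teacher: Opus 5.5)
Your first assertion is fine: deleting the last coordinate of $\overline{\boldsymbol{x}}+\overline{\boldsymbol{y}}\in\widehat{\mathcal{C}}$ gives $\boldsymbol{x}+\boldsymbol{y}\in\mathcal{C}$. This is a slightly more elementary route than the paper's syndrome computation with $\widehat{H}$.

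The gap is in the second assertion. The subcase you call the main obstacle is vacuous, so your claim that $\widetilde{\boldsymbol{x}}$ can fail to be a coset leader is wrong, and so is the extra hypothesis you propose. The identity $\wt(\boldsymbol{x}+\boldsymbol{c})=\wt(\boldsymbol{x})+\wt(\boldsymbol{c})-2\wt(\boldsymbol{x}\cap\boldsymbol{c})$, which the paper quotes in the remark right after this proposition, gives $\wt(\boldsymbol{x}+\boldsymbol{c})\equiv\wt(\boldsymbol{x})+\wt(\boldsymbol{c})\pmod 2$. This holds for every binary code, not only for even ones. Hence if $\wt(\boldsymbol{c})$ is odd, then $\wt(\boldsymbol{x}+\boldsymbol{c})\neq\wt(\boldsymbol{x})$. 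Minimality of $\boldsymbol{x}$ then upgrades to $\wt(\boldsymbol{x}+\boldsymbol{c})\geq\wt(\boldsymbol{x})+1=\wt(\widetilde{\boldsymbol{x}})$. So $\widetilde{\boldsymbol{x}}$ is always a coset leader. As written, your argument proves only the $\ddot{\boldsymbol{x}}$ half of the statement and asserts a counterexample that cannot exist.

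A uniform way to finish is a parity argument. Every codeword of $\widehat{\mathcal{C}}$ has even weight, so each $\boldsymbol{v}\in\overline{\boldsymbol{x}}+\widehat{\mathcal{C}}$ has weight of the same parity as $\overline{\boldsymbol{x}}$. Also $\wt(\boldsymbol{v})\geq\wt(\boldsymbol{v}^{*_{n+1}})\geq\wt(\boldsymbol{x})$, because $\boldsymbol{v}^{*_{n+1}}\in\boldsymbol{x}+\mathcal{C}$. For $\overline{\boldsymbol{x}}=\widetilde{\boldsymbol{x}}$ the parity then forces $\wt(\boldsymbol{v})\geq\wt(\boldsymbol{x})+1$.

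For comparison, the paper's proof simply assumes that $\boldsymbol{x}$ is the unique leader of its coset, i.e. the strict inequality $\wt(\boldsymbol{x})<\wt(\boldsymbol{x}+\boldsymbol{c})$ for $\boldsymbol{c}\neq\boldsymbol{0}$. It never addresses ties, so you were right that ties need attention. The repair, however, is this parity observation, which is exactly what makes the paper's case 2(ii) valid in general, not an additional hypothesis.
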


\begin{proof}
    First, by Theorem \ref{teor:sindrome_extendido}, we compute the syndrome of $\overline{\boldsymbol{x}}\in \mathbb{F}_2^{n+1}$.
    \begin{align*}
\Syn(\overline{\boldsymbol{x}})&=\overline{\boldsymbol{x}}\widehat{H}^T=\overline{\boldsymbol{x}} \left(
  \begin{array}{ccc|c}
   &  &  & 1 \\
   & H^T & & \vdots \\
   &  &  & 1 \\
  \hline
  0& \ldots &0 & 1
\end{array} \right)= \left( \boldsymbol{x} H^T, \sum_{i=1}^{n+1} x_i \right)\\
&=  \left( \Syn(\boldsymbol{x}), \sum_{i=1}^{n+1} x_i \right).
      \end{align*} 
Similarly, $\Syn(\overline{\boldsymbol{y}})=\left( \Syn(\boldsymbol{y}), \sum_{i=1}^{n+1} y_i \right)$. Now, by hypothesis, $\boldsymbol{x}\notin \boldsymbol{y}+\mathcal{C}$; that is by \cite[Theorem 1.11.5]{Huffman2003}, $\Syn(\boldsymbol{x})\neq \Syn(\boldsymbol{y})$. Therefore, $\Syn(\overline{\boldsymbol{x}})\neq \Syn(\overline{\boldsymbol{y}})$, which implies that $\overline{\boldsymbol{x}}\notin \overline{\boldsymbol{y}}+\widehat{\mathcal{C}}$.

Next, we must show that $\overline{\boldsymbol{x}}$ is a coset leader of its coset; provided $\boldsymbol{x}$ is a coset leader of $\boldsymbol{x}+\mathcal{C}$. To achieve this, we need to prove that $\wt(\overline{\boldsymbol{x}})\leq \wt(\overline{\boldsymbol{x}}+\widehat{\boldsymbol{c}})$, for all $\widehat{\boldsymbol{c}}\in \widehat{\mathcal{C}}$. Suppose that $\boldsymbol{x}$ is the unique coset leader of its coset; it means that
\begin{equation}
\label{eq3.2}
    \wt(\boldsymbol{x})< \wt(\boldsymbol{x}+\boldsymbol{c}),\ \forall \boldsymbol{c}\in \mathcal{C}.
\end{equation}
Now, we consider the following cases:
\begin{enumerate}
    \item Suppose that $\overline{\boldsymbol{x}}=\ddot{\boldsymbol{x}}$. Hence, $\wt(\overline{\boldsymbol{x}})=\wt(\boldsymbol{x})$. Thus, by \eqref{eq3.2}, $\wt(\overline{\boldsymbol{x}})=\wt(\boldsymbol{x})<\wt(\boldsymbol{x}+\boldsymbol{c})$. Now, consider two subcases: 
\begin{enumerate}[label=(\normalfont\roman*)]
    \item If $\widehat{\boldsymbol{c}}=\ddot{\boldsymbol{c}}$, then $\overline{\boldsymbol{x}}+\widehat{\boldsymbol{c}}=\ddot{\boldsymbol{x}}+\ddot{\boldsymbol{c}}=(\boldsymbol{x}+\boldsymbol{c})0$, which implies that $\wt(\overline{\boldsymbol{x}}+\widehat{\boldsymbol{c}})=\wt(\boldsymbol{x}+\boldsymbol{c})$. It follows that $\wt(\overline{\boldsymbol{x}})<\wt(\overline{\boldsymbol{x}}+\widehat{\boldsymbol{c}})$.

    \item If $\widehat{\boldsymbol{c}}=\widetilde{\boldsymbol{c}}$, then $\overline{\boldsymbol{x}}+\widehat{\boldsymbol{c}}=\ddot{\boldsymbol{x}}+\widetilde{\boldsymbol{c}}=(\boldsymbol{x}+\boldsymbol{c})1$. Then $\wt(\overline{\boldsymbol{x}}+\widehat{\boldsymbol{c}})=\wt(\boldsymbol{x}+\boldsymbol{c})+1$. Accordingly, from \eqref{eq3.2}, $ \wt(\overline{\boldsymbol{x}})<\wt(\boldsymbol{x}+\boldsymbol{c})+1= \wt(\overline{\boldsymbol{x}}+\widehat{\boldsymbol{c}})$.
    
\end{enumerate}

\item
\label{caso2}
Now, assume that $\overline{\boldsymbol{x}}=\widetilde{\boldsymbol{x}}$. Then,  $\wt(\overline{\boldsymbol{x}})=\wt(\boldsymbol{x})+1$.
Again, we study two subcases:
\begin{enumerate}[label=(\normalfont\roman*)]
    \item If $\widehat{\boldsymbol{c}}=\ddot{\boldsymbol{c}}$, then $\overline{\boldsymbol{x}}+\widehat{\boldsymbol{c}}=\widetilde{\boldsymbol{x}}+\ddot{\boldsymbol{c}}=(\boldsymbol{x}+\boldsymbol{c})1$, which implies that $\wt(\overline{\boldsymbol{x}}+\widehat{\boldsymbol{c}})=\wt(\boldsymbol{x}+\boldsymbol{c})+1$. Thus, by \eqref{eq3.2}, $ \wt(\overline{\boldsymbol{x}})=\wt(\boldsymbol{x})+1<\wt(\boldsymbol{x}+\boldsymbol{c})+1= \wt(\overline{\boldsymbol{x}}+\widehat{\boldsymbol{c}})$.

    \item
    \label{casonl}
    If $\widehat{\boldsymbol{c}}=\widetilde{\boldsymbol{c}}$, then $\overline{\boldsymbol{x}}+\widehat{\boldsymbol{c}}=\widetilde{\boldsymbol{x}}+\widetilde{\boldsymbol{c}}=(\boldsymbol{x}+\boldsymbol{c})0$, which implies that $\wt(\overline{\boldsymbol{x}}+\widehat{\boldsymbol{c}})=\wt(\boldsymbol{x}+\boldsymbol{c})$. Thus, \begin{equation}\label{eq:posible_nuevo_lider}
    \wt(\overline{\boldsymbol{x}})=\wt(\boldsymbol{x})+1\leq \wt(\boldsymbol{x}+\boldsymbol{c})=\wt(\overline{\boldsymbol{x}}+\widehat{\boldsymbol{c}}).\qedhere
     \end{equation}
\end{enumerate}
\end{enumerate}
\end{proof}

\begin{remark}
    The only case in which new coset leaders can appear in the cosets of the extended code is in case \ref{caso2}\ref{casonl}. This is because a vector that is not a coset leader in the original code can become a coset leader in the extended code when equality holds in \eqref{eq:posible_nuevo_lider}; that is when $\wt(\boldsymbol{x}+\boldsymbol{c})=\wt(\boldsymbol{x})+1$. We recall that $\boldsymbol{x}\cap \boldsymbol{y} =(x_1y_1,x_2y_2,\ldots, x_ny_n)$, for all $\boldsymbol{x},\boldsymbol{y}\in \mathbb{F}_2^n$. Thus by \cite[Theorem 1.4.3]{Huffman2003}, we know that $$\wt(\boldsymbol{x}+\boldsymbol{c})=\wt(\boldsymbol{x})+\wt(\boldsymbol{c})-2\wt(\boldsymbol{x}\cap \boldsymbol{c}),$$
the extended codes that exhibit new coset leaders are precisely those containing codewords $\boldsymbol{c}$ such that $\wt(\boldsymbol{c})=2\wt(\boldsymbol{x}\cap \boldsymbol{c})+1$, for some coset leader $\boldsymbol{x}$ of some coset of $\mathcal{C}$.
\end{remark}

\begin{corollary}
\label{cor3.5}
    If $\boldsymbol{x}\in\mathbb{F}_2^n$ is a coset leader of $\boldsymbol{x}+\mathcal{C}$, then $\ddot{\boldsymbol{x}}$ and $\widetilde{\boldsymbol{x}}$ are coset leaders of distinct cosets of $\widehat{\mathcal{C}}$. Moreover, all cosets of $\widehat{\mathcal{C}}$ are of the form $\overline{\boldsymbol{x}}+\widehat{\mathcal{C}}$, where $\boldsymbol{x}$ is a coset leader of a coset of $\mathcal{C}$.
\end{corollary}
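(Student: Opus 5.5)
The plan is to combine the preceding Proposition with a syndrome computation and a counting argument based on Corollary \ref{corcardinalextendido}. Throughout, let $\boldsymbol{x}$ be a coset leader of $\boldsymbol{x}+\mathcal{C}$, and let $H$ and $\widehat{H}$ be as in Theorem \ref{teor:sindrome_extendido}.

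\textbf{Step 1: distinct cosets.} The syndrome computation in the proof of the preceding Proposition gives
\[
\Syn(\ddot{\boldsymbol{x}})=\bigl(\Syn(\boldsymbol{x}),\textstyle\sum_{j=1}^n x_j\bigr), \qquad \Syn(\widetilde{\boldsymbol{x}})=\bigl(\Syn(\boldsymbol{x}),\textstyle\sum_{j=1}^n x_j+1\bigr).
\]
These differ in the last coordinate, so $\ddot{\boldsymbol{x}}+\widehat{\mathcal{C}}\neq\widetilde{\boldsymbol{x}}+\widehat{\mathcal{C}}$.

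\textbf{Step 2: both are coset leaders.} The preceding Proposition asserts exactly this, but its written proof only treats the case where $\boldsymbol{x}$ is the \emph{unique} leader. I would therefore redo the four cases using only $\wt(\boldsymbol{x})\le\wt(\boldsymbol{x}+\boldsymbol{c})$. Three of the cases go through verbatim with $\le$ in place of $<$.

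The delicate case is $\overline{\boldsymbol{x}}=\widetilde{\boldsymbol{x}}$ and $\widehat{\boldsymbol{c}}=\widetilde{\boldsymbol{c}}$, where one needs $\wt(\boldsymbol{x})+1\le\wt(\boldsymbol{x}+\boldsymbol{c})$. Here $\widetilde{\boldsymbol{c}}$ ends in $1$, so $\boldsymbol{c}$ has odd weight. By $\wt(\boldsymbol{x}+\boldsymbol{c})=\wt(\boldsymbol{x})+\wt(\boldsymbol{c})-2\wt(\boldsymbol{x}\cap\boldsymbol{c})$, the weights $\wt(\boldsymbol{x}+\boldsymbol{c})$ and $\wt(\boldsymbol{x})$ then have different parity. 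Hence $\wt(\boldsymbol{x}+\boldsymbol{c})\neq\wt(\boldsymbol{x})$, and together with $\wt(\boldsymbol{x})\le\wt(\boldsymbol{x}+\boldsymbol{c})$ this forces $\wt(\boldsymbol{x}+\boldsymbol{c})\ge\wt(\boldsymbol{x})+1$. This parity observation is the main point where care is needed.

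\textbf{Step 3: every coset arises this way.} Choose one coset leader $\boldsymbol{x}$ in each of the $2^{n-k}$ cosets of $\mathcal{C}$. This produces $2\cdot 2^{n-k}$ cosets $\ddot{\boldsymbol{x}}+\widehat{\mathcal{C}}$ and $\widetilde{\boldsymbol{x}}+\widehat{\mathcal{C}}$. They are pairwise distinct for two reasons:
\begin{itemize}
\item for the same $\boldsymbol{x}$, by Step 1;
\item for leaders of different cosets of $\mathcal{C}$, by the first part of the preceding Proposition, since the first $n-k$ syndrome coordinates $\Syn(\boldsymbol{x})$ already differ.
\end{itemize}
By Corollary \ref{corcardinalextendido}, $|V_{\widehat{\mathcal{C}}}|=2^{(n+1)-k}$, so these cosets exhaust $\mathfrak{cl}(\widehat{\mathcal{C}})$. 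This gives the second assertion.
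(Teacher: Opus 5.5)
Your proposal is correct and follows the paper's route: the syndrome via $\widehat{H}$ separates $\ddot{\boldsymbol{x}}+\widehat{\mathcal{C}}$ from $\widetilde{\boldsymbol{x}}+\widehat{\mathcal{C}}$, coset leadership comes from the preceding Proposition, and the count $|V_{\widehat{\mathcal{C}}}|=2^{(n+1)-k}$ from Corollary \ref{corcardinalextendido} shows these cosets are all of them. Your parity argument in Step 2 usefully fills a gap, since the paper's proof of that Proposition only handles unique coset leaders; also, your last syndrome coordinate ($\sum_j x_j$, resp.\ $\sum_j x_j+1$) is more accurate than the paper's $0$, resp.\ $1$, which is right only when $\wt(\boldsymbol{x})$ is even.
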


\begin{proof}
    Let $\boldsymbol{x}\in \mathbb{F}_2^n$ be an arbitrary vector. Then $\ddot{\boldsymbol{x}},\widetilde{\boldsymbol{x}}\in \mathbb{F}_2^{n+1}$ and  
\begin{align}\label{eq:sindrome_extended}
\begin{split}
		\Syn(\ddot{\boldsymbol{x}})=(\Syn(\boldsymbol{x}),0),\\
    \Syn(\widetilde{\boldsymbol{x}})=(\Syn(\boldsymbol{x}),1).
\end{split}
\end{align}
    Thus, $\Syn(\ddot{\boldsymbol{x}})\neq \Syn(\widetilde{\boldsymbol{x}})$. Therefore, $\ddot{\boldsymbol{x}}\notin \widetilde{\boldsymbol{x}}+\mathcal{C}$. This implies that for each coset leader $\boldsymbol{x}$ of a coset of $\mathcal{C}$, we obtain two distinct cosets of $\widehat{\mathcal{C}}$, namely $\ddot{\boldsymbol{x}}+\widehat{\mathcal{C}}$ and $\widetilde{\boldsymbol{x}}+\widehat{\mathcal{C}}$, where $\ddot{\boldsymbol{x}}$ and $\widetilde{\boldsymbol{x}}$ are leaders of their respective cosets. Consequently, $|\mathfrak{cl}(\widehat{\mathcal{C})}|\geq 2|\mathfrak{cl}(\mathcal{C})|=2\cdot (2^{n-k})=2^{n+1-k}$. Furthermore, by Corollary \ref{corcardinalextendido}, it holds that $|\mathfrak{cl}(\widehat{\mathcal{C}})|=2^{n+1-k}$.
\end{proof}
In this way, in the following result, we provide a partial characterization of the vertex set and the edge set of the Hasse diagram of $\widehat{\mathcal{C}}$.
\begin{theorem}\label{teoExtendido}
   Let $\mathcal{C}$ be a linear code such that there is not a codeword $\boldsymbol{c}\in\mathcal{C}$  
	\begin{equation}
	\wt(\boldsymbol{c})=2\wt(\boldsymbol{x}\cap \boldsymbol{c})+1, \text{ for some }\boldsymbol{x}+\mathcal{C}\in \mathfrak{cl}(\mathcal{C}).
	\label{eq:star}
	\end{equation}
Then
    \begin{align*}
V_{\widehat{\mathcal{C}}}&=\left\{\ddot{\boldsymbol{x}}+\widehat{\mathcal{C}},\widetilde{\boldsymbol{x}}+\widehat{\mathcal{C}}:\boldsymbol{x}+\mathcal{C}\in V_{\mathcal{C}}\right\},\\
    E_{\widehat{\mathcal{C}}}&=\left\{\left\{\ddot{\boldsymbol{x}}+\widehat{\mathcal{C}},\ddot{\boldsymbol{y}}+\widehat{\mathcal{C}}\right\}, \left\{\widetilde{\boldsymbol{x}}+\widehat{\mathcal{C}},\widetilde{\boldsymbol{y}}+\widehat{\mathcal{C}}\right\}:\left\{\boldsymbol{x}+\mathcal{C},\boldsymbol{y}+\mathcal{C}\right\}\in E_{\mathcal{C}}\right\}\\ 
&\qquad\bigcup \left\{\left\{\ddot{\boldsymbol{x}}+\widehat{\mathcal{C}},\widetilde{\boldsymbol{x}}+\widehat{\mathcal{C}}\right\}: \boldsymbol{x}+\mathcal{C}\in V_{\mathcal{C}}\right\}.
    \end{align*}
\end{theorem}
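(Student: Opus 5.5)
The vertex set is already handled by Corollary \ref{cor3.5}. For each coset $\boldsymbol{x}+\mathcal{C}$ we fix a coset leader $\boldsymbol{x}$. Then $\ddot{\boldsymbol{x}}$ and $\widetilde{\boldsymbol{x}}$ are leaders of two distinct cosets of $\widehat{\mathcal{C}}$, and these exhaust $\mathfrak{cl}(\widehat{\mathcal{C}})$ by counting, using Corollary \ref{corcardinalextendido}. So the work is in the edge set. The first step is to use hypothesis \eqref{eq:star} to identify \emph{all} coset leaders of each coset of $\widehat{\mathcal{C}}$. The paper only defines the order on cosets through some pair of leaders, and we need to control which pairs exist.

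I claim that, under \eqref{eq:star}, the leaders of $\ddot{\boldsymbol{x}}+\widehat{\mathcal{C}}$ are exactly $\ddot{\boldsymbol{x}}'$ with $\boldsymbol{x}'$ a leader of $\boldsymbol{x}+\mathcal{C}$, and the leaders of $\widetilde{\boldsymbol{x}}+\widehat{\mathcal{C}}$ are exactly $\widetilde{\boldsymbol{x}}'$ for such $\boldsymbol{x}'$. This comes from rerunning the four cases of the previous proposition with non-strict inequalities, allowing non-unique leaders. An element of $\overline{\boldsymbol{x}}+\widehat{\mathcal{C}}$ has the form $(\boldsymbol{x}+\boldsymbol{c})b$. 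In cases 1(i), 1(ii) and 2(i), equality of weights forces $\boldsymbol{x}+\boldsymbol{c}$ to be a leader of $\boldsymbol{x}+\mathcal{C}$ with the same last bit. In case 2(ii), equality would require $\wt(\boldsymbol{x}+\boldsymbol{c})=\wt(\boldsymbol{x})+1$. By $\wt(\boldsymbol{x}+\boldsymbol{c})=\wt(\boldsymbol{x})+\wt(\boldsymbol{c})-2\wt(\boldsymbol{x}\cap\boldsymbol{c})$, this is exactly $\wt(\boldsymbol{c})=2\wt(\boldsymbol{x}\cap\boldsymbol{c})+1$, which \eqref{eq:star} forbids. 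The consequence is $\wt(\ddot{\boldsymbol{x}}+\widehat{\mathcal{C}})=\wt(\boldsymbol{x}+\mathcal{C})$ and $\wt(\widetilde{\boldsymbol{x}}+\widehat{\mathcal{C}})=\wt(\boldsymbol{x}+\mathcal{C})+1$.

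With leaders described, the edges follow by comparing supports. Write $\boldsymbol{x}$ for leaders of $\boldsymbol{x}+\mathcal{C}$ and $\boldsymbol{y}$ for leaders of $\boldsymbol{y}+\mathcal{C}$. Since $\supp(\overline{\boldsymbol{x}})\subseteq\supp(\overline{\boldsymbol{y}})$ iff $\supp(\boldsymbol{x})\subseteq\supp(\boldsymbol{y})$ and $x_{n+1}\le y_{n+1}$, there are three possible kinds of covering pair.
\begin{enumerate}[label=(\roman*)]
\item $\ddot{\boldsymbol{x}}\preceq\ddot{\boldsymbol{y}}$ or $\widetilde{\boldsymbol{x}}\preceq\widetilde{\boldsymbol{y}}$, with weights differing by one. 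This happens iff $\boldsymbol{x}\preceq\boldsymbol{y}$ with $\wt(\boldsymbol{x})=\wt(\boldsymbol{y})-1$, i.e.\ iff $\{\boldsymbol{x}+\mathcal{C},\boldsymbol{y}+\mathcal{C}\}\in E_{\mathcal{C}}$.
\item $\ddot{\boldsymbol{x}}\preceq\widetilde{\boldsymbol{y}}$ with a weight gap of one. This forces $\wt(\boldsymbol{x})=\wt(\boldsymbol{y})$ together with $\supp(\boldsymbol{x})\subseteq\supp(\boldsymbol{y})$. Hence $\boldsymbol{x}=\boldsymbol{y}$, the cosets coincide, and this gives the edges $\{\ddot{\boldsymbol{x}}+\widehat{\mathcal{C}},\widetilde{\boldsymbol{x}}+\widehat{\mathcal{C}}\}$. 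Each such pair is indeed an edge, since $\ddot{\boldsymbol{x}}\preceq\widetilde{\boldsymbol{x}}$ and the weights differ by one.
\item $\widetilde{\boldsymbol{x}}\preceq\ddot{\boldsymbol{y}}$. This is impossible.
\end{enumerate}
Both inclusions of the edge set therefore follow.

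The main obstacle is the first step: the precise description of all leaders of the extended cosets. Case (ii) in particular depends on there being no "mixed" leaders, i.e.\ no leader of $\widetilde{\boldsymbol{x}}+\widehat{\mathcal{C}}$ ending in $0$. One must also check that \eqref{eq:star} is interpreted for every leader $\boldsymbol{x}$ of every coset, so that the argument is independent of the representative chosen.
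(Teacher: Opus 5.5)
Your proof is correct and follows the same route as the paper: the vertex set comes from Corollary~\ref{cor3.5}, the edges come from comparing supports and weights of the lifted leaders $\ddot{\boldsymbol{x}}$, $\widetilde{\boldsymbol{x}}$, and the rungs come from $\ddot{\boldsymbol{x}}\preceq\widetilde{\boldsymbol{x}}$; it is more complete than the paper's sketch, which never invokes \eqref{eq:star} and leaves the non-adjacency claims as ``it can be proved'', whereas you show that \eqref{eq:star} is precisely what excludes leaders of $\widetilde{\boldsymbol{x}}+\widehat{\mathcal{C}}$ ending in $0$, and hence any edge between $\ddot{\boldsymbol{x}}+\widehat{\mathcal{C}}$ and $\widetilde{\boldsymbol{y}}+\widehat{\mathcal{C}}$ for distinct cosets. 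The paper's proof of the preceding proposition only treats unique leaders, so it is worth stating explicitly that $\wt(\boldsymbol{x}+\boldsymbol{c})=\wt(\boldsymbol{x})+\wt(\boldsymbol{c})-2\wt(\boldsymbol{x}\cap\boldsymbol{c})$ gives $\wt(\boldsymbol{x}+\boldsymbol{c})\neq\wt(\boldsymbol{x})$ for odd $\boldsymbol{c}$ (hence $\wt(\boldsymbol{x}+\boldsymbol{c})\geq\wt(\boldsymbol{x})+1$ in your case 2(ii)), and that any two leaders of $\boldsymbol{x}+\mathcal{C}$ differ by an even codeword, so every $\ddot{\boldsymbol{x}}'$ indeed lies in $\ddot{\boldsymbol{x}}+\widehat{\mathcal{C}}$.
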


\begin{proof}
    By Corollary \ref{cor3.5}, the cosets of $\widehat{\mathcal{C}}$ have the form $\ddot{\boldsymbol{x}}+\mathcal{C}$ and $\widetilde{\boldsymbol{x}}+\mathcal{C}$; we get $$V_{\widehat{\mathcal{C}}}=\left\{\ddot{\boldsymbol{x}}+\widehat{\mathcal{C}},\widetilde{\boldsymbol{x}}+\widehat{\mathcal{C}}:\boldsymbol{x}+\mathcal{C}\in V_{\mathcal{C}}\right\}.$$
     
Now, to determine the edge set of $\Gamma_{\widehat{\mathcal{C}}}$, suppose that $\boldsymbol{x}$ and $\boldsymbol{y}$ are coset leaders of cosets of $\mathcal{C}$. It is easy to verify that, if $\left\{\boldsymbol{x}+\mathcal{C},\boldsymbol{y}+\mathcal{C}\right\}\in E_{\mathcal{C}}$, then $\left\{\ddot{\boldsymbol{x}}+\widehat{\mathcal{C}},\ddot{\boldsymbol{y}}+\widehat{\mathcal{C}}\right\}, \left\{\widetilde{\boldsymbol{x}}+\widehat{\mathcal{C}},\widetilde{\boldsymbol{y}}+\widehat{\mathcal{C}}\right\}$ are edges of $\Gamma_{\widehat{\mathcal{C}}}$. Also, if two vertices $\boldsymbol{x}+\mathcal{C}$ and $\boldsymbol{y}+\mathcal{C}$ in $\Gamma_{\mathcal{C}}$ are not neighbors, then $\supp(\boldsymbol{x})\not\subset\supp(\boldsymbol{y})$  or $\wt(\boldsymbol{x})\neq\wt(\boldsymbol{y})-1$. Thus it can be proved that, the cosets of $\ddot{\boldsymbol{x}}, \ddot{\boldsymbol{y}}, \widetilde{\boldsymbol{x}}$ and  $\widetilde{\boldsymbol{y}}$ cannot be connected in the new graph. Moreover, for any $\boldsymbol{x}+\mathcal{C}\in V_{\mathcal{C}}$ we get that $\supp(\ddot{\boldsymbol{x}})\subset \supp(\widetilde{\boldsymbol{x}})$ and $\wt(\ddot{\boldsymbol{x}})=\wt(\widetilde{\boldsymbol{x}})-1$. Therefore,  $\left\{\ddot{\boldsymbol{x}}+\widehat{\mathcal{C}},\widetilde{\boldsymbol{x}}+\widehat{\mathcal{C}}\right\}$ is an edge of $\Gamma_{\widehat{\mathcal{C}}}$.
\end{proof}
Subsequently, we define a graph based on $\Gamma(\mathcal{C})$ that will be useful to characterize $\Gamma(\widehat{\mathcal{C}})$.
\begin{definition}
    Let $\Gamma(\mathcal{C})$ be the graph associated with the binary linear code $\mathcal{C}$. Consider the graph $\widehat{\Gamma}(\mathcal{C})$ whose vertex and edge sets are given by
    \begin{align*}
        V(\widehat{\Gamma}(\mathcal{C}))&=\left\{\widehat{\boldsymbol{x}}+\widehat{\mathcal{C}}:\boldsymbol{x}+\mathcal{C}\in V_{\mathcal{C}}\right\},\\
        E(\widehat{\Gamma}(\mathcal{C}))&=\left\{\left\{\widehat{\boldsymbol{x}}+\widehat{\mathcal{C}}, \widehat{\boldsymbol{y}}+\widehat{\mathcal{C}}\right\}:\left\{\boldsymbol{x}+\mathcal{C}, \boldsymbol{y}+\mathcal{C}\right\}\in E_{\mathcal{C}}\right\}.
    \end{align*} 
\end{definition}

We call the graph, $\widehat{\Gamma}(\mathcal{C})$ \emph{as the extended graph of $\Gamma(\mathcal{C})$}. From the preceding definition, we obtain the following result.

\begin{proposition}
\label{cor3.9}
  Let $\mathcal{C}$ be a binary linear code. Then $\Gamma(\mathcal{C})\cong \widehat{\Gamma}(\mathcal{C})$.
\end{proposition}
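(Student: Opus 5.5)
The map to use is $\psi\colon V_{\mathcal{C}}\to V(\widehat{\Gamma}(\mathcal{C}))$, $\psi(\boldsymbol{x}+\mathcal{C})=\widehat{\boldsymbol{x}}+\widehat{\mathcal{C}}$. Here $\boldsymbol{x}$ is a coset leader and $\widehat{\boldsymbol{x}}=x_1\ldots x_nx_{n+1}$ is its parity extension, with $x_{n+1}=\sum_{k=1}^n x_k$, following the convention in \eqref{eq:extending}. I will show that $\psi$ is well defined, bijective, and preserves edges and non-edges. Most of this is immediate from the definition of $\widehat{\Gamma}(\mathcal{C})$.

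\textbf{Well-definedness.} A coset may have several leaders, so I first check that $\psi$ does not depend on the choice. Suppose $\boldsymbol{x}+\mathcal{C}=\boldsymbol{x}'+\mathcal{C}$. Then $\boldsymbol{x}+\boldsymbol{x}'=\boldsymbol{c}\in\mathcal{C}$, and since parity extension is linear, $\widehat{\boldsymbol{x}}+\widehat{\boldsymbol{x}'}=\widehat{\boldsymbol{c}}\in\widehat{\mathcal{C}}$. Hence $\widehat{\boldsymbol{x}}+\widehat{\mathcal{C}}=\widehat{\boldsymbol{x}'}+\widehat{\mathcal{C}}$.

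\textbf{Bijectivity.} Surjectivity holds by definition of $V(\widehat{\Gamma}(\mathcal{C}))$. For injectivity, suppose $\boldsymbol{x}+\mathcal{C}\neq\boldsymbol{y}+\mathcal{C}$. The first proposition of this subsection (built on Theorem~\ref{teor:sindrome_extendido}) gives $\overline{\boldsymbol{x}}\notin\overline{\boldsymbol{y}}+\widehat{\mathcal{C}}$ for any choice of last coordinates, in particular for the parity extensions. Equivalently, the syndrome $\Syn(\widehat{\boldsymbol{x}})=(\Syn(\boldsymbol{x}),\ast)$ has first block $\Syn(\boldsymbol{x})\neq\Syn(\boldsymbol{y})$, so $\psi(\boldsymbol{x}+\mathcal{C})\neq\psi(\boldsymbol{y}+\mathcal{C})$. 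Thus $\psi$ is a bijection, and both graphs have $2^{n-k}$ vertices.

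\textbf{Edges.} By definition, $E(\widehat{\Gamma}(\mathcal{C}))$ is exactly the image under $\psi$ of $E_{\mathcal{C}}$. So $\{\boldsymbol{x}+\mathcal{C},\boldsymbol{y}+\mathcal{C}\}\in E_{\mathcal{C}}$ implies $\{\psi(\boldsymbol{x}+\mathcal{C}),\psi(\boldsymbol{y}+\mathcal{C})\}$ is an edge. Conversely, every edge of $\widehat{\Gamma}(\mathcal{C})$ has the form $\{\widehat{\boldsymbol{x}}+\widehat{\mathcal{C}},\widehat{\boldsymbol{y}}+\widehat{\mathcal{C}}\}$ for some $\{\boldsymbol{x}'+\mathcal{C},\boldsymbol{y}'+\mathcal{C}\}\in E_{\mathcal{C}}$. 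Injectivity then identifies the endpoints, giving $\boldsymbol{x}'+\mathcal{C}=\boldsymbol{x}+\mathcal{C}$ and $\boldsymbol{y}'+\mathcal{C}=\boldsymbol{y}+\mathcal{C}$. So non-edges are preserved, and $\psi$ is a graph isomorphism.

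\textbf{Main obstacle.} The only real subtlety is well-definedness and injectivity, since a coset can have several leaders with different parities. Linearity of the parity map resolves well-definedness, and the syndrome computation resolves injectivity. Note that $\widehat{\Gamma}(\mathcal{C})$ is not claimed to be a subgraph of $\Gamma(\widehat{\mathcal{C}})$ with induced edges. The edges are imposed by definition, so no weight or support comparison in $\mathbb{F}_2^{n+1}$ is needed.
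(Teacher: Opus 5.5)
Your proof is correct and follows the paper's argument. It uses the same map $\boldsymbol{x}+\mathcal{C}\mapsto\widehat{\boldsymbol{x}}+\widehat{\mathcal{C}}$, gets well-definedness from the linearity of the parity extension, and reads off surjectivity and the edge correspondence directly from the definition of $\widehat{\Gamma}(\mathcal{C})$. The one difference is that the paper deduces bijectivity from surjectivity plus the unproved equality $|V_{\mathcal{C}}|=|V(\widehat{\Gamma}(\mathcal{C}))|$, which is really equivalent to injectivity, whereas your syndrome argument proves injectivity directly and thereby makes the reflection of edges rigorous. (An even shorter route is to note that $\widehat{\boldsymbol{x}+\boldsymbol{y}}\in\widehat{\mathcal{C}}$ forces $\boldsymbol{x}+\boldsymbol{y}\in\mathcal{C}$ by deleting the last coordinate.)
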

\begin{proof}
    Consider the function $\varphi: V_{\mathcal{C}}\rightarrow V(\widehat{\Gamma}(\mathcal{C}))$ given by $\varphi(\boldsymbol{x}+\mathcal{C})=\widehat{\boldsymbol{x}}+\widehat{\mathcal{C}}$. Next, we show that $\varphi$ is a graph isomorphism:

    \begin{enumerate}
        \item $\varphi$ is well-defined. If $\boldsymbol{x}+\mathcal{C}=\boldsymbol{y}+\mathcal{C}$, then we must prove that $\varphi(\boldsymbol{x}+\mathcal{C})=\varphi(\boldsymbol{y}+\mathcal{C})$; that is, $\widehat{\boldsymbol{x}}+\widehat{\mathcal{C}}=\widehat{\boldsymbol{y}}+\widehat{\mathcal{C}}$ if and only if $\widehat{\boldsymbol{x}}-\widehat{\boldsymbol{y}}\in\widehat{\mathcal{C}}$. Since $\boldsymbol{x}-\boldsymbol{y}\in \mathcal{C}$ and $\widehat{\mathcal{C}}$ is a binary linear code, it follows that $\widehat{\boldsymbol{x}}-\widehat{\boldsymbol{y}}=\widehat{\boldsymbol{x}-\boldsymbol{y}}\in \widehat{\mathcal{C}}$. 

    \item $\varphi$ is bijective. Indeed, since $|V_{\mathcal{C}}|=|V(\widehat{\Gamma}(\mathcal{C}))|$, it suffices to show that $\varphi$ is surjective (or injective). From the definition, it is clear that $\varphi$ is surjective. 

    \item Finally, from the definition given for the edges of $\widehat{\Gamma}(\mathcal{C})$, it follows that $\varphi$ preserves adjacency of vertices since 
    \[\left\{\boldsymbol{x}+\mathcal{C},\boldsymbol{y}+\mathcal{C}\right\}\in E_{\mathcal{C}} \text{ if and only if } \left\{ \widehat{\boldsymbol{x}}+\widehat{\mathcal{C}},\widehat{\boldsymbol{y}}+\widehat{\mathcal{C}}\right\}\in E(\widehat{\Gamma}(\mathcal{C})).\qedhere\] 
    \end{enumerate}
\end{proof}

We now show that there are two subgraphs of $\Gamma(\widehat{\mathcal{C}})$ that are isomorphic to $\Gamma(\mathcal{C})$. To this end, consider the graphs $\Gamma_0(\widehat{\mathcal{C}})=(V_0,E_0)$ and $\Gamma_1(\widehat{\mathcal{C}})=(V_1,E_1)$, which are defined as follows:
\begin{align*}
    V_0&=\left\{\ddot{\boldsymbol{x}}+\widehat{\mathcal{C}}:\boldsymbol{x}+\mathcal{C}\in V_{\mathcal{C}}\right\},\quad E_0=\left\{\left\{\ddot{\boldsymbol{x}}+\widehat{\mathcal{C}},\ddot{\boldsymbol{y}}+\widehat{\mathcal{C}}\right\}:\left\{\boldsymbol{x}+\mathcal{C},\boldsymbol{y}+\mathcal{C}\right\}\in E_{\mathcal{C}}\right\},\\
    V_1&=\left\{\widetilde{\boldsymbol{x}}+\widehat{\mathcal{C}}:\boldsymbol{x}+\mathcal{C}\in V_{\mathcal{C}}\right\}, \quad E_1=\left\{\left\{\widetilde{\boldsymbol{x}}+\widehat{\mathcal{C}},\widetilde{\boldsymbol{y}}+\widehat{\mathcal{C}}\right\}:\left\{\boldsymbol{x}+\mathcal{C},\boldsymbol{y}+\mathcal{C}\right\}\in E_{\mathcal{C}}\right\}.
\end{align*}

\begin{corollary}
Let $\mathcal{C}$ be a binary linear code. Then, 
$$\Gamma(\mathcal{C})\cong \Gamma_0(\widehat{\mathcal{C}})\cong \Gamma_1(\widehat{\mathcal{C}}).$$
That is, $\Gamma(\widehat{\mathcal{C}})$ contains two subgraphs isomorphic to $\Gamma(\mathcal{C})$.
\end{corollary}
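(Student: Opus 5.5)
The plan is to mimic the proof of Proposition \ref{cor3.9}, replacing $\widehat{\boldsymbol{x}}$ by $\ddot{\boldsymbol{x}}$ (respectively $\widetilde{\boldsymbol{x}}$). Define
\[
\varphi_0(\boldsymbol{x}+\mathcal{C})=\ddot{\boldsymbol{x}}+\widehat{\mathcal{C}},\qquad \varphi_1(\boldsymbol{x}+\mathcal{C})=\widetilde{\boldsymbol{x}}+\widehat{\mathcal{C}},
\]
where $\boldsymbol{x}$ ranges over coset leaders of cosets of $\mathcal{C}$. I will show that $\varphi_0\colon V_{\mathcal{C}}\to V_0$ and $\varphi_1\colon V_{\mathcal{C}}\to V_1$ are graph isomorphisms onto $\Gamma_0(\widehat{\mathcal{C}})$ and $\Gamma_1(\widehat{\mathcal{C}})$. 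The composition $\varphi_1\circ\varphi_0^{-1}$ then gives $\Gamma_0(\widehat{\mathcal{C}})\cong\Gamma_1(\widehat{\mathcal{C}})$.

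\textbf{Well-definedness.} This is the step needing care, because $\ddot{\boldsymbol{x}}-\ddot{\boldsymbol{y}}=(\boldsymbol{x}-\boldsymbol{y})0$ need not lie in $\widehat{\mathcal{C}}$ when $\boldsymbol{x}-\boldsymbol{y}\in\mathcal{C}$ has odd weight. I will use the syndromes \eqref{eq:sindrome_extended}, namely $\Syn(\ddot{\boldsymbol{x}})=(\Syn(\boldsymbol{x}),0)$ and $\Syn(\widetilde{\boldsymbol{x}})=(\Syn(\boldsymbol{x}),1)$. These show that the coset $\ddot{\boldsymbol{x}}+\widehat{\mathcal{C}}$ is determined by the syndrome pair $(\Syn(\boldsymbol{x}),0)$. (Here I read the last entry of the syndrome as the parity coming from the row $(1,\dots,1,1)$ of Theorem \ref{teor:sindrome_extendido}, and I fix the convention that the vertex is the coset with syndrome $(\Syn(\boldsymbol{x}),0)$.) Consequently $\varphi_0$ depends only on $\Syn(\boldsymbol{x})$, hence only on the coset $\boldsymbol{x}+\mathcal{C}$, and likewise for $\varphi_1$.

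\textbf{Bijectivity.} Injectivity is immediate from the same syndrome formula: distinct syndromes of $\mathcal{C}$ give distinct syndromes of $\widehat{\mathcal{C}}$. Surjectivity onto $V_0$ and $V_1$ holds by the definition of these sets, and Corollary \ref{cor3.5} guarantees that these are genuine, pairwise distinct vertices of $\Gamma(\widehat{\mathcal{C}})$.

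\textbf{Adjacency.} Preservation of edges follows directly from the definitions of $E_0$ and $E_1$: an edge $\{\boldsymbol{x}+\mathcal{C},\boldsymbol{y}+\mathcal{C}\}\in E_{\mathcal{C}}$ is sent to an element of $E_0$ (or $E_1$), and every element of $E_0$ (or $E_1$) arises this way. Injectivity on vertices then makes this correspondence of edges bijective.

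\textbf{Subgraph claim.} It remains to justify that $\Gamma_0(\widehat{\mathcal{C}})$ and $\Gamma_1(\widehat{\mathcal{C}})$ are actually subgraphs of $\Gamma(\widehat{\mathcal{C}})$, i.e.\ that $E_0,E_1\subseteq E_{\widehat{\mathcal{C}}}$. Let $\{\boldsymbol{x}+\mathcal{C},\boldsymbol{y}+\mathcal{C}\}\in E_{\mathcal{C}}$ with leaders $\boldsymbol{x}\preceq\boldsymbol{y}$ and $\wt(\boldsymbol{x})=\wt(\boldsymbol{y})-1$. The proposition preceding Corollary \ref{cor3.5} says $\ddot{\boldsymbol{x}},\ddot{\boldsymbol{y}},\widetilde{\boldsymbol{x}},\widetilde{\boldsymbol{y}}$ are coset leaders. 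Appending the same final bit to both vectors keeps $\supp(\ddot{\boldsymbol{x}})\subseteq\supp(\ddot{\boldsymbol{y}})$ and $\supp(\widetilde{\boldsymbol{x}})\subseteq\supp(\widetilde{\boldsymbol{y}})$, and keeps the weight difference equal to one. Hence both pairs are edges of $\Gamma(\widehat{\mathcal{C}})$.

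I expect the main obstacle to be the well-definedness and the identification of the coset $\ddot{\boldsymbol{x}}+\widehat{\mathcal{C}}$ independently of the leader chosen; the syndrome description is the way around it.
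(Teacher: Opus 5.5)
Your outline (the maps $\varphi_0,\varphi_1$, bijectivity, edges carried over by definition, plus the check $E_0,E_1\subseteq E_{\widehat{\mathcal{C}}}$) is the same as the paper's. But the step you rightly single out, well-definedness, is settled with a formula that is false.

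The last row $(1,\dots,1,1)$ of $\widehat{H}$ gives $\Syn(\ddot{\boldsymbol{x}})=(\Syn(\boldsymbol{x}),\wt(\boldsymbol{x})\bmod 2)$ and $\Syn(\widetilde{\boldsymbol{x}})=(\Syn(\boldsymbol{x}),(\wt(\boldsymbol{x})+1)\bmod 2)$. The version in \eqref{eq:sindrome_extended}, which the paper's proof also invokes at this point, holds only when $\wt(\boldsymbol{x})$ is even. It cannot hold in general: it would give $(\boldsymbol{x}-\boldsymbol{y})0\in\widehat{\mathcal{C}}$ whenever $\boldsymbol{x}-\boldsymbol{y}\in\mathcal{C}$, which is exactly what your own opening caveat rules out.

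Your ``convention'' of sending $\boldsymbol{x}+\mathcal{C}$ to the coset with syndrome $(\Syn(\boldsymbol{x}),0)$ is well defined. However, it is the map $\boldsymbol{x}+\mathcal{C}\mapsto\widehat{\boldsymbol{x}}+\widehat{\mathcal{C}}$ of Proposition \ref{cor3.9}, not $\ddot{\boldsymbol{x}}+\widehat{\mathcal{C}}$, and for this map your final paragraph fails completely. Every vector in a coset with last syndrome entry $0$ has even weight, so the image is an independent set of $\Gamma(\widehat{\mathcal{C}})$, whose edges join cosets whose weights differ by $1$. For example, with $\mathcal{C}=\{000,100\}$ the edge between $000+\mathcal{C}$ and $010+\mathcal{C}$ would be sent to $0000+\widehat{\mathcal{C}}$ and $0101+\widehat{\mathcal{C}}=\{0101,1100\}$, of weights $0$ and $2$. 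This is precisely why Theorem \ref{teografoextendido} needs the duplicate graph. So the two halves of your proof are about different maps: for the syndrome-defined one the subgraph claim is false, and for $\ddot{\boldsymbol{x}}+\widehat{\mathcal{C}}$ your well-definedness argument does not apply.

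The missing idea is to define $\varphi_0,\varphi_1$ through coset leaders, which is how $V_0,V_1$ must be read, and to use the parity of the coset weight. If $\boldsymbol{x},\boldsymbol{x}'$ are leaders of the same coset, then $\wt(\boldsymbol{x})=\wt(\boldsymbol{x}')$. Hence the codeword $\boldsymbol{x}+\boldsymbol{x}'$ has weight $2\wt(\boldsymbol{x})-2\wt(\boldsymbol{x}\cap\boldsymbol{x}')$, which is even. Therefore $(\boldsymbol{x}+\boldsymbol{x}')0\in\widehat{\mathcal{C}}$, so $\ddot{\boldsymbol{x}}+\widehat{\mathcal{C}}=\ddot{\boldsymbol{x}}'+\widehat{\mathcal{C}}$ and $\widetilde{\boldsymbol{x}}+\widehat{\mathcal{C}}=\widetilde{\boldsymbol{x}}'+\widehat{\mathcal{C}}$. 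Equivalently, in the corrected syndrome the last entry is the parity of the coset weight, which is a coset invariant.

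The restriction to leaders is essential. If $\mathcal{C}$ has an odd-weight word $\boldsymbol{c}$, then $(\boldsymbol{x}+\boldsymbol{c})0\in\widetilde{\boldsymbol{x}}+\widehat{\mathcal{C}}$, so with arbitrary representatives $V_0$ would be all of $V_{\widehat{\mathcal{C}}}$. With this repair, the rest of your proof goes through unchanged. Injectivity still works because the first syndrome block already separates cosets. The verification of $E_0,E_1\subseteq E_{\widehat{\mathcal{C}}}$ still works by appending the same bit to leaders $\boldsymbol{x}\preceq\boldsymbol{y}$.
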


\begin{proof}
It is enough to consider the function $$\varphi_0:V_{\mathcal{C}}\rightarrow V_0\subseteq V_{\widehat{\mathcal{C}}}$$ given by $\varphi_0(\boldsymbol{x}+\mathcal{C})=\ddot{\boldsymbol{x}}+\widehat{\mathcal{C}}$.

The function $\varphi_0$ is well-defined since if $\boldsymbol{x}+\mathcal{C}=\boldsymbol{y}+\mathcal{C}$, we have that $\boldsymbol{x}-\boldsymbol{y}\in\mathcal{C}$. Thus \eqref{eq:sindrome_extended}  and \cite[Theorem 1.11.5]{Huffman2003} imply that $(\boldsymbol{x}-\boldsymbol{y})0=\ddot{\boldsymbol{x}}-\ddot{\boldsymbol{y}}\in \widehat{\mathcal{C}}$. Therefore, $\varphi_0(\boldsymbol{x}+\mathcal{C})=\varphi_0(\boldsymbol{y}+\mathcal{C})$. 

   Clearly, $\varphi_0$ is surjective and $|V|=|V_0|$; hence, $\varphi_0$ is bijective. Moreover, by the definition of $\varphi_0$ and $\Gamma_0(\widehat{\mathcal{C}})$, it holds that 
   $$\left\{\varphi_0(\boldsymbol{x}+\mathcal{C}),\varphi_0(\boldsymbol{y}+\mathcal{C})\right\}=\left\{\ddot{\boldsymbol{x}}+\widehat{\mathcal{C}},\ddot{\boldsymbol{y}}+\widehat{\mathcal{C}}\right\}\in E_0 \text{ if and only if }
\left\{\boldsymbol{x}+\mathcal{C},\boldsymbol{y}+\mathcal{C}\right\}\in E_{\mathcal{C}}.$$
   Analogously, using the function $\varphi_1(\boldsymbol{x}+\mathcal{C})=\widetilde{\boldsymbol{x}}+\widehat{\mathcal{C}}$, it can be proved that $\Gamma(\mathcal{C})\cong\Gamma_1(\widehat{\mathcal{C}}).$
\end{proof}
\begin{definition}
    For a graph $G$ with vertex set $V$, another graph can be constructed as follows: let $V'$ be a set such that $V'\cap V=\emptyset$, $|V'|=|V|$, and $f:V\rightarrow V'$ a bijective function. For $a\in V$, the value of $f$ at $a$ is denoted by $a'$; that is, $a'=f(a)$. Consider the graph $DG$ with vertex set $V'\cup V$, whose edges are formed as follows: $\left\{a,b\right\}\in E_G$ if and only if $\left\{a,b'\right\},\left\{a',b\right\}\in E_{DG}$. The graph $DG$ is called the duplicate of $G$.   
\end{definition}
\begin{theorem}\label{teografoextendido}
    Let $\mathcal{C}$ be a binary linear code such that no codeword satisfies \eqref{eq:star}. Then 
    $$\Gamma(\widehat{\mathcal{C}})\cong D\Gamma(\mathcal{C})\cup H,$$
   where $H=(V(H),E(H))$ is the graph with
		\begin{align*}
		V(H)&=V_{\widehat{\mathcal{C}}}, \\ 
    E(H)&=\left\{\left\{\ddot{\boldsymbol{x}}+\widehat{\mathcal{C}}, \widetilde{\boldsymbol{x}}+\widehat{\mathcal{C}}\right\}:\ \boldsymbol{x}+\mathcal{C}\in V_{\mathcal{C}}\right\}.
		\end{align*}
\end{theorem}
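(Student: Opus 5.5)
The plan is to build an explicit bijection between $V_{\widehat{\mathcal{C}}}$ and the vertex set $V_{\mathcal{C}}\cup V_{\mathcal{C}}'$ of $D\Gamma(\mathcal{C})$. Then we check, using the edge description of Theorem~\ref{teoExtendido}, that it carries $E_{\widehat{\mathcal{C}}}$ onto $E_{D\Gamma(\mathcal{C})}$ together with the image of $E(H)$. The subtlety is that the duplicate $DG$ joins $a$ to $b'$ and $a'$ to $b$, so its edges \emph{cross} between the two copies. By contrast, Theorem~\ref{teoExtendido} says the edges coming from $E_{\mathcal{C}}$ stay \emph{inside} the layers $V_0$ and $V_1$. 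So the identity-style map $\boldsymbol{x}+\mathcal{C}\mapsto \ddot{\boldsymbol{x}}+\widehat{\mathcal{C}}$, $(\boldsymbol{x}+\mathcal{C})'\mapsto \widetilde{\boldsymbol{x}}+\widehat{\mathcal{C}}$ will not work directly, and I will twist it using bipartiteness.

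\textbf{Step 1 (parity splitting).} By Theorem~\ref{maintheoremarticle1}(4), $\Gamma(\mathcal{C})$ is bipartite. Concretely, every edge joins cosets whose weights differ by exactly one, so the parts are the cosets of even weight ($V^{e}$) and of odd weight ($V^{o}$). The weight of a coset is that of any of its leaders, so this split is well defined. From this I conclude that $D\Gamma(\mathcal{C})$ is the disjoint union of two copies of $\Gamma(\mathcal{C})$:
\begin{itemize}
\item the first copy lives on $V^{e}\cup (V^{o})'$;
\item the second copy lives on $(V^{e})'\cup V^{o}$.
\end{itemize}
To see this, take an edge $\{a,b\}\in E_{\mathcal{C}}$ with $a\in V^e$ and $b\in V^o$. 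It yields exactly the edges $\{a,b'\}$ and $\{a',b\}$, one in each copy.

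\textbf{Step 2 (the map).} Define $\psi: V_{\mathcal{C}}\cup V_{\mathcal{C}}'\to V_{\widehat{\mathcal{C}}}$ as follows, where $\boldsymbol{x}$ is a coset leader:
\begin{itemize}
\item for $\boldsymbol{x}+\mathcal{C}\in V^e$, set $\psi(\boldsymbol{x}+\mathcal{C})=\ddot{\boldsymbol{x}}+\widehat{\mathcal{C}}$ and $\psi((\boldsymbol{x}+\mathcal{C})')=\widetilde{\boldsymbol{x}}+\widehat{\mathcal{C}}$;
\item for $\boldsymbol{x}+\mathcal{C}\in V^o$, set $\psi(\boldsymbol{x}+\mathcal{C})=\widetilde{\boldsymbol{x}}+\widehat{\mathcal{C}}$ and $\psi((\boldsymbol{x}+\mathcal{C})')=\ddot{\boldsymbol{x}}+\widehat{\mathcal{C}}$.
\end{itemize}
Well-definedness and bijectivity follow from Corollary~\ref{cor3.5} and the syndrome formula \eqref{eq:sindrome_extended}. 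Indeed, $\ddot{\boldsymbol{x}}+\widehat{\mathcal{C}}$ and $\widetilde{\boldsymbol{x}}+\widehat{\mathcal{C}}$ depend only on $\boldsymbol{x}+\mathcal{C}$, are distinct, and exhaust $V_{\widehat{\mathcal{C}}}$. Moreover $|V_{\widehat{\mathcal{C}}}|=2|V_{\mathcal{C}}|$ by Corollary~\ref{corcardinalextendido}.

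\textbf{Step 3 (edges).} Take $\{a,b\}\in E_{\mathcal{C}}$ with $a\in V^e$, $b\in V^o$, and write $a=\boldsymbol{x}+\mathcal{C}$, $b=\boldsymbol{y}+\mathcal{C}$. Then
\[
\psi(\{a,b'\})=\{\ddot{\boldsymbol{x}}+\widehat{\mathcal{C}},\ddot{\boldsymbol{y}}+\widehat{\mathcal{C}}\},\qquad \psi(\{a',b\})=\{\widetilde{\boldsymbol{x}}+\widehat{\mathcal{C}},\widetilde{\boldsymbol{y}}+\widehat{\mathcal{C}}\}.
\]
These are exactly the first family of edges in Theorem~\ref{teoExtendido}, and every edge of that family arises this way. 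For the second family, $\psi$ sends $\{a,a'\}$ to $\{\ddot{\boldsymbol{x}}+\widehat{\mathcal{C}},\widetilde{\boldsymbol{x}}+\widehat{\mathcal{C}}\}$ for every $a$, which is precisely $E(H)$. Since the hypothesis that no codeword satisfies \eqref{eq:star} is exactly what makes Theorem~\ref{teoExtendido} apply, $\psi$ is an isomorphism from $D\Gamma(\mathcal{C})$ plus the matching $\{a,a'\}$ onto $\Gamma(\widehat{\mathcal{C}})$. Under the identification $\psi$, this is $D\Gamma(\mathcal{C})\cup H$.

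The main obstacle is interpretive rather than computational. The union $D\Gamma(\mathcal{C})\cup H$ only makes sense after identifying $V_{\mathcal{C}}\cup V_{\mathcal{C}}'$ with $V_{\widehat{\mathcal{C}}}$, and the naive identification fails because of the crossing edges of $DG$. I expect the parity twist of Step 2 to be the key point. It should be stated explicitly that $H$ is transported along $\psi$, which is harmless because $\{a,a'\}$ is sent to $\{\ddot{\boldsymbol{x}},\widetilde{\boldsymbol{x}}\}$-type pairs regardless of parity.
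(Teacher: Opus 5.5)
Your proof is correct and is essentially the paper's argument. Your parity-twisted map $\psi$ is exactly the paper's identification of $V_{\mathcal{C}}$ with $\{\widehat{\boldsymbol{x}}+\widehat{\mathcal{C}}\}$ and of $V_{\mathcal{C}}'$ with $\{\widehat{\boldsymbol{x}}+\boldsymbol{0}1+\widehat{\mathcal{C}}\}$, where $\widehat{\boldsymbol{x}}$ is $\ddot{\boldsymbol{x}}$ or $\widetilde{\boldsymbol{x}}$ according as $\wt(\boldsymbol{x})$ is even or odd, and your edge check against Theorem~\ref{teoExtendido} is the same computation.

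One small caution on well-definedness: the correct syndrome is $\Syn(\ddot{\boldsymbol{x}})=(\Syn(\boldsymbol{x}),\wt(\boldsymbol{x}) \bmod 2)$, not $(\Syn(\boldsymbol{x}),0)$ as written in \eqref{eq:sindrome_extended}. So independence of the chosen leader should instead rest on the fact that all leaders of a coset have the same weight, hence differ by an even-weight codeword.
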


\begin{proof}
    First, we show that the edges of the duplicate graph of $\widehat{\Gamma}(\mathcal{C})$ are given by $\left\{\ddot{\boldsymbol{x}}+\widehat{\mathcal{C}},\ddot{\boldsymbol{y}}+\widehat{\mathcal{C}}\right\}$ and $\left\{\widetilde{\boldsymbol{x}}+\widehat{\mathcal{C}},\widetilde{\boldsymbol{y}}+\widehat{\mathcal{C}}\right\}$, where $\left\{\boldsymbol{x}+\mathcal{C},\boldsymbol{y}+\mathcal{C}\right\}\in E_{\mathcal{C}}$.

    Indeed, consider the sets 
    \begin{align*}
        V&=\left\{\widehat{\boldsymbol{x}}+\widehat{\mathcal{C}}:\ \boldsymbol{x}+\mathcal{C}\in V_{\mathcal{C}} \right\}, \text{ and}\\
        V'&=\left\{\widehat{\boldsymbol{x}}+\boldsymbol{0}1+\widehat{\mathcal{C}}:\boldsymbol{x}+\mathcal{C}\in V_{\mathcal{C}}\right\}.
    \end{align*}
    Notice that $V\cap V'=\emptyset$ and $|V|=|V'|$. Moreover,
    $$\widehat{\boldsymbol{x}}+\boldsymbol{0}1=\left\{ \begin{array}{lcc} 
    \ddot{\boldsymbol{x}}+\boldsymbol{0}1=\boldsymbol{x}0+\boldsymbol{0}1=\boldsymbol{x}1=\widetilde{\boldsymbol{x}}, &\text{if} & \wt(\boldsymbol{x}) \text{ is even,}
    \\
    \widetilde{\boldsymbol{x}}+\boldsymbol{0}1=\boldsymbol{x}1+\boldsymbol{0}1=\boldsymbol{x}0=\ddot{\boldsymbol{x}}, &\text{if} & \wt(\boldsymbol{x}) \text{ is odd.}
    \end{array} \right.$$

    Thus, we define the function $f:V(\widehat{\Gamma}(\mathcal{C}))\rightarrow V'$, where
    $$f(\widehat{\boldsymbol{x}}+\widehat{\mathcal{C}})=\left\{ \begin{array}{lcc} 
\widetilde{\boldsymbol{x}}+\widehat{\mathcal{C}}, & & \text{if } \wt(\boldsymbol{x}) \text{ is even,}
    \\
    \ddot{\boldsymbol{x}}+\widehat{\mathcal{C}}, & & \text{if } \wt(\boldsymbol{x}) \text{ is odd.}
    \end{array} \right.$$
    Since $f$ is surjective and $|V|=|V'|$, it follows that $f$ is bijective.
    Consequently, from the definition of the duplicate of a graph, it follows that 
    $$\left\{ \widehat{\boldsymbol{x}}+\widehat{\mathcal{C}},\widehat{\boldsymbol{y}}+\widehat{\mathcal{C}} \right\}\in E(\widehat{\Gamma}(\mathcal{C})) \text{ if and only if } \left\{\widehat{\boldsymbol{x}}+\widehat{\mathcal{C}},f(\widehat{\boldsymbol{y}}+\widehat{\mathcal{C}}) \right\}, \left\{ f(\widehat{\boldsymbol{x}}+\widehat{\mathcal{C}}),\widehat{\boldsymbol{y}}+\widehat{\mathcal{C}} \right\}\in E(D\widehat{\Gamma}(\mathcal{C})).$$

    Additionally, consider $\left\{\boldsymbol{x}+\mathcal{C},\boldsymbol{y}+\mathcal{C}\right\} \in E_{\mathcal{C}}$. Note that if $\boldsymbol{x}+\mathcal{C}$ has even weight, then $\widehat{\boldsymbol{x}}+\widehat{\mathcal{C}}=\ddot{\boldsymbol{x}}+\widehat{\mathcal{C}}$ and $\boldsymbol{y}$ has odd weight. Thus, $\widehat{\boldsymbol{y}}+\widehat{\mathcal{C}}=\widetilde{\boldsymbol{y}}+\widehat{\mathcal{C}}$, which implies that $f(\widehat{\boldsymbol{y}}+\widehat{\mathcal{C}})=\ddot{\boldsymbol{y}}+\widehat{\mathcal{C}}$. 

    If $\boldsymbol{x}+\mathcal{C}$ has odd weight, then $\widehat{\boldsymbol{x}}+\widehat{\mathcal{C}}=\widetilde{\boldsymbol{x}}+\widehat{\mathcal{C}}$ and $\boldsymbol{y}$ has even weight. Thus, $\widehat{\boldsymbol{y}}+\widehat{\mathcal{C}}=\ddot{\boldsymbol{y}}+\widehat{\mathcal{C}}$, which implies that $f(\widehat{\boldsymbol{y}}+\widehat{\mathcal{C}})=\widetilde{\boldsymbol{y}}+\widehat{\mathcal{C}}$. Hence,
    $$ E_{D\widehat{\Gamma}(\mathcal{C})}= \left\{ \left\{ 
\ddot{\boldsymbol{x}}+\widehat{\mathcal{C}},\ddot{\boldsymbol{y}}+\widehat{\mathcal{C}}\right\},\left\{ 
\widetilde{\boldsymbol{x}}+\widehat{\mathcal{C}},\widetilde{\boldsymbol{y}}+\widehat{\mathcal{C}}\right\}:\left\{\boldsymbol{x}+\mathcal{C},\boldsymbol{y}+\mathcal{C}\right\} \in E_{\mathcal{C}}\right\}.$$
Finally, the result follows from Theorem \ref{teoExtendido}.
\end{proof}

Obviously, for an even code, there is no codeword satisfying \eqref{eq:star}; hence, the next  result follows from Theorem \ref{teografoextendido}.

    \begin{corollary}
    Let $\mathcal{C}$ be an \emph{even} binary linear code.  
		Then $$\Gamma(\widehat{\mathcal{C}})\cong D\Gamma(\mathcal{C})\cup H,$$
    where $H=(V(H),E(H))$, such that $V(H)=V_{\widehat{\mathcal{C}}}$  and  
     \[E(H)=\left\{\left\{\ddot{\boldsymbol{x}}+\widehat{\mathcal{C}}, \widetilde{\boldsymbol{x}}+\widehat{\mathcal{C}}\right\}:\ \boldsymbol{x}+\mathcal{C}\in V_{\mathcal{C}}\right\}.\]
\end{corollary}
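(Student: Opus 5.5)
The plan is to reduce the corollary to Theorem \ref{teografoextendido}. The conclusion is identical to that theorem's, with the same graph $H$, so it suffices to show that an even binary linear code $\mathcal{C}$ has no codeword satisfying \eqref{eq:star}. Once that hypothesis is checked, Theorem \ref{teografoextendido} gives $\Gamma(\widehat{\mathcal{C}})\cong D\Gamma(\mathcal{C})\cup H$ directly.

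The key step is a parity argument. Fix $\boldsymbol{c}\in\mathcal{C}$ and any coset $\boldsymbol{x}+\mathcal{C}\in\mathfrak{cl}(\mathcal{C})$, with any representative (or coset leader) $\boldsymbol{x}$. Since $\mathcal{C}$ is even, $\wt(\boldsymbol{c})$ is even. The right-hand side $2\wt(\boldsymbol{x}\cap\boldsymbol{c})+1$ is odd for every $\boldsymbol{x}$. Hence the equality $\wt(\boldsymbol{c})=2\wt(\boldsymbol{x}\cap \boldsymbol{c})+1$ can never hold. This is independent of the choice of representative, so the hypothesis of Theorem \ref{teografoextendido} is satisfied.

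A useful side remark: for an even code, every $\boldsymbol{c}\in\mathcal{C}$ already has $\sum_k c_k=0$, so $\widehat{\mathcal{C}}=\{\boldsymbol{c}0:\boldsymbol{c}\in\mathcal{C}\}$. This confirms that the situation in case \ref{caso2}\ref{casonl} of the earlier proposition cannot produce new coset leaders. It also confirms that the vertices $\ddot{\boldsymbol{x}}+\widehat{\mathcal{C}}$ and $\widetilde{\boldsymbol{x}}+\widehat{\mathcal{C}}$ are as described in Theorem \ref{teoExtendido}.

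There is no real obstacle; the only point needing care is stating precisely that ``even'' means every codeword has even weight, so that the parity mismatch applies to all $\boldsymbol{c}\in\mathcal{C}$, including $\boldsymbol{c}=\boldsymbol{0}$, where the right-hand side is $1\neq 0$.
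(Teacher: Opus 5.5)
Your proposal is correct and follows the paper's route exactly: the paper also deduces the corollary from Theorem \ref{teografoextendido} by noting that an even code has no codeword satisfying \eqref{eq:star}. Your parity argument (even $\wt(\boldsymbol{c})$ versus the odd quantity $2\wt(\boldsymbol{x}\cap\boldsymbol{c})+1$) spells out the step the paper simply calls obvious.
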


Next, we present several examples applying Theorem \ref{teografoextendido} to construct the graphs of extended codes directly from the graph of the original code.

\begin{example} We use the following convention for the graph of the extended code (see Figure~\ref{fig:grafoextendido1}), given in the following examples: for $\mathcal{C}_i=\boldsymbol{x}_i+\mathcal{C}$, the 
two cosets $\ddot{\boldsymbol{x}}_i+\widehat{\mathcal{C}}$ and 
$\tilde{\boldsymbol{x}}_i+\widehat{\mathcal{C}}$ are represented by 
$\ddot{\mathcal{C}}_i$ and $\tilde{\mathcal{C}}_i$, respectively. Moreover, 
the black edges form the duplicate graph $D\widehat{\Gamma}(\mathcal{C})$, 
while the purple edges correspond to the graph $H$.  \label{ejemploextendido}
\begin{enumerate}[label=(\alph*)]

\item
\label{ejemplo2ext}
Let $\mathcal{C}=\{0000,1100,0011,1111 \}$ be a binary linear code; then, its extension is given by $\widehat{\mathcal{C}}=\{00000,11000,00110,11110\}$, which is a $[5,2,2]$-binary linear code. Thus, 
\begin{align*}
\mathfrak{cl}(\mathcal{C}) &= \{ 0000+\mathcal{C}, 1000+\mathcal{C}, 0010+\mathcal{C}, 1010+\mathcal{C} \}, \\
\mathfrak{cl}(\widehat{\mathcal{C}}) &= \Bigl\{ 00000+\widehat{\mathcal{C}}, 10000+\widehat{\mathcal{C}}, 00100+\widehat{\mathcal{C}}, 00001+\widehat{\mathcal{C}}, \\ 
&\phantom{{}= \Bigl\{} 10100+\widehat{\mathcal{C}}, 10001+\widehat{\mathcal{C}}, 00101+\widehat{\mathcal{C}}, 10101+\widehat{\mathcal{C}} \Bigr\}.
\end{align*}
 \begin{figure}[H]
    	\begin{subfigure}[b]{0.49\textwidth}
    	\centering
    	\includegraphics[scale=0.7]{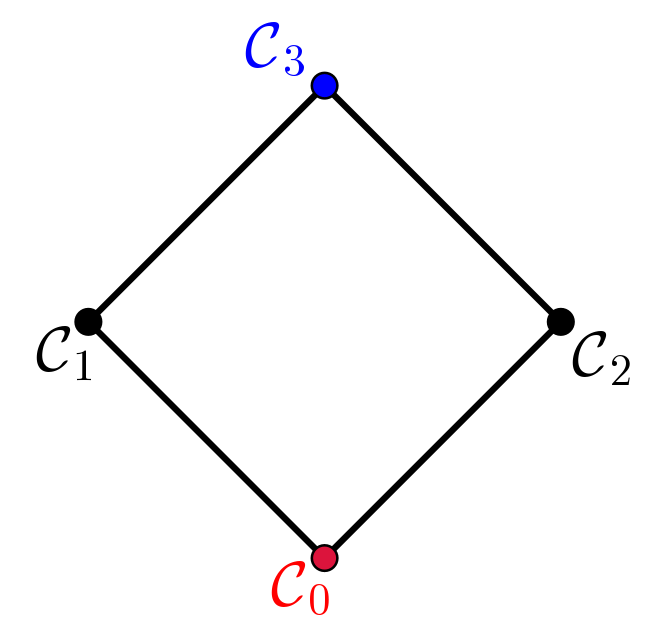} 
    	\caption{Graph of $\Gamma(\mathcal{C})$.}
    	\label{fig:grafoor1xCap4}
    	\end{subfigure}
    	\begin{subfigure}[b]{0.49\textwidth}
    		\centering
    		\includegraphics[scale=0.7]{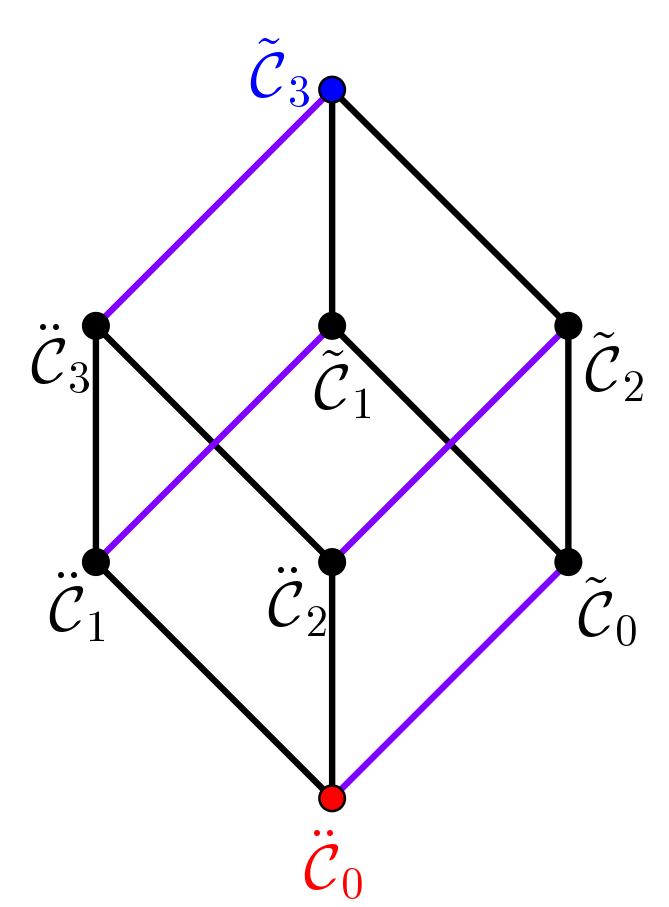}
    		\caption{Graph of $\Gamma(\widehat{\mathcal{C}})$.}
    		\label{fig:grafoextendido1}
    	\end{subfigure}
        \caption{Graph of a extended code.}
        \label{fig:ejemploextendido1}
    \end{figure}

\item\label{ejemplo1ext}
Let $\mathcal{C}=\{000, 100\}$ be a $[3,1,1]$-binary linear code. Then, its extension  is $\widehat{\mathcal{C}}=\{0000,1001\}$ is a $[4,1,2]$-binary linear code.   In this code, there exist codewords that satisfy condition~\eqref{eq:star}. However, the characterization given in Theorem~\ref{teografoextendido} still holds for the graph of the extended code; that is, no new edges are generated beyond those 
provided in that result, see Figure \ref{fig:ejemploextendido1}. Indeed, observe  that the cosets leaders of $\mathcal{C}$ and $\widehat{\mathcal{C}}$ are: $000,010,001,011$ and $0000,0100,0010,0001,0110,0101$,  $0011,0111$, respectively. 

\item
 \label{ejemplo3ext}
    For the $[5,2,2]$-binary linear code $\mathcal{C}=\{00000,00111,11000,11111 \}$, we have that $\widehat{\mathcal{C}}=\{000000,001111,110000,111111\}$ is a $[6,2,2]$-code. Note that the codeword $00111$ satisfies $\wt(\boldsymbol{c})=2\wt(\boldsymbol{x}\cap\boldsymbol{c})+1$ for $\boldsymbol{x}$ being a coset leader of: $00100+\mathcal{C}$, $00010+\mathcal{C}$, $00001+\mathcal{C}$, $10100+\mathcal{C}$, $10010+\mathcal{C}$, and $10001+\mathcal{C}$; 
whereas the codeword $11111$ satisfies condition~\eqref{eq:star} with the cosets $10100+\mathcal{C}$, $10010+\mathcal{C}$, and $10001+\mathcal{C}$.  However, in cosets of $\widehat{\mathcal{C}}$, new coset leaders appear that are not 
of the form $\ddot{\boldsymbol{x}}$ or $\tilde{\boldsymbol{x}}$, where $\boldsymbol{x}$ is a coset leader of $\mathcal{C}$. 		As in the previous example, the black edges represent the duplicate graph $D\widehat{\Gamma}(\mathcal{C})$ and the purple edges correspond to the graph $H$. Additionally, in this case, new edges arise, which have been colored in pink. These represent the new relations obtained from the new coset leaders  that appeared in the standard array of the extended code, see Figure \ref{fig:subgrafo1}.

\begin{figure}[H]
    	\begin{subfigure}{\textwidth}
    	\centering
    	\includegraphics[scale=0.7]{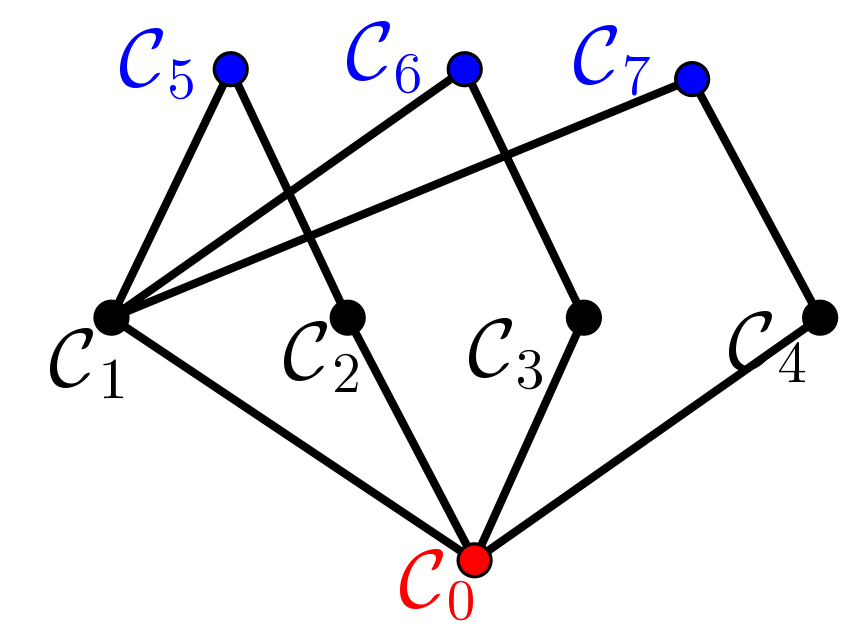} 
    	\caption{Graph $\Gamma(\mathcal{C})$.}
    	\label{fig:grafoor12}
    	\end{subfigure}
    
    	\begin{subfigure}{\textwidth}
    		\centering
    		\includegraphics[scale=0.8]{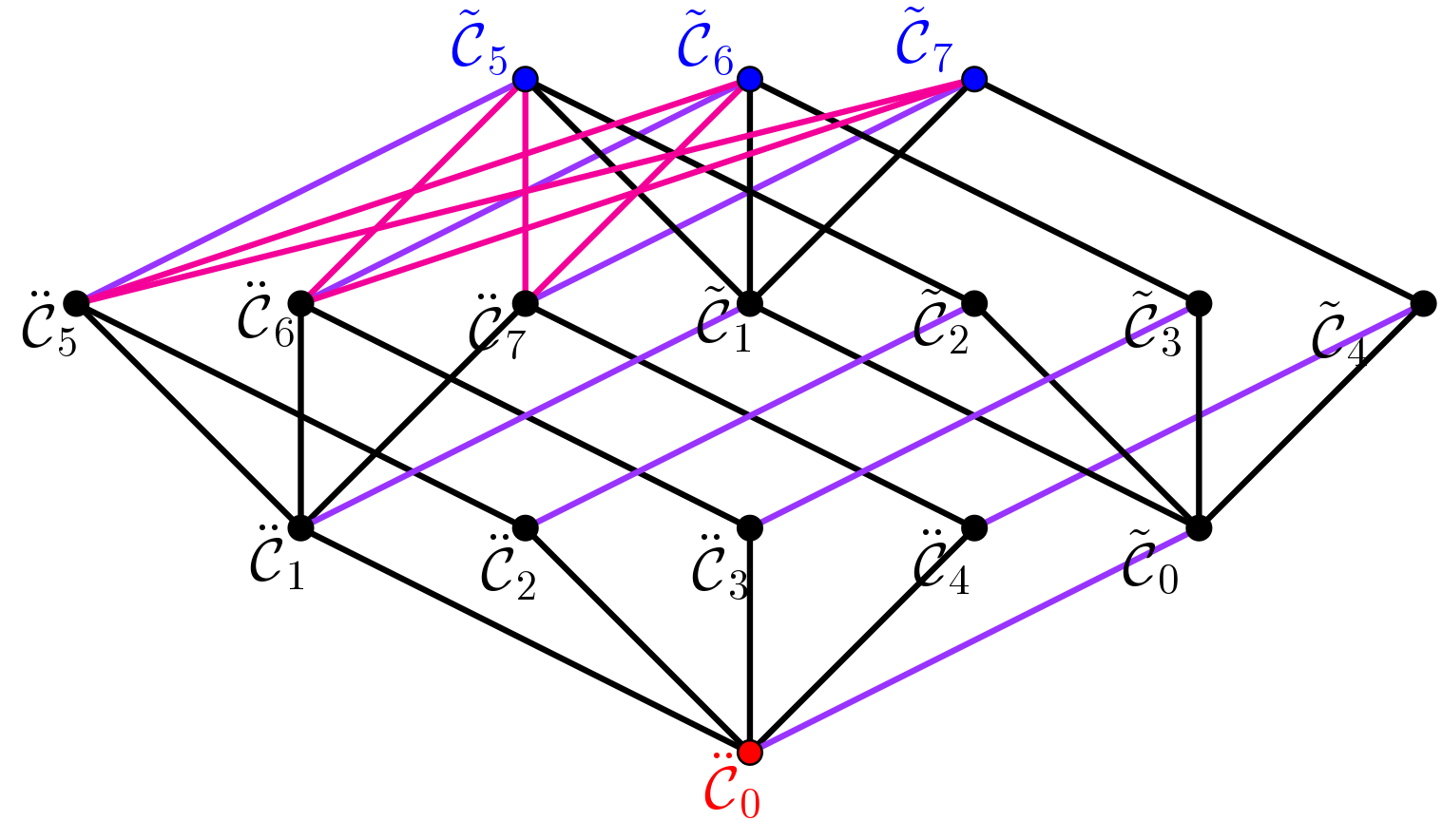}
    		\caption{Graph $\Gamma(\widehat{\mathcal{C}})$.}
    		\label{fig:subgrafo1}
    	\end{subfigure}
        \caption{Construction of $\Gamma(\widehat{\mathcal{C}})$ from $\Gamma(\mathcal{C})$. }
\label{fig:extendido3}
    \end{figure}

    \end{enumerate}
\end{example}

\subsection{Direct sum}
 \begin{definition}[Direct sum]
    Let $\mathcal{C}$ be an $[n_1,k_1,d_1]$-binary linear code and $\mathcal{D}$ be an $[n_2,k_2,d_2]$-binary linear code. The \emph{direct sum} of $\mathcal{C}$ and $\mathcal{D}$ is given by
    $$\mathcal{C} \oplus \mathcal{D}=\{\boldsymbol{c}\boldsymbol{d}:\ \boldsymbol{c}\in \mathcal{C},\ \boldsymbol{d}\in \mathcal{D}\},$$
where $\boldsymbol{c}\boldsymbol{d}$ means the concatenation of the vectors.
\end{definition}

\begin{example}\label{ejemplosumadirecta}
 \quad
\begin{enumerate}

		    \item \label{sumadirecta1} Let $\mathcal{C}=\{00,01\}$ and $\mathcal{D}=\{000,010,100,110\}$ be two binary linear codes. The direct sum of these two codes is a binary linear $[5,3,1]$-code given by $$\mathcal{C}\oplus\mathcal{D}=\left\{00000,00010,00100,00110,01000,01010,01100,01110\right\}.$$
				\item \label{sumadirecta3} Consider the binary linear codes $\mathcal{C}=\{00, 01\}$ and $\mathcal{D}=\{000,111\}$. Then $\mathcal{C}\oplus\mathcal{D}$ is a $[5,2,1]$-binary linear code such that 
    $$\mathcal{C}\oplus\mathcal{D}=\left\{00000,00111,01000,01111\right\}.$$

    \item \label{sumadirecta2} Let $\mathcal{C}=\{000,110,011,101\}$ and $\mathcal{D}=\{0000,1000,0001,1001\}$ be two binary linear codes. The direct sum of these two codes is a binary linear $[7,4,1]$-code given by 
    \begin{align*}
    \mathcal{C}\oplus\mathcal{D}&=\left\{0000000,0001000,0000001,0001001,1100000,1101000,1100001,1101001,\right.\\
    &\left.\quad 1010000,1011000,1010001,1011001,0110000,0111000,0110001,0111001\right\}.
    \end{align*}
\end{enumerate}
  \end{example} 
	The following theorem serves to characterize the parameters of the direct sum code associated with two given codes.
\begin{theorem}[{\cite[Theorem 6.1.5]{Ling2004}}]\label{teosumadirecta}
    If $\mathcal{C}$ is an $[n_1,k_1,d_1]$-binary linear code and $\mathcal{D}$ is an $[n_2,k_2,d_2]$-binary linear code, then $\mathcal{C} \oplus \mathcal{D}$ is an $\left[n_1+n_2,k_1+k_2,\min\{d_1,d_2\}\right]$-binary linear code.
\end{theorem}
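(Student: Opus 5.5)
We prove the three assertions of Theorem \ref{teosumadirecta} (length, dimension, minimum distance) separately, working directly from the definition of $\mathcal{C}\oplus\mathcal{D}$ as the set of concatenations $\boldsymbol{c}\boldsymbol{d}$.

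\emph{Length and linearity.} Every codeword $\boldsymbol{c}\boldsymbol{d}$ has $n_1+n_2$ coordinates, so $\mathcal{C}\oplus\mathcal{D}\subseteq\mathbb{F}_2^{n_1+n_2}$. Concatenation is compatible with addition: $\boldsymbol{c}\boldsymbol{d}+\boldsymbol{c}'\boldsymbol{d}'=(\boldsymbol{c}+\boldsymbol{c}')(\boldsymbol{d}+\boldsymbol{d}')$. Since $\mathcal{C}$ and $\mathcal{D}$ are closed under addition and contain $\boldsymbol{0}$, the set $\mathcal{C}\oplus\mathcal{D}$ is a subspace, hence a binary linear code of length $n_1+n_2$.

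\emph{Dimension.} Let $G_1$ and $G_2$ be generator matrices of $\mathcal{C}$ and $\mathcal{D}$, as in \eqref{generator_matrix}. We claim that the block-diagonal matrix
$$G=\begin{pmatrix} G_1 & 0\\ 0 & G_2\end{pmatrix}$$
is a generator matrix of $\mathcal{C}\oplus\mathcal{D}$. For $(\boldsymbol{u},\boldsymbol{v})\in\mathbb{F}_2^{k_1}\times\mathbb{F}_2^{k_2}$ we have $(\boldsymbol{u},\boldsymbol{v})G=(\boldsymbol{u}G_1)(\boldsymbol{v}G_2)$, so the row space of $G$ is exactly $\mathcal{C}\oplus\mathcal{D}$. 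The rows of $G$ are linearly independent: if $(\boldsymbol{u},\boldsymbol{v})G=\boldsymbol{0}$, then $\boldsymbol{u}G_1=\boldsymbol{0}$ and $\boldsymbol{v}G_2=\boldsymbol{0}$, which forces $\boldsymbol{u}=\boldsymbol{0}$ and $\boldsymbol{v}=\boldsymbol{0}$ because $G_1$ and $G_2$ have full row rank. Hence the dimension is $k_1+k_2$. Alternatively, the map $(\boldsymbol{c},\boldsymbol{d})\mapsto\boldsymbol{c}\boldsymbol{d}$ is a bijection from $\mathcal{C}\times\mathcal{D}$ onto $\mathcal{C}\oplus\mathcal{D}$, so the code has $2^{k_1}2^{k_2}$ elements.

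\emph{Minimum distance.} This is the only step needing a short argument, and it rests on additivity of weight: $\wt(\boldsymbol{c}\boldsymbol{d})=\wt(\boldsymbol{c})+\wt(\boldsymbol{d})$.

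First, the lower bound. Take a nonzero codeword $\boldsymbol{c}\boldsymbol{d}$. Then $\boldsymbol{c}\neq\boldsymbol{0}$ or $\boldsymbol{d}\neq\boldsymbol{0}$. In the first case $\wt(\boldsymbol{c}\boldsymbol{d})\geq\wt(\boldsymbol{c})\geq d_1$; in the second, $\wt(\boldsymbol{c}\boldsymbol{d})\geq d_2$. Either way the weight is at least $\min\{d_1,d_2\}$.

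Second, the upper bound. Choose $\boldsymbol{c}_0\in\mathcal{C}$ of weight $d_1$ and $\boldsymbol{d}_0\in\mathcal{D}$ of weight $d_2$. The codewords $\boldsymbol{c}_0\boldsymbol{0}$ and $\boldsymbol{0}\boldsymbol{d}_0$ lie in $\mathcal{C}\oplus\mathcal{D}$ and have weights $d_1$ and $d_2$, so the minimum is attained.

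Combining the two bounds, the minimum distance of $\mathcal{C}\oplus\mathcal{D}$ is exactly $\min\{d_1,d_2\}$, which completes the proof. (As usual, this tacitly assumes $k_1,k_2\geq1$, so that $d_1$ and $d_2$ are defined.)
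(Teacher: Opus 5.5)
Your proof is correct and complete. The paper gives no proof of its own here and simply imports the result from \cite[Theorem 6.1.5]{Ling2004}; your argument is the standard one: a block-diagonal generator matrix gives the dimension, and additivity $\wt(\boldsymbol{c}\boldsymbol{d})=\wt(\boldsymbol{c})+\wt(\boldsymbol{d})$ gives both bounds on the minimum distance.
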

		
Consequently, from Theorem \ref{maintheoremarticle1} and Theorem \ref{teosumadirecta}, we get the next characterization of the order of $\Gamma(\mathcal{C}\oplus\mathcal{D})$.  
\begin{corollary}
\label{corcardinalsumadirecta}
    Let $\mathcal{C}$ be an $[n_1,k_1,d_1]$-binary linear code and $\mathcal{D}$ be an $[n_2,k_2,d_2]$-binary linear code. Then, $|V_{\mathcal{C}\oplus\mathcal{D}}|=2^{(n_1+n_2)-(k_1+k_2)}$. 
\end{corollary}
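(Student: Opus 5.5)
This is an immediate consequence of two earlier results, in the same way Corollary \ref{corcardinalextendido} was obtained. First, Theorem \ref{teosumadirecta} gives the parameters of the direct sum. If $\mathcal{C}$ is an $[n_1,k_1,d_1]$-code and $\mathcal{D}$ is an $[n_2,k_2,d_2]$-code, then $\mathcal{C}\oplus\mathcal{D}$ is a binary linear code of length $n_1+n_2$ and dimension $k_1+k_2$. The minimum distance $\min\{d_1,d_2\}$ plays no role here.

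Second, I apply Theorem \ref{maintheoremarticle1}\emph{\ref{thmmain1}} to the $[n_1+n_2,k_1+k_2]$-code $\mathcal{C}\oplus\mathcal{D}$. This yields $|V_{\mathcal{C}\oplus\mathcal{D}}|=2^{(n_1+n_2)-(k_1+k_2)}$ directly. The fact behind it is simply that the vertices are the cosets, and there are $|\mathbb{F}_2^{n}|/|\mathcal{C}\oplus\mathcal{D}|$ of them.

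To be self-contained, I would also check the dimension claim directly rather than rely only on the citation. The map $\mathcal{C}\times\mathcal{D}\to\mathcal{C}\oplus\mathcal{D}$, $(\boldsymbol{c},\boldsymbol{d})\mapsto\boldsymbol{c}\boldsymbol{d}$, is linear and injective, since concatenation determines both parts. So $|\mathcal{C}\oplus\mathcal{D}|=2^{k_1}2^{k_2}$, and the index in $\mathbb{F}_2^{n_1+n_2}$ is $2^{n_1+n_2-k_1-k_2}$.

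As a remark, this equals $|V_{\mathcal{C}}|\cdot|V_{\mathcal{D}}|$, which anticipates a product description of $\Gamma(\mathcal{C}\oplus\mathcal{D})$. There is no real obstacle; the only care needed is to confirm that the direct sum is indeed a subspace, so that Theorem \ref{maintheoremarticle1} applies.
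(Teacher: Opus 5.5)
Your proposal is correct and follows the paper's route: the paper obtains this corollary by combining Theorem~\ref{teosumadirecta} (the direct sum has length $n_1+n_2$ and dimension $k_1+k_2$) with Theorem~\ref{maintheoremarticle1}\emph{\ref{thmmain1}}. Your extra count via the bijection $(\boldsymbol{c},\boldsymbol{d})\mapsto\boldsymbol{c}\boldsymbol{d}$ is a valid self-contained check of the dimension and is not needed.
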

In the next step, we study the structure of the cosets leaders of $\mathcal{C}\oplus\mathcal{D}$.
\begin{proposition}\label{teor:lideres_suma_directa}
    Let $\mathcal{C}$ be an $[n_1,k_1,d_1]$-binary linear code and $\mathcal{D}$ be an $[n_2,k_2,d_2]$-binary linear code. Take $\boldsymbol{x}_1,\boldsymbol{y}_1\in \mathbb{F}_2^{n_1}$ and $\boldsymbol{x}_2,\boldsymbol{y}_2\in \mathbb{F}_2^{n_2}$. If $\boldsymbol{x}_1\not\in\boldsymbol{y}_1+\mathcal{C}$ or $\boldsymbol{x}_2\not\in\boldsymbol{y}_2+\mathcal{D}$, then $\boldsymbol{x}_1\boldsymbol{x}_2\not\in \boldsymbol{y}_1\boldsymbol{y}_2+\mathcal{C}\oplus\mathcal{D}$. Moreover, if $\boldsymbol{x}_1$ and $\boldsymbol{x}_2$ are coset leaders of $\boldsymbol{x}_1+\mathcal{C}$ and $\boldsymbol{x}_2+\mathcal{D}$ respectively, then $\boldsymbol{x}_1\boldsymbol{x}_2$ is a coset leader of $\boldsymbol{x}_1\boldsymbol{x}_2+\mathcal{C}\oplus\mathcal{D}$. 
\end{proposition}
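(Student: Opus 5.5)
The plan is to handle the two assertions separately, using the block structure of $\mathcal{C}\oplus\mathcal{D}$ directly rather than syndromes, although a syndrome argument would also work.

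\emph{First assertion.} I would argue by contraposition. Suppose $\boldsymbol{x}_1\boldsymbol{x}_2\in \boldsymbol{y}_1\boldsymbol{y}_2+\mathcal{C}\oplus\mathcal{D}$. Then $\boldsymbol{x}_1\boldsymbol{x}_2-\boldsymbol{y}_1\boldsymbol{y}_2=(\boldsymbol{x}_1-\boldsymbol{y}_1)(\boldsymbol{x}_2-\boldsymbol{y}_2)$ lies in $\mathcal{C}\oplus\mathcal{D}$. By the definition of the direct sum, this vector equals $\boldsymbol{c}\boldsymbol{d}$ for some $\boldsymbol{c}\in\mathcal{C}$ and $\boldsymbol{d}\in\mathcal{D}$. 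Comparing the first $n_1$ coordinates and the last $n_2$ coordinates gives $\boldsymbol{x}_1-\boldsymbol{y}_1=\boldsymbol{c}\in\mathcal{C}$ and $\boldsymbol{x}_2-\boldsymbol{y}_2=\boldsymbol{d}\in\mathcal{D}$. This contradicts the hypothesis. Equivalently, one can take the block-diagonal parity-check matrix $\operatorname{diag}(H_{\mathcal{C}},H_{\mathcal{D}})$, so that $\Syn(\boldsymbol{x}_1\boldsymbol{x}_2)=(\Syn(\boldsymbol{x}_1),\Syn(\boldsymbol{x}_2))$, and then apply \cite[Theorem 1.11.5]{Huffman2003} exactly as in the proof for the extended code.

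\emph{Second assertion.} The key identity is additivity of weight over concatenation: $\wt(\boldsymbol{u}\boldsymbol{v})=\wt(\boldsymbol{u})+\wt(\boldsymbol{v})$. Every element of $\mathcal{C}\oplus\mathcal{D}$ has the form $\boldsymbol{c}\boldsymbol{d}$, so every element of the coset $\boldsymbol{x}_1\boldsymbol{x}_2+\mathcal{C}\oplus\mathcal{D}$ has the form $(\boldsymbol{x}_1+\boldsymbol{c})(\boldsymbol{x}_2+\boldsymbol{d})$. Because $\boldsymbol{x}_1$ and $\boldsymbol{x}_2$ are coset leaders, $\wt(\boldsymbol{x}_1)\leq\wt(\boldsymbol{x}_1+\boldsymbol{c})$ and $\wt(\boldsymbol{x}_2)\leq\wt(\boldsymbol{x}_2+\boldsymbol{d})$. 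Adding these inequalities gives $\wt(\boldsymbol{x}_1\boldsymbol{x}_2)\leq\wt((\boldsymbol{x}_1+\boldsymbol{c})(\boldsymbol{x}_2+\boldsymbol{d}))$. Hence $\boldsymbol{x}_1\boldsymbol{x}_2$ has minimum weight in its coset and is a coset leader.

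Neither step is a real obstacle. The only point needing care is that every codeword of $\mathcal{C}\oplus\mathcal{D}$ splits uniquely into a $\mathcal{C}$-block and a $\mathcal{D}$-block, which is immediate from the definition. As a natural follow-up I would also note the following counting argument. The first assertion gives $2^{n_1-k_1}\cdot 2^{n_2-k_2}$ pairwise distinct cosets of the form $\boldsymbol{x}_1\boldsymbol{x}_2+\mathcal{C}\oplus\mathcal{D}$. By Corollary \ref{corcardinalsumadirecta}, this is the total number of cosets, so these are all of them. That mirrors Corollary \ref{cor3.5} and sets up the description of $\Gamma(\mathcal{C}\oplus\mathcal{D})$.
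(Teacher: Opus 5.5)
Your proof is correct and follows the paper's argument. The second assertion is exactly the paper's proof (add the two coset-leader inequalities and use $\wt(\boldsymbol{z}_1\boldsymbol{z}_2)=\wt(\boldsymbol{z}_1)+\wt(\boldsymbol{z}_2)$); you also state explicitly what the paper leaves implicit, that every vector of $\boldsymbol{x}_1\boldsymbol{x}_2+\mathcal{C}\oplus\mathcal{D}$ has the form $(\boldsymbol{x}_1+\boldsymbol{c})(\boldsymbol{x}_2+\boldsymbol{d})$. For the first assertion the paper uses the block-diagonal parity-check matrix and the syndrome criterion, which is your stated alternative, while your main argument of splitting $(\boldsymbol{x}_1-\boldsymbol{y}_1)(\boldsymbol{x}_2-\boldsymbol{y}_2)=\boldsymbol{c}\boldsymbol{d}$ blockwise reaches the same conclusion directly from the definition, without the syndrome--coset correspondence.
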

\begin{proof}
    By \cite[Section 1.5.4]{Huffman2003}, a parity-check matrix of $\mathcal{C}\oplus\mathcal{D}$ is given by 
    \begin{equation*}
        \left(\begin{array}{cc}
  H_1& O \\
  O & H_2
\end{array}\right),
    \end{equation*}
where $O$ stands for the zero matrix and $H_1$ and $H_2$ are parity-check matrices of $\mathcal{C}$ and $\mathcal{D}$, respectively. Note that
the two zero matrices have different sizes.
 
In this manner, the syndrome of $\boldsymbol{x}_1\boldsymbol{x}_2\in\mathbb{F}_2^{n_1+n_2}$ is given by
    \begin{align*}
        \Syn(\boldsymbol{x}_1\boldsymbol{x}_2)&=\boldsymbol{x}_1\boldsymbol{x}_2 \left(
  \begin{array}{cc}
  H_1& O \\
  O & H_2
\end{array} \right)^T\\
&= \boldsymbol{x}_1\boldsymbol{x}_2 \left(
  \begin{array}{cc}
  H_1^T& O \\
  O & H_2^T
\end{array} \right)\\
&=(\Syn(\boldsymbol{x}_1),\Syn(\boldsymbol{x}_2)).
    \end{align*}
Similarly, $\Syn(\boldsymbol{y}_1\boldsymbol{y}_2)= (\Syn(\boldsymbol{y}_1),\Syn(\boldsymbol{y}_2))$. However, by hypothesis, $\boldsymbol{x}_1\not \in \boldsymbol{y}_1+\mathcal{C} $ or $\boldsymbol{x}_2\not \in \boldsymbol{y}_2+\mathcal{D}$; that is, $\Syn( \boldsymbol{x}_1)\neq \Syn(\boldsymbol{y}_1)$ or $\Syn( \boldsymbol{x}_2)\neq \Syn(\boldsymbol{y}_2)$. Therefore, $\Syn( \boldsymbol{x}_1\boldsymbol{x}_2)\neq \Syn (\boldsymbol{y}_1\boldsymbol{y}_2)$, which implies that $\boldsymbol{x}_1\boldsymbol{x}_2\not\in \boldsymbol{y}_1\boldsymbol{y}_2+\mathcal{C}\oplus\mathcal{D}$.

Now, note that $\boldsymbol{x}_1\boldsymbol{x}_2$ is a coset leader of the coset $\boldsymbol{x}_1\boldsymbol{x}_2+\mathcal{C}\oplus\mathcal{D}$ since, by hypothesis, 
\begin{align}\label{eq:des_peso_suma_directa}
\begin{split}
    \wt(\boldsymbol{x}_1)&\leq \wt(\boldsymbol{x}_1+\boldsymbol{c}_1), \quad \forall \boldsymbol{c}_1\in \mathcal{C}\\
    \wt(\boldsymbol{x}_2)&\leq \wt(\boldsymbol{x}_2+\boldsymbol{c}_2), \quad \forall \boldsymbol{c}_2\in \mathcal{D}, 
\end{split}
\end{align}
when $\boldsymbol{x}_1+\boldsymbol{c}_1\notin \mathfrak{cl}(\mathcal{C})$ and $\boldsymbol{x}_2+\boldsymbol{c}_2\notin \mathfrak{cl}(\mathcal{D})$.

By inequalities \eqref{eq:des_peso_suma_directa}, and given that $\wt(\boldsymbol{z}_1\boldsymbol{z}_2)=\wt(\boldsymbol{z}_1)+\wt(\boldsymbol{z}_2)$ for any $\boldsymbol{z}_1\in\mathbb{F}_2^{n_1}$ and $\boldsymbol{z}_2\in\mathbb{F}_2^{n_2}$, it follows that 
$$\wt(\boldsymbol{x}_1\boldsymbol{x}_2)\leq \wt((\boldsymbol{x}_1+\boldsymbol{c}_1)(\boldsymbol{x}_2+\boldsymbol{c}_2)).$$ 
Thus, $\boldsymbol{x}_1\boldsymbol{x}_2$ is a coset leader of the coset $\boldsymbol{x}_1\boldsymbol{x}_2+\mathcal{C}\oplus\mathcal{D}$. 
\end{proof}

		\begin{proposition}
The cosets of $\mathcal{C}\oplus\mathcal{D}$ are given by
    $$V_{\mathcal{C}\oplus\mathcal{D}}=\Bigl\{\boldsymbol{x}_1\boldsymbol{x}_2+\mathcal{C}\oplus\mathcal{D}:\boldsymbol{x}_1+\mathcal{C}\in V_{\mathcal{C}} \text{ and }\boldsymbol{x}_2+\mathcal{D}\in V_{\mathcal{D}}\Bigr\}.$$
\end{proposition}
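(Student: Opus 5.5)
The plan is a counting argument combined with the injectivity part of Proposition \ref{teor:lideres_suma_directa}. Write $\mathcal{S}$ for the set on the right-hand side. Every element of $\mathcal{S}$ is a coset of $\mathcal{C}\oplus\mathcal{D}$, so $\mathcal{S}\subseteq V_{\mathcal{C}\oplus\mathcal{D}}$. It therefore suffices to show that $|\mathcal{S}|=|V_{\mathcal{C}\oplus\mathcal{D}}|$.

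First, I would check that the map
$$\psi: V_{\mathcal{C}}\times V_{\mathcal{D}}\to V_{\mathcal{C}\oplus\mathcal{D}},\qquad \psi(\boldsymbol{x}_1+\mathcal{C},\boldsymbol{x}_2+\mathcal{D})=\boldsymbol{x}_1\boldsymbol{x}_2+\mathcal{C}\oplus\mathcal{D}$$
is well defined. If $\boldsymbol{x}_1+\mathcal{C}=\boldsymbol{y}_1+\mathcal{C}$ and $\boldsymbol{x}_2+\mathcal{D}=\boldsymbol{y}_2+\mathcal{D}$, then $\boldsymbol{x}_1\boldsymbol{x}_2-\boldsymbol{y}_1\boldsymbol{y}_2=(\boldsymbol{x}_1-\boldsymbol{y}_1)(\boldsymbol{x}_2-\boldsymbol{y}_2)\in\mathcal{C}\oplus\mathcal{D}$ by the definition of the direct sum. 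Equivalently, the syndrome identity $\Syn(\boldsymbol{x}_1\boldsymbol{x}_2)=(\Syn(\boldsymbol{x}_1),\Syn(\boldsymbol{x}_2))$ from the proof of Proposition \ref{teor:lideres_suma_directa} shows that the image coset depends only on the pair of syndromes.

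Next, injectivity of $\psi$ is exactly the first assertion of Proposition \ref{teor:lideres_suma_directa}. If the pairs of cosets differ, then $\boldsymbol{x}_1\notin\boldsymbol{y}_1+\mathcal{C}$ or $\boldsymbol{x}_2\notin\boldsymbol{y}_2+\mathcal{D}$, and hence $\boldsymbol{x}_1\boldsymbol{x}_2\notin\boldsymbol{y}_1\boldsymbol{y}_2+\mathcal{C}\oplus\mathcal{D}$. Thus
$$|\mathcal{S}|=|V_{\mathcal{C}}|\cdot|V_{\mathcal{D}}|=2^{n_1-k_1}\cdot 2^{n_2-k_2}$$
by Theorem \ref{maintheoremarticle1}\ref{thmmain1}. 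This equals $|V_{\mathcal{C}\oplus\mathcal{D}}|$ by Corollary \ref{corcardinalsumadirecta}, so $\psi$ is a bijection and $\mathcal{S}=V_{\mathcal{C}\oplus\mathcal{D}}$.

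Finally, I would note that by the second part of Proposition \ref{teor:lideres_suma_directa} one may take $\boldsymbol{x}_1,\boldsymbol{x}_2$ to be coset leaders, so that $\boldsymbol{x}_1\boldsymbol{x}_2$ is a coset leader of its coset. This is not needed for the set equality, but it is what makes the description useful for the edge characterization that follows. There is no real obstacle here. The only delicate point is making sure that the injectivity statement is applied to distinct pairs of cosets rather than to distinct representatives, which the syndrome formulation handles cleanly.
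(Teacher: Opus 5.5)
Your argument is correct and is essentially the paper's. The paper likewise combines the distinctness assertion of Proposition~\ref{teor:lideres_suma_directa} with the count $2^{n_1-k_1}\cdot 2^{n_2-k_2}=2^{(n_1+n_2)-(k_1+k_2)}$ via Theorem~\ref{maintheoremarticle1} and Theorem~\ref{teosumadirecta}. You additionally make the well-definedness of $\psi$ explicit, which the paper leaves implicit.

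A side remark: the inclusion $V_{\mathcal{C}\oplus\mathcal{D}}\subseteq\mathcal{S}$ is immediate, since every vector of $\mathbb{F}_2^{n_1+n_2}$ is a concatenation $\boldsymbol{x}_1\boldsymbol{x}_2$. So the counting is only needed to see that the parametrization by $V_{\mathcal{C}}\times V_{\mathcal{D}}$ is bijective.
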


\begin{proof}
    By Proposition \ref{teor:lideres_suma_directa}, each $\boldsymbol{x}_1\boldsymbol{x}_2+\mathcal{C}\oplus\mathcal{D}$ generates a coset in $\mathcal{C}\oplus\mathcal{D}$ for each $\boldsymbol{x}_1\in \mathbb{F}_2^{n_1}$ and $\boldsymbol{x}_2\in \mathbb{F}_2^{n_2}$. Since it is known that $|\mathfrak{cl}(\mathcal{C})|=2^{n_1-k_1}$ and $|\mathfrak{cl}(\mathcal{D})|=2^{n_2-k_2}$, then $|\mathfrak{cl}(\mathcal{C}\oplus\mathcal{D})|=2^{n_1-k_1}2^{n_2-k_2}=2^{(n_1+n_2)-(k_1+k_2)}$ cosets. Thus, the result follows from Theorem \ref{maintheoremarticle1} and Theorem \ref{teosumadirecta}. 
\end{proof}
The following definition will be used to characterize the graph of the direct sum code of two given codes.

\begin{definition}
The cartesian product of two simple graphs $H$ and $K$ is the graph $G=H\times K$ with $V(G)=V(H)\times V(K)$, in which two vertices $\{h,k\}$ and $\{h',k'\}$ are adjacent if and only if
\begin{enumerate}
    \item $k=k'$ and $\{h,h'\}\in E(H)$, or
    \item $h=h'$ and $\{k,k'\}\in E(K)$.
\end{enumerate}
\end{definition}

 \begin{theorem}\label{teor:grafo_sumadirecta}
    Let $\mathcal{C}$ be an $[n_1,k_1,d_1]$-binary linear code, $\mathcal{D}$ be an $[n_2,k_2,d_2]$-binary linear code, and $\Gamma(\mathcal{C})$ and $\Gamma(\mathcal{D})$ its associated Hasse diagram, respectively. Then 
$$\Gamma(\mathcal{C}\oplus\mathcal{D})\cong\Gamma(\mathcal{C})\times\Gamma(\mathcal{D}),$$
where
    \begin{align*}
    V_{\mathcal{C}\oplus\mathcal{D}}&=\Bigl\{\boldsymbol{x}_1\boldsymbol{x}_2+\mathcal{C}\oplus\mathcal{D}:\boldsymbol{x}_1+\mathcal{C}\in V_{\mathcal{C}} \text{ and } \boldsymbol{x}_2+\mathcal{D}\in V_{\mathcal{D}} \Bigr\},\\
    E_{\mathcal{C}\oplus\mathcal{D}}&= \Bigl\{ 
    \{ \boldsymbol{x}_1\boldsymbol{x}_2+\mathcal{C}\oplus\mathcal{D}, \boldsymbol{y}_1\boldsymbol{x}_2+\mathcal{C}\oplus\mathcal{D}
    \}: \{\boldsymbol{x}_1+\mathcal{C},\boldsymbol{y}_1+\mathcal{C}\}\in E_{\mathcal{C}}\Bigr\} \\
    &\qquad\bigcup\Bigl\{ 
    \{ \boldsymbol{x}_1\boldsymbol{x}_2+\mathcal{C}\oplus\mathcal{D}, \boldsymbol{x}_1\boldsymbol{y}_2+\mathcal{C}\oplus\mathcal{D}
    \}: \{\boldsymbol{x}_2+\mathcal{D},\boldsymbol{y}_2+\mathcal{D}\}\in E_{\mathcal{D}} \Bigr\}.  
    \end{align*}
 In particular, $|V_{\mathcal{C}\oplus\mathcal{D}}|=|V_{\mathcal{C}}||V_{\mathcal{D}}|$ and $|E_{\mathcal{C}\oplus\mathcal{D}}|=|V_{\mathcal{C}}||E_{\mathcal{D}}|+|E_{\mathcal{C}}||V_{\mathcal{D}}|$.
 \end{theorem}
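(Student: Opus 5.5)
The plan is to define the map $\Phi:V_{\mathcal{C}}\times V_{\mathcal{D}}\to V_{\mathcal{C}\oplus\mathcal{D}}$ by $\Phi(\boldsymbol{x}_1+\mathcal{C},\boldsymbol{x}_2+\mathcal{D})=\boldsymbol{x}_1\boldsymbol{x}_2+\mathcal{C}\oplus\mathcal{D}$ and show that it is a graph isomorphism.

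\emph{Bijectivity of $\Phi$.} Well-definedness comes from the syndrome computation in the proof of Proposition \ref{teor:lideres_suma_directa}, namely $\Syn(\boldsymbol{x}_1\boldsymbol{x}_2)=(\Syn(\boldsymbol{x}_1),\Syn(\boldsymbol{x}_2))$. Injectivity is exactly the first assertion of that proposition. Surjectivity then follows from the vertex count in Corollary \ref{corcardinalsumadirecta}, or directly from the preceding proposition describing $V_{\mathcal{C}\oplus\mathcal{D}}$. This also yields $|V_{\mathcal{C}\oplus\mathcal{D}}|=|V_{\mathcal{C}}||V_{\mathcal{D}}|$.

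\emph{Coset leaders and weights.} First I will record that the coset leaders of $\boldsymbol{x}_1\boldsymbol{x}_2+\mathcal{C}\oplus\mathcal{D}$ are exactly the concatenations $\boldsymbol{u}_1\boldsymbol{u}_2$ with $\boldsymbol{u}_1$ a leader of $\boldsymbol{x}_1+\mathcal{C}$ and $\boldsymbol{u}_2$ a leader of $\boldsymbol{x}_2+\mathcal{D}$. Every element of the coset is $(\boldsymbol{x}_1+\boldsymbol{c})(\boldsymbol{x}_2+\boldsymbol{d})$, and its weight is $\wt(\boldsymbol{x}_1+\boldsymbol{c})+\wt(\boldsymbol{x}_2+\boldsymbol{d})$. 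This sum is minimal if and only if both summands are minimal. Hence $\wt(\boldsymbol{x}_1\boldsymbol{x}_2+\mathcal{C}\oplus\mathcal{D})=\wt(\boldsymbol{x}_1+\mathcal{C})+\wt(\boldsymbol{x}_2+\mathcal{D})$. Supports also split: $\supp(\boldsymbol{u}_1\boldsymbol{u}_2)\subseteq\supp(\boldsymbol{v}_1\boldsymbol{v}_2)$ if and only if $\supp(\boldsymbol{u}_1)\subseteq\supp(\boldsymbol{v}_1)$ and $\supp(\boldsymbol{u}_2)\subseteq\supp(\boldsymbol{v}_2)$. It follows that $\Phi$ is an order isomorphism between the product poset $\mathfrak{cl}(\mathcal{C})\times\mathfrak{cl}(\mathcal{D})$ (componentwise order) and $\mathfrak{cl}(\mathcal{C}\oplus\mathcal{D})$.

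\emph{Edges.} Since $\Gamma$ is the Hasse diagram, the edges are the pairs $A\prec B$ with $\wt(A)=\wt(B)-1$. Suppose $(A_1,A_2)\preceq(B_1,B_2)$ componentwise and the total weights differ by $1$. Each component satisfies $\wt(A_j)\le\wt(B_j)$, and equal weight together with $A_j\preceq B_j$ forces $A_j=B_j$, because leaders with nested supports and equal weight coincide. So exactly one component changes, and there it drops by one in weight with nested leaders; that is, it is an edge of $\Gamma(\mathcal{C})$ or of $\Gamma(\mathcal{D})$ with the other coordinate fixed. The converse is immediate. This gives the stated description of $E_{\mathcal{C}\oplus\mathcal{D}}$, which is the definition of the Cartesian product. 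The edge count follows: each edge of $\Gamma(\mathcal{C})$ appears once for each vertex of $\Gamma(\mathcal{D})$, and symmetrically.

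The main obstacle is the subtle point that $\preceq$ on cosets is defined via \emph{some} pair of leaders. The step I must check carefully is that $\boldsymbol{u}_1\boldsymbol{u}_2\preceq\boldsymbol{v}_1\boldsymbol{v}_2$ for some leaders of the sum cosets transfers to the components. This works because every leader of a sum coset splits as a concatenation of component leaders, as shown above, so the witnessing leaders restrict to witnessing leaders in each factor.
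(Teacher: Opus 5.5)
Your proof is correct and follows essentially the same route as the paper. You identify $V_{\mathcal{C}\oplus\mathcal{D}}$ with $V_{\mathcal{C}}\times V_{\mathcal{D}}$ via syndromes and concatenated leaders, and then use the fact that supports and weights split across the two blocks, so an edge can move only one coordinate. Your version is also tighter than the paper's case analysis in two places: you prove that \emph{every} leader of a sum coset is a concatenation of component leaders (the paper proves only that concatenations of leaders are leaders, yet implicitly relies on the reverse), and your weight count $(\wt(B_1)-\wt(A_1))+(\wt(B_2)-\wt(A_2))=1$ covers all configurations rather than only the three subcases the paper lists.
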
		

\begin{proof}
It only remains to analyze the cases in which there is an edge in the graph of the direct sum. We study the next cases: 
\begin{enumerate}
\item Suposse that $\boldsymbol{x}_1\nprec \boldsymbol{y}_1$ (or $\boldsymbol{x}_2\nprec \boldsymbol{y}_2$). This means that $\supp(\boldsymbol{x}_1)\not\subset\supp(\boldsymbol{y}_1)$; therefore, $\supp(\boldsymbol{x}_1\boldsymbol{x}_2)\not\subset\supp(\boldsymbol{y}_1\boldsymbol{y}_2)$, which implies that $\boldsymbol{x}_1\boldsymbol{x}_2\nprec\boldsymbol{y}_1\boldsymbol{y}_2$. Similarly, the same result is obtained if $\boldsymbol{x}_2\nprec \boldsymbol{y}_2$.

\item Assume that $\boldsymbol{x}_1\preceq \boldsymbol{y}_1$ and $\boldsymbol{x}_2\preceq \boldsymbol{y}_2$. Then, $\wt(\boldsymbol{x}_1)\leq \wt(\boldsymbol{y}_1)$ and $\wt(\boldsymbol{x}_2)\leq \wt(\boldsymbol{y}_2)$, thus
    $\wt(\boldsymbol{x}_1\boldsymbol{x}_2)\leq \wt(\boldsymbol{y}_1\boldsymbol{y}_2)$. We must consider the following subcases:
\begin{enumerate}

\item $\{\boldsymbol{x}_1+\mathcal{C},\boldsymbol{y}_1+\mathcal{C}\}\in E_{\mathcal{C}}$. Then, $\wt(\boldsymbol{x}_1)= \wt(\boldsymbol{y}_1)-1$ and $\wt(\boldsymbol{x}_1\boldsymbol{x}_2)=\wt(\boldsymbol{y}_1\boldsymbol{x}_2)-1$. Furthermore, since $\supp(\boldsymbol{x}_1)\subset \supp(\boldsymbol{y}_1)$, it follows that $\supp(\boldsymbol{x}_1\boldsymbol{x}_2)\subset \supp(\boldsymbol{y}_1\boldsymbol{x}_2)$. That is, $\boldsymbol{x}_1\boldsymbol{x}_2\prec\boldsymbol{y}_1\boldsymbol{x}_2$. Therefore, $\{\boldsymbol{x}_1\boldsymbol{x}_2+\mathcal{C}\oplus\mathcal{D},\boldsymbol{y}_1\boldsymbol{x}_2+\mathcal{C}\oplus\mathcal{D}\}\in E_{\mathcal{C}\oplus\mathcal{D}}$.

\item $\{\boldsymbol{x}_2+\mathcal{D},\boldsymbol{y}_2+\mathcal{D}\}\in E_{\mathcal{D}}$. Similarly to the previous subcase, we have
$\wt(\boldsymbol{x}_1\boldsymbol{x}_2)=\wt(\boldsymbol{x}_1\boldsymbol{y}_2)-1$ and $\supp(\boldsymbol{x}_1\boldsymbol{x}_2)\subset \supp(\boldsymbol{x}_1\boldsymbol{y}_2)$. Thus, $\{\boldsymbol{x}_1\boldsymbol{x}_2+\mathcal{C}\oplus\mathcal{D},\boldsymbol{x}_1\boldsymbol{y}_2+\mathcal{C}\oplus\mathcal{D}\}\in E_{\mathcal{C}\oplus\mathcal{D}}.$

\item $\{\boldsymbol{x}_1+\mathcal{C},\boldsymbol{y}_1+\mathcal{C}\}\in E_{\mathcal{C}}$ and $\{\boldsymbol{x}_2+\mathcal{D},\boldsymbol{y}_2+\mathcal{D}\}\in E_{\mathcal{D}}$. Then, $\wt(\boldsymbol{x}_1)=\wt(\boldsymbol{y}_1)-1$ and $\wt(\boldsymbol{x}_2)=\wt(\boldsymbol{y}_2)-1$. 
Hence, $\wt(\boldsymbol{x}_1\boldsymbol{x}_2)=\wt(\boldsymbol{y}_1\boldsymbol{y}_2)-2$. Therefore, $\boldsymbol{x}_1\boldsymbol{x}_2\not\prec\boldsymbol{y}_1\boldsymbol{y}_2$.\qedhere
\end{enumerate}
\end{enumerate}
\end{proof}

\begin{example}\quad
\label{ejemplosGrafoumadirecta}
Below, figures \ref{fig1:suma_directa}, \ref{fig3:suma_directa} and  \ref{fig2:suma_directa} show the graphs of the codes from Example \ref{ejemplosumadirecta}, along with the respective graph of the direct sum between the two given codes. We give the construction of the direct sum graph of the codes is carried out by applying Theorem \ref{teor:grafo_sumadirecta}, where $\mathcal{C}_i\mathcal{D}_j$ denotes the coset $\boldsymbol{x}_i\boldsymbol{y}_j+\mathcal{C}\oplus \mathcal{D}$. As a result of this process, the third graph shown in each case is obtained.

\begin{figure}[H]
    \begin{subfigure}[b]{0.2\textwidth}
    \centering
    \includegraphics[scale=0.8]{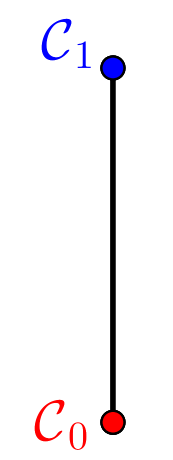} 
    \caption{Graph $\Gamma(\mathcal{C}_1)$.}
    \label{fig:grafoor16}
    \end{subfigure}
    \begin{subfigure}[b]{0.3\textwidth}
    \centering
    \includegraphics[scale=0.8]{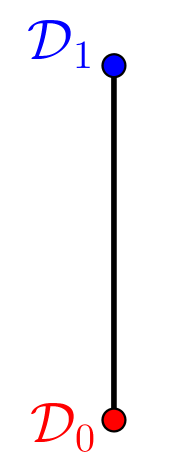} 
    \caption{Graph $\Gamma(\mathcal{C}_2)$.}
    \label{fig:grafoor17}
    \end{subfigure}
    \begin{subfigure}[b]{0.4\textwidth}
        \centering
        \includegraphics[scale=0.75]{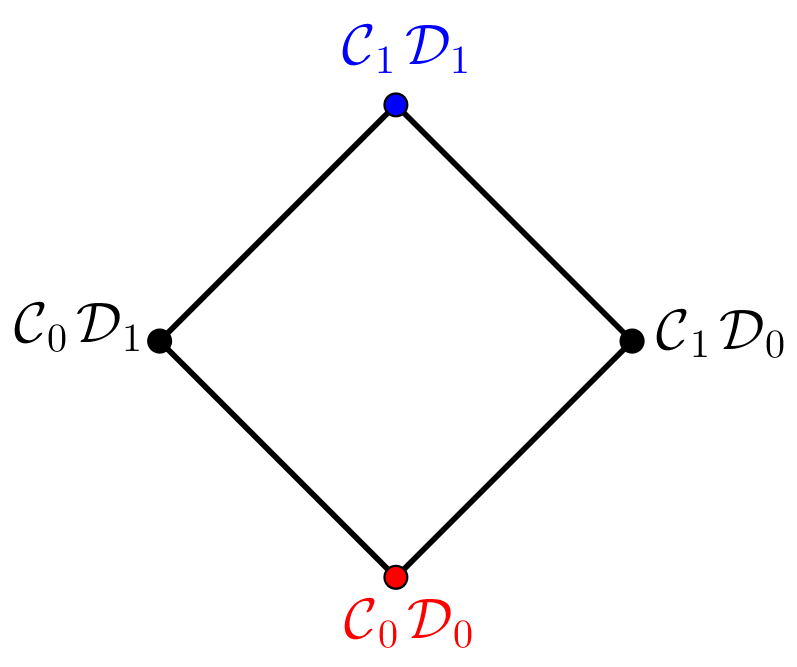}
        \caption{Graph $\Gamma(\mathcal{C}_1\oplus\mathcal{C}_2)$.}
        \label{fig:subgrafo5}
    \end{subfigure}
    \caption{Graphs of the codes from Example \ref{ejemplosumadirecta} \ref{sumadirecta1}.}
    \label{fig1:suma_directa}
    \end{figure}

\begin{figure}[H]
   \begin{subfigure}[b]{0.2\textwidth}
    \centering
    \includegraphics[scale=0.8]{ejemplosm1v2.png} 
    \caption{Graph $\Gamma(\mathcal{C})$.}
    \label{fig:grafoor18}
    \end{subfigure}
    \begin{subfigure}[b]{0.3\textwidth}
    \centering
    \includegraphics[scale=0.75]{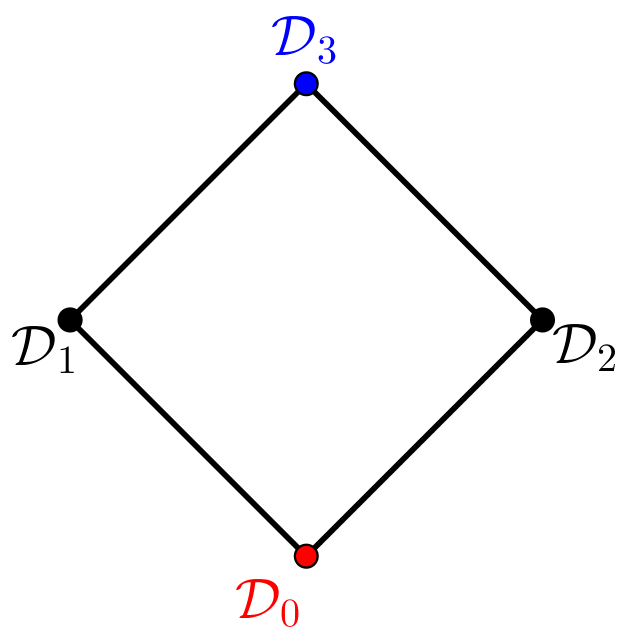} 
    \caption{Graph $\Gamma(\mathcal{D})$.}
    \label{fig:grafoor1}
    \end{subfigure}
    \begin{subfigure}[b]{0.4\textwidth}
        \centering
        \includegraphics[scale=0.7]{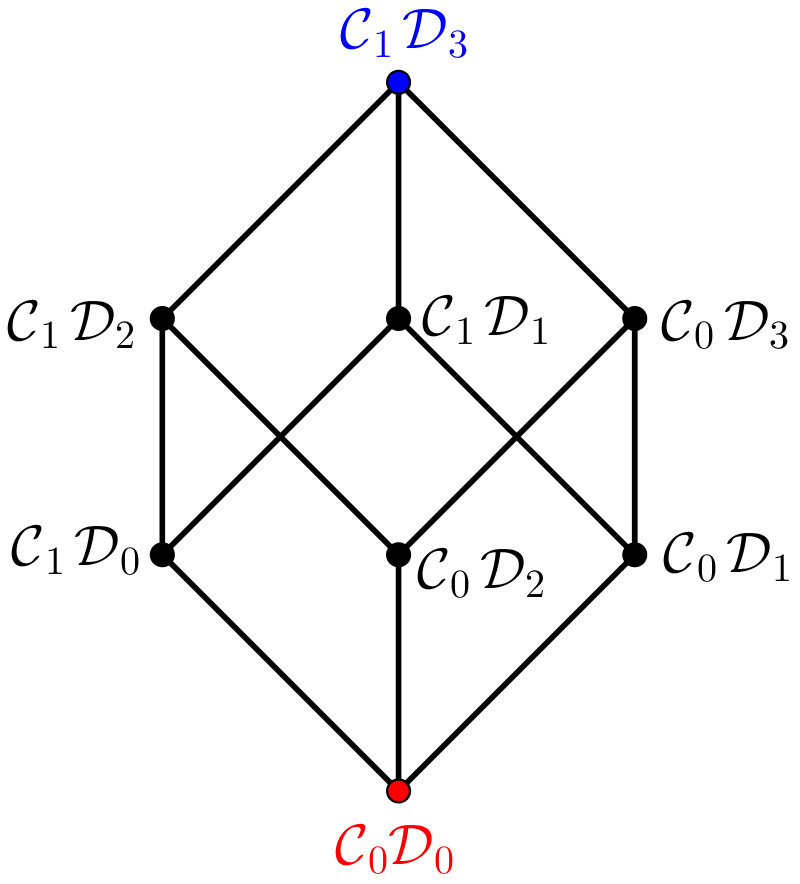}
        \caption{Graph $\Gamma(\mathcal{C}\oplus\mathcal{D})$.}
        \label{fig:subgrafo6}
    \end{subfigure}
    \caption{Graphs of the codes from Example \ref{ejemplosumadirecta} \ref{sumadirecta2}.}
    \label{fig3:suma_directa}
    \end{figure}

 \begin{figure}[H]
    	\begin{subfigure}[b]{0.2\textwidth}
    	\centering
    	\includegraphics[scale=0.7]{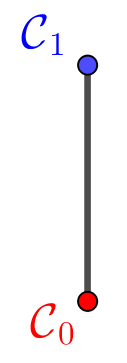} 
    	\caption{Graph $\Gamma(\mathcal{C})$.}
    	\label{fig:grafoor18}
    	\end{subfigure}
            \begin{subfigure}[b]{0.3\textwidth}
    	\centering
    	\includegraphics[scale=0.7]{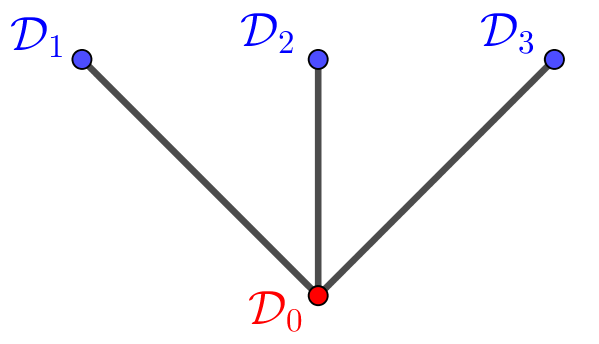} 
    	\caption{Graph  $\Gamma(\mathcal{D})$.}
    	\label{fig:grafoor1}
    	\end{subfigure}
    	\begin{subfigure}[b]{0.4\textwidth}
    		\centering
    		\includegraphics[scale=0.6]{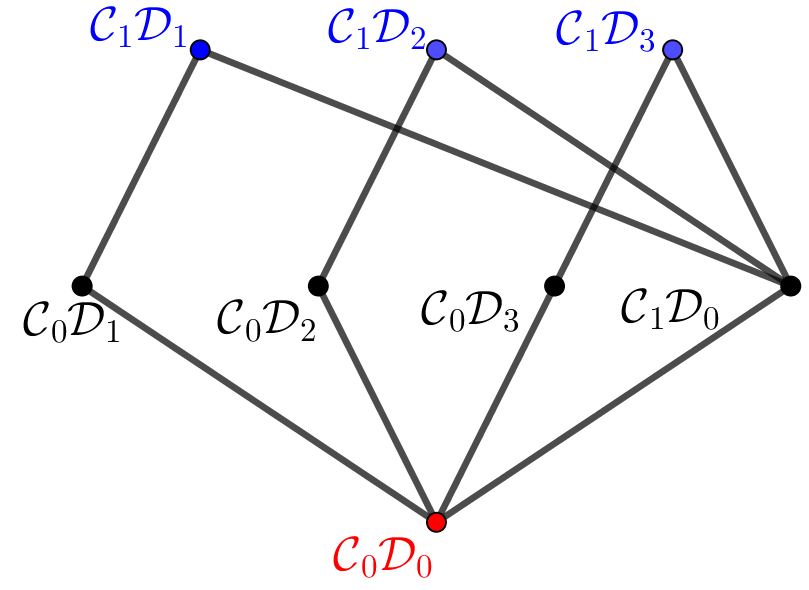}
    		\caption{Graph $\Gamma(\mathcal{C}\oplus\mathcal{D})$.}
    		\label{fig:subgrafo6}
    	\end{subfigure}
        \caption{Graph of the direct sum.}
        \label{fig2:suma_directa}
    \end{figure}

\end{example}

\bibliographystyle{plain}

\end{document}